\renewcommand{\p@subfigure}{\thefigure} 
\newtheorem{theorem}{Theorem}[section]
\newtheorem{definition}{Definition}[section]
\newtheorem{proposition}{Proposition}[section]
\newtheorem{lemma}{Lemma}[section]
\newtheorem{corollary}{Corollary}[section]
\newtheorem{example}{Example}[section]
\newtheorem{remark}{Remark}[section]
\theoremstyle{definition}
\def\Re{\mathbf{R}}
\def\L{\mathscr{L}}
\def\A{\mathscr{A}}
\def\X{\mathscr{X}}
\def\D{\mathscr{D}}
\def\E{\mathscr{E}}
\def\muA{\mu_{{\scriptscriptstyle \A}}}
\def\muX{\mu_{{\scriptscriptstyle  \X}}}
\def\rhoa{\rho}
\def\rhoX{\rho_{\scriptscriptstyle \X}}
\newcommand{\bit}{\begin{itemize}}
\newcommand{\eit}{\end{itemize}}
\newcommand{\co}{\text{co.}}
\title{Random Utility  with   Aggregation}
\author{%
\begin{tabular}{@{}p{0.32\textwidth}p{0.32\textwidth}p{0.32\textwidth}@{}}
\centering Yuexin Liao &
\centering Kota Saito &
\centering Alec Sandroni\footnote{%
Liao: \textsf{yliao@caltech.edu}. Saito: \textsf{saito@caltech.edu}. Sandroni: \textsf{alecsandroni01@gmail.com}.%
We thank participants at presentations at the Caltech Lunch Seminar, MIT Theory Lunch Seminar, Waseda University, Keio University, the University of Tokyo, and the University of Osaka. We also thank Isaiah Andrews, Drew Fudenberg, Whitney Newey, and Parag Pathak for insightful comments. Alec Sandroni acknowledges financial support from the Summer Undergraduate Research Fellowships (SURF) program at Caltech (2022--2024). Yuexin Liao acknowledges financial support from the SURF program at Caltech (2025).%
} \tabularnewline
\centering\small California Institute of Technology &
\centering\small California Institute of Technology &
\centering\small Massachusetts Institute of Technology
\end{tabular}
}
\begin{document}

\maketitle

\begin{abstract}
We study random utility (RU) rationality with aggregation when the underlying alternatives in each aggregate vary across consumers and are unobserved, as is typical for an outside option. RUM over the underlying alternatives is the natural assumption on the data generating process, while an aggregated random utility model (ARUM) is the standard empirical tool. We characterize RU rationality in three frameworks and show its testable implications are substantially weaker than those of an ARUM. We provide two independent conditions for their equivalence: non-overlapping preferences within aggregates and menu-independent aggregation. Simulations show that violating either condition produces meaningful estimation bias when imposing an ARUM.\end{abstract}

\section{Introduction}

In empirical economics, it is common practice to aggregate alternatives. For instance, in meat sales data (e.g. \cite{ealesmeat1982}), all beef products, despite differing in quality and price, are sometimes grouped under a single aggregate, “beef.” 
Another particularly important form of aggregation involves {\it the outside option}. A canonical treatment in the standard IO framework  models the outside option as a single alternative that aggregates all otherwise unlisted choices (see, e.g., the textbook treatment in \citet{shum2016econometric}). For instance, in \cite{nevo2001measuring}, which studies consumer demand for popular cereal brands, the market share of the outside good is defined as  the residual: the difference between one and the total market share of the inside goods, implicitly capturing all other breakfast options, such as pancakes or omelettes.\footnote{See Appendix A: Data, page 337 of \cite{nevo2001measuring}.}

The standard modeling approach assumes a random utility model  defined directly over the aggregates---what we call the {\it aggregated random utility model} (ARUM). That is, researchers assume that each aggregate corresponds to a single alternative, and that preferences are defined over these aggregates as if they were atomic.

However, this simplification obscures an important complication. While the analyst only observes choice frequencies over aggregates, consumers evaluate the underlying alternatives that compose each aggregate—alternatives that may differ substantially in desirability. In the meat example, ground beef and table cuts vary significantly in both features. Likewise, the outside option may encompass alternatives that differ greatly in quality and price. Thus, a random utility model (RUM) should in principle be defined over the underlying alternatives, {\it not} over analyst-defined aggregates. The difficulty is that the exact composition of an aggregate may be heterogeneous across consumers and unobservable to the analyst. For instance, some consumers may not consider omelettes a viable breakfast option due to egg shortages from bird flu, while others may rule out pancakes because essential ingredients are unavailable.

In this paper, we formalize a RUM that accommodates such heterogeneous and unknown composition of aggregates, and we study its relationship to the ARUM. This relationship is important because the two models play distinct roles. A RUM over the underlying alternatives is the natural assumption on the data generating process, since consumers evaluate and choose among the actual underlying goods available to them. An ARUM, by contrast, is an empirical tool that enables tractable estimation by treating each aggregate as a primitive object of choice. Whether this simplification is valid depends on whether the choice frequencies over aggregates, generated by a RUM on the underlying alternatives, are also consistent with an ARUM.

This paper is the first to characterize the testable implications of a RUM when aggregated categories have heterogeneous and unknown composition. Using three different frameworks, we show precisely how these implications are weaker than those of an ARUM. Through simulations, we demonstrate that this gap can lead to significant estimation biases when an ARUM is incorrectly imposed. Building on these findings, we provide theoretical guidance on how to define aggregated categories in ways that mitigate such biases.

\vspace{-0.2cm}

\subsection{Preview of theoretical results}

\vspace{-0.2cm}

To formalize our framework, consider a setup involving popular cereal brands along with an outside good that contains all other breakfast options. In this setup, the outside option is an aggregated alternative and all other alternatives are not aggregated. All definitions and results below are naturally generalized into the case of multiple aggregated alternatives. 

The composition of the outside good may differ across individuals in the population due to local availability. We formalize this using two key concepts. An {\it aggregation correspondence} $X$ specifies the possible underlying goods that an aggregate may represent--for example, the outside option may include omelettes, pancakes, or other breakfast items. A {\it composition distribution} $\lambda$ then assigns, for each menu, probabilities to these possible realizations of the outside option. Thus, when the outside option is present, $\lambda$ determines whether a consumer faces only omelettes, only pancakes, both, or some other combination of breakfast goods. Crucially, the analyst does not observe the aggregation correspondence or the composition distribution.

We say that the choice frequencies $\rhoa$  is consistent with a RUM if there exists (i) a composition distribution $\lambda$ with an aggregation correspondence $X$ and (ii) a probability distribution~$\muX$ over rankings (i.e., linear orders)~$\succ$ on the set $\X$ of the underlying alternatives (in this case, the popular cereal brands along with other breakfast options)\footnote{We write $\muX$ to clarify that the measure is defined on rankings over the underlying alternatives $\X$, not the aggregates.}, such that for all subsets $D$ of cereals and $x \in D$, the market share $\rho(D \cup \text{out}, x)$ of cereal $x$ in a market  where the set $D$ of cereals is available equals 
\begin{equation}\label{eq:examp}
\begin{aligned}
\sum_{S \subseteq X(out)}\lambda_{D \cup {\text{out}}}(S) \, \muX(\succ \mid x \succ y \text{ for all } y \in (D\setminus \{x\}) \cup S),
\end{aligned}
\end{equation}
where $\lambda_{D \cup {\text{out}}}(S)$ is the probability that the outside option ``out" is composed of the set $S \subseteq X(out)$ in the context of choice set $D \cup {\text{out}}$.  This formulation is a natural extension of the standard RUM: Equation~\eqref{eq:examp} shows that the observed choice probability is a weighted average of standard RUM probabilities, with weights $\lambda$ reflecting the distribution of outside-option compositions.

We theoretically study the implications of RU-rationality and compare them with those of ARUM along three dimensions. First, we characterize the implications of RU-rationality in terms of choice frequencies. Theorem \ref{thm:1} show that RU-rationality is equivalent to two conditions: (i) {\it Limited Monotonicity}—choice frequencies over non-aggregated alternatives weakly decrease when the aggregated alternatives are  added to the menu; and (ii) {\it Partial RU-rationality}—on menus containing only non-aggregated alternatives, choices frequencies satisfy standard RUM restrictions. These conditions are substantially weaker than ARUM. Most notably, Limited Monotonicity does not require that choice frequencies decrease monotonically on menus already containing aggregated alternatives. This can occur because adding a new alternative may signal information about the composition of the aggregate alternatives—for instance, the presence of a premium cereal brand may indicate a high-income market where the outside option contains more attractive alternatives.

Second, we characterize the admissible deterministic behaviors under RU-rationality and compare them with those under ARUM. The deterministic behaviors completely characterize the RU-rational (respectively ARU-rational) stochastic choice functions  in the sense that every stochastic choice function of either class  can be written as a convex combination of deterministic behaviors.  Under ARUM, admissible behaviors are precisely the rational choice functions induced by linear orders. Theorem~\ref{thm:vertex} shows that RU-rationality admits a strictly richer class: an agent may maximize a linear order on some menus while defaulting to an aggregate such as the outside option on other menus. This can be interpreted as menu-dependent consideration—when faced with unfamiliar or complex choice sets, consumers may revert to the default rather than evaluate all options. Formally, the set of RU-rational stochastic choice functions forms a polytope whose vertices are these ``menu-effect'' choice functions, and the ARUM polytope is a strict subpolytope with double-exponentially fewer vertices.

Third, we characterize RU-rationality when the aggregation correspondence is known to the analyst but the composition distribution remains unknown. Theorem \ref{thm:nests} establishes that as the number of underlying alternatives in each aggregated alternative grows, the set of RU-rationalizable choice functions expands and stabilizes once it exceeds a threshold. Beyond the threshold, RU-rationality is fully characterized by Limited Monotonicity and Partial RU-rationality. Theorem  \ref{thm:errorbound} shows that even below this threshold, the RU-rationalizable set may closely approximate the unconstrained case. These results suggest that in practice, Limited Monotonicity and Partial RU-rationality capture the empirically relevant implications of RUM; thus the implication of RU-rationality is significantly weaker than that of ARU-rationality.

Finally, given the gap between RU-rationality and ARU-rationality, we ask: under what conditions do they coincide? We identify two independent necessary and sufficient conditions. The first is {\it non-overlapping } preferences in the support of $\mu_{\X}$ (Proposition \ref{thm:non-overlapping}): if all underlying alternatives  that each aggregate represents occupy adjacent positions in every consumer's preferences, then RU-rationality implies ARU-rationality regardless of the composition distribution. The second is menu-independent composition of $\lambda$ (Proposition \ref{thm:independent}): if the composition distribution does not vary across menus, then RU-rationality implies ARU-rationality regardless of the preference structure. These results provide practical guidance for constructing aggregates--alternatives should be grouped  if they are close substitutes and if their availability does not vary systematically across markets.

\vspace{-0.2cm}

\subsection{Simulation results}

\vspace{-0.2cm}

Our results imply that when either condition—non-overlapping preferences or menu independence—fails, the observed choice frequencies diverge from those predicted by an ARUM. Estimating an ARUM (such as logit model over aggregates)  can therefore generate substantial bias. Assessing this bias empirically is crucial for understanding the practical relevance of our theory.

We examine this prediction through simulations.  We assume that the true choice dataset defined on the underlying alternatives $\X$ is a logit model with fixed utility levels.  We then fix values of a composition distribution $\lambda$ and an aggregation correspondence $X$ and construct the reduced dataset $\rho$ over aggregates. Finally, we re-estimate utilities from $\rho$ as if it followed a logit model over aggregates. When $\lambda$ is menu-independent or preferences are non-overlapping, $\rho$ is consistent with an ARUM and bias must be low, otherwise the bias should become large as $\rho$ cannot be represented by an ARUM.

As our measure of bias, we examined the difference in estimated utilities between two atomic aggregates, and we also computed the geometric distance from $\rho$ to the set of ARUMs. Both measures increase as $\lambda$ becomes more menu-dependent or preferences more overlapping. In some cases, the bias was substantial enough to reverse the ranking of alternatives—for example, even when $u(x) > u(y)$ in the true model, the estimates produced $\hat{u}(y) > \hat{u}(x)$. These findings align with our theoretical predictions and underscore the empirical importance of potential biases. Further details are provided in Section~\ref{sec:simulation}.

\vspace{-0.2cm}

\subsection{Related Literature}

\vspace{-0.2cm}

There is no paper in the stochastic-choice literature that directly studies aggregation when the composition of aggregates is unknown to the analyst and may vary across individuals and menus. There are, however, several related papers that share aspects of our motivation or mathematical setup.

A first related line of research is the recent work on random attention (or limited consideration) models, such as \citealp{masatlioglu2020ram,aguiar2022attention,rehbeck2016attention,manzinimariotti2014ecma,horan2019default}. See \citealp{tomasztextbook} for a survey. These papers study environments in which agents do not evaluate the full menu but instead randomly consider a subset. In particular, \citet{horan2019default} analyzes a setting in which the choice frequency of the outside option is not observed.

The resemblance, however, is limited. Random-attention models are driven by unobserved consideration sets, whereas our setting is driven by aggregation-induced unobservability: the analyst observes only aggregates, whose internal composition can differ across consumers and menus. This difference has sharp implications. In our framework, when an aggregate is present, at least one of its underlying components is available; random-attention models impose no such requirement. Moreover, in the random attention model, it is  assumed that the analyst observes an agent's choices over underlying alternatives; and they  violate RU rationality because some options are not considered. In contrast, we maintain random utility at the underlying level;  violations arise only at the aggregated level.

\cite{rehbeck2022marginal} studies a mathematically related model but with a different focus. They characterize marginal stochastic choice—frequencies not conditioned on particular menus—assuming the analyst observes the distribution of menus but not within-menu choices. By contrast, in our framework the distribution over available alternatives is unobserved and captured by $\lambda$, while we observe reduced stochastic choice over aggregates. Although \cite{rehbeck2022marginal} also allows for endogenous menu availability, their analysis centers on models such as \cite{gul2001temptation} and \cite{kreps1979flexibility}, and addresses rationalizability rather than estimation bias. \cite{apesteguia2016stochastic} also study aggregation of stochastic choice data, but with a different motivation. They examine whether aggregated choice behavior, obtained by combining individual choices, remains in the same class as the underlying individual behaviors.

 \cite{kono2025unobservable} also studies the RUM with an outside option, but in a different setup. The paper focuses on the outside option case. Moreover, from the perspective of the observed dataset, the key distinction between the two papers is that the present study takes the reduced dataset~$\rho$ as the observed object, whereas  \cite{kono2025unobservable} treats the full dataset over the underlying alternatives (with missing information) as given. In addition,  \cite{kono2025unobservable} assumes the outside option always corresponds to the full set of underlying alternatives—that is, $\lambda$ is menu-independent and degenerate. Even in this idealized case, much of the implications of the RUM are lost. In the present paper, we extend that framework by allowing $\lambda$ to be non-degenerate or menu-dependent, and we characterize the observable implications of the RUM under this more general specification.

A few papers in the empirical IO literature such as \cite{brownstone2017broad}, \cite{stafford2018fishing}, and \cite{huangoutside} have examined potential biases arising from aggregation and misspecification of the outside option. More recently, \cite{Zhang2023identification} analyzed biases resulting from misspecifying market size, and hence the market share of the outside good. However, all of these papers are purely empirical in nature and do not address the issue of unknown composition or outside options theoretically.

Although less directly related, classical statistical work has developed a comprehensive framework for making inferences from data that are partially observed or categorized. For example, see \cite{heitjan1989} for a review of grouped data; \cite{blumenthal1968} and \cite{nordheim1984} for categorized data; and \cite{heitjan1990, heitjan1991} for coarse data. In this framework, subsets of the sample space are observed rather than exact realizations.\footnote{For example, if the sample space is the real line (e.g., blood pressure), we may only observe a random variable taking interval values (e.g., an interval of blood pressures containing the true value).} A key distinction from our setting is that, in our model, uncertainty arises from the actual underlying alternatives that an agent faces, rather than from the data generating process. We model decision making of consumers under the random utility model, which does not appear in this statistical literature.




\vspace{-0.2cm}

\section{Setup}

\vspace{-0.2cm}

We introduce two notations that we use throughout the paper: For any finite set $Z$, let $\L(Z)$ be the set of rankings (i.e., linear orders) on $Z$ and let $\Delta(Z)$ be the set of probability distributions over $Z$. 

Let ${\A}=\{a_0, .... , a_{N}\}$ be the finite set of {\it aggregates}. There are two types of aggregates; one is {\it atomic} aggregate, which has only one underlying alternative. The other one is a {\it non-atomic} aggregate such as the outside option, which has multiple different underlying alternatives.  (See Example \ref{ex:nevo} below.)
When there is no danger of confusion, we simply say aggregates rather than non-atomic aggregates. We denote the set of atomic aggregates by $\A_A$; denote the set of non-atomic aggregates by $\A_N$. Thus,  $\{\A_A, \A_N\}$ forms a partition of $\A$.\footnote{That is, $\A_A \cup \A_N = \A$ and $\A_A \cap \A_N = \emptyset$.}

Let $\X$ be the set of underlying alternatives. We assume that the underlying set $\X$ is finite.\footnote{We will make an assumption on the richness of $\X$ later for some results.} In this paper, the partition $\{\A_A, \A_N\}$ and $\X$ are exogenously given.

We now define an {\it aggregation correspondence}, which connects $\A$ and $\X$.

\begin{definition}
A function $X: \A \to 2^\X \setminus \{\emptyset\}$ is called an {\it aggregation} correspondence  if (i) $X(a_i) \cap X(a_j)= \emptyset$ for any distinct $i, j$; (ii) $X(a)$ is singleton if $a \in \A_A$; (iii) $X(a)$ is non-singleton if $a \in \A_N$. 
\end{definition}

We call $X$ an aggregation correspondence because the correspondence (or more precisely, its inverse) specifies how each underlying alternative is assigned to an aggregate. Condition (i) requires that every underlying alternative be assigned to exactly one aggregate. Conditions (ii) and (iii) justify the interpretation that $\A_A$ represents the set of {\it atomic aggregates}, while $\A_N$ represents the set of {\it non-atomic aggregates}.




In many parts of the paper, we consider the setup in which there is only one non-atomic aggregate $a_0$ called {\it outside option}.  In the setup, we assume $\A_N=\{a_0\}$. 

\begin{example}(\textbf{Cereal Choice}) \label{ex:nevo}
Consider a similar setup to \cite{nevo2001measuring}, who estimates demand over the popular cereal brands. As mentioned in the introduction, a key feature of the analysis is the inclusion of an outside good, representing all breakfast options not explicitly listed among the popular cereal brands in the dataset. In the setup, all available alternatives are 
\[
{\mathscr A}= \{\text{Kellogg Corn Flakes, Kellogg Crispix, $\cdots$, Quaker Life, Outside goods}
\}.
\]
Since  the cereal brands are finely specified in $\A$, all cereals are non-aggregated; thus we have $\A_A= \{\text{Kellogg Corn Flakes,  $\cdots$, Quaker Life}\}$; and $\A_N= \{\text{Outside goods}\}$. For example, $
X(\text{Kellogg Corn Flakes})= \{\text{Kellogg Corn Flakes}\}$.  Under the assumption that the outside goods contain only two alternatives, we have  $X(\text{Outside goods})= \{\text{Omelette, Pancakes}\}$.\footnote{In our example, we assume omelettes are unavailable when eggs are in short supply. An alternative setup is that eggs remain available but at higher cost, so omelettes are still feasible though less attractive. We can represent this case as “Omelette with high price,” reflecting the increased cost while keeping it as an available alternative.}
\end{example}

\vspace{-0.2cm}

\subsection{Datasets and RU-rationalities}

\vspace{-0.2cm}

Let $\D \subseteq 2^\A \setminus \{\emptyset\}$ be the set of choice sets. The dataset in this paper is a stochastic choice function $\rhoa$ on $\D$. That is, the dataset is a function $\rho: \D \times \A \rightarrow \mathbf{R}_+$ such that for any $ A \in \D$, $\sum_{a \in A}\rhoa(A,a) =1$,  $\rhoa(A,a) \ge 0$ for all $a \in A$ and $\rho(A,a) = 0$ for all $a \notin A$.  Note that the stochastic choice function is defined over menus of aggregates in ${\mathscr A}$, not over  the underlying alternatives $\X$. Thus, we sometimes call $\rhoa$ {\it reduced} dataset. We define a distance function between two  stochastic choice functions $\rho$ and $\rho'$ by $\|\rho-\rho'\|=\sqrt{\sum_{x \in D \in \D} [\rho(D,x) - \rho'(D,x)]^2}$ by using the standard Euclidean norm.

\begin{remark}
In most parts of this paper, we assume that $\rhoa$ is defined on all subsets of $\A$ (i.e., $ \D=2^{\A} \setminus \{\emptyset\}$) . This domain assumptions are technical, as they only serve to facilitate our expositions.  For each main reuslt, the assumption can be weakened. For details, see Remark \ref{rem:domain_assumption} for Theorem \ref{thm:1};   Remark \ref{rem:dom_vertex} for Theorem \ref{thm:vertex}; and Remark \ref{rem:dom_nests} for Theorem \ref{thm:nests} in the online appendix.
\end{remark}

As mentioned in the introduction, in discrete choice analysis, the standard method in the presence of aggregation is to assume {\it aggregated random utility models} (ARUMs), which is a probability distribution over rankings over $\A$.

\begin{definition}
A stochastic choice function $\rhoa$ is aggregated random utility (ARU)-rational if there exists $\muA \in \Delta(\L(\A))$ such that  for any $a \in A \in \D$,\footnote{We write $\muA$ to clarify that the measure is defined over rankings over $\A$, not on the set $\X$ of underlying alternatives.} 
\begin{align}\label{eq:agg_rationalization}
\rhoa(A,a) =\muA(\ \succ\in \L(\A) \  | a \succ b \text{ for all } b \in A \setminus a).
\end{align}
\end{definition}

As discussed in the introduction, consumers evaluate and choose among the actual goods available to them, not among analyst-defined aggregates. Thus, RUM over the underlying alternatives $\X$ is the natural assumption on the data generating process. The difficulty is that the components of a non-atomic aggregate may differ substantially—for example, across beef products or within a heterogeneous outside option—and that the exact composition may vary across consumers and be unobservable to the analyst. To capture this, we introduce a \emph{composition distribution} which links the aggregates to the distribution over underlying alternatives in their composition. Given an aggregation correspondence, this mapping $\lambda$ specifies the distribution of available underlying alternatives faced by agents in the population.

\begin{definition}Given an aggregation correspondence $X$, a set of functions $(\lambda_A)_{A \in \D}$ is called a {\it composition distribution} if for any $A \in \D$,
\begin{itemize}
    \item $\lambda_{A}((S_a)_{a \in A})\ge 0$ for any $(S_a)_{a \in A} \in \prod_{a \in A}(2^{X(a)}\setminus \{\emptyset\})$,
    \item $\sum_{(S_a)_{a \in A}\in \prod_{a \in A} (2^{X(a) }\setminus \{\emptyset\})}\lambda_{A}((S_a)_{a \in A})=1$.
\end{itemize} 
When there is no danger of confusion, we simply write $\lambda$ rather than $(\lambda_A)_{A \in \D}$.
\end{definition}

The value of the composition distribution $\lambda_{A}((S_a)_{a \in A})$ represents the frequency with which each aggregate $a$ is composed of the subset $S_a \subseteq X(a)$ within the choice set $A \subseteq \A$. We allow $\lambda$ to depend on the choice set $A$, since the availability of underlying alternatives may vary across markets---for example, it may depend on which stores consumers can access. In Example \ref{ex:nevo}, eggs may be unavailable in one market (e.g., due to a bird flu outbreak), making omelettes infeasible, while pancake ingredients remain widely available. In this case, $\lambda_{A}$ would assign higher probability to $S_{\text{out}}$ containing pancakes than to sets including omelettes.

We now define RU-rationality in the presence of aggregates whose compositions are unknown to the analyst. To motivate the definition, return to the cereal example and consider a distribution $\muX$ over preferences among the underlying alternatives (the cereals and other breakfast options). Let $D$ be the set of cereals available in a store and let $x \in D$ denote one such cereal. For a subpopulation of consumers whose outside option consists only of omelettes and pancakes, the market share of $x$ is $\muX(\succ \mid x \succ y \text{ for all } y \in D \cup \{\text{omelette}, \text{pancakes}\})$. Analogous expressions arise for other possible compositions of the outside option. Aggregating across these compositions using $\lambda_A$ yields equation~\eqref{eq:examp} in the introduction. The following extends this construction to the case with multiple non-atomic aggregates:

\begin{definition}\label{def:rationalization}
A stochastic choice function $\rhoa$ is RU-rational if there exist $\muX \in \Delta(\L (\X))$, an aggregation correspondence $X$, and a composition distribution $\lambda$ such that for any $a \in A \in \D$, 
\begin{align}\label{eq:rationalization}
\rhoa(A,a) = \sum_{(S_b)_{b \in A}\in \underset{b\in A}{\prod} (2^{X(b)}\setminus \{\emptyset\})} \lambda_{A}\big((S_b)_{b \in A}\big)\muX\big(\succ\in \L (\X) | \exists x \in S_a, \ \forall y \in \underset{b \in A \setminus a}{\cup} S_b, \ x \succ y\big).
\end{align}

In this case, we say that the pair $(\muX, \lambda)$ \emph{rationalizes} $\rho$.\footnote{Strictly speaking, we should write $(\muX, X, \lambda)$ rationalizes $\rho$. However, since the role of $X$ is implicit in $\lambda$, we will simply write $(\muX, \lambda)$.}  
\end{definition}

Note that the term $\muX(\cdots)$ on the right-hand side represents the total market share of aggregate~$a$ from menu~$A$ within the subpopulation of consumers for whom each aggregate~$b$ corresponds to the subset $S_b \subseteq X(b)$ of underlying alternatives. The term $\lambda_A((S_b)_{b \in A})$ denotes the proportion of such consumers in the overall population. Thus, the right-hand side of equation~\eqref{eq:rationalization} is a natural extension of the standard RUM: the observed choice probability is a weighted average of standard RUM probabilities, with weights given by $\lambda$ reflecting the distribution of compositions across consumers.

\begin{remark}
Definition~\ref{def:rationalization} assumes independence between the distribution~$\muX$ over preference rankings and the composition distribution~$\lambda$ over underlying alternatives. While one could consider a more general framework allowing dependence between these components, such a model may lack testable implications, since the joint distribution could vary with the menu as $\lambda$ does. In particular, if the agent's preferences may depend on composition, such dependence would allow the distribution over rankings to vary with the menu, rendering the model unfalsifiable. If, however, we only allow the composition distribution $\lambda$ to depend on the preferences, the characterization remains the same. See Remark \ref{rem:generalru} in the online appendix for more details.
\end{remark}

\begin{remark}\label{rem:aggregated_RU_RU} ARU-rationality implies RU-rationality.  To see why suppose $\rhoa$ is rationalized by $\muA$ over aggregates $\A$. Choose an aggregation correspondence $X$ so that $X(a)$ is a singleton for all $a \in \A$. Label the sole element of $X(a)$ by $x_a$. Then for each $\succ$ in the support of $\muA$ define $\succ'$ by $x_a \succ' x_b$ whenever $a \succ b$ and define all other comparisons arbitrary. By setting $\muX(\succ') = \muA (\succ)$ for all $\succ$ in the support of $\muA$, it is apparent that $(\muX,\lambda)$ rationalizes $\rhoa$.
\end{remark}

We define one more concept of rationality.

\begin{definition} A stochastic choice function $\rhoa$ is {\it partially RU-rational} if there exists $\mu \in \Delta(\L(\A_A))$ such that for any $a \in A  \subseteq \A_A$ such that $A \in \D$, 
\begin{align}\label{eq:partial_ru}
\rhoa(A,a)= \mu(\succ\in \L(\A_A)| a \succ b \text{ for all }b \in A \setminus a).
\end{align}
\end{definition}

\begin{remark} 
RU-rationality implies partial RU-rationality. When the menu contains only atomic aggregates, each aggregate maps to a single underlying alternative, so the RU-representation collapses to a standard RUM on the set of atomic aggregates, which corresponds to  partial RU rationality; $\lambda$ terms disappear because there is no compositional uncertainty on such menus.
\end{remark}

\begin{remark}Summarizing the relationship between the three rationality concepts, we have
\[
\text{ ARU }(\ref{eq:agg_rationalization}) \implies \text{ RU } (\ref{eq:rationalization}) \implies
\text{Partial RU } (\ref{eq:partial_ru}).
\]
As discussed in the introduction, RUM over the underlying alternatives is the natural assumption on the data generating process, while ARUM is the standard empirical tool. The central question is therefore whether choice frequencies generated by a RUM are also consistent with an ARUM. 
\end{remark}

\vspace{-0.2cm}

\section{Characterization}\label{sec:characterization}

\vspace{-0.2cm}

In this section, we characterize RU-rationality in three different frameworks and show precisely how its testable implications are weaker than those of ARU-rationality. 
Throughout this section except Section \ref{sec:multiple_cat}, we consider {\it the outside option setup} in which there exists only one non-atomic aggregate $a_0$. Formally, we set $\A_N = \{a_0\}$ and $\A_A = \A \setminus \{a_0\}$. We call $a_0$ \textit{the outside option}. This case is useful for two reasons. First, it reflects the common modeling practice of including an outside option---an aggregate that captures unobserved or unlisted choices---frequently used in empirical applications.  Second, this simplified setting serves as a foundation for analyzing the more general case in Theorem \ref{thm:1multiple} with multiple non-atomic aggregates in Section \ref{sec:multiple_cat}.

We will make an assumption for the following theorems (Theorem \ref{thm:1} and Theorem \ref{thm:vertex}) and its generalization (Theorem \ref{thm:1multiple}) for multiple non-atomic aggregates.  

\smallskip

\noindent\textbf{Assumption (Richness): } 
 {\it The set $\X$ contains at least $|\A|^2$ distinct elements, so that each aggregate  $a\in\A$ may be associated with an underlying set $X(a)$ containing at least $|\A|$ elements.\footnote{ This assumption is required for Theorems \ref{thm:1} and \ref{thm:1multiple}. Since the if direction of Theorem \ref{thm:vertex} relies on Theorem \ref{thm:1}, it is also required there. For the outside option setup where $\A_N = \{a_0\}$, the bound can be weakened to $\X$ contains at least $2|\A_A|+1$ elements; see Theorem \ref{thm:nests}. The stronger $|\A|^2$ assumption is only needed for multiple non-atomic aggregates. }}

\smallskip

This richness assumption is natural in applications where one aggregate is the outside option: the outside option typically aggregates a large and heterogeneous collection of underlying alternatives. We study the case in which the aggregation correspondence is exogenously fixed and $\X$ is not sufficiently rich in Section~\ref{sec:howmanyx}.

\begin{theorem}\label{thm:1} Assume Richness of $\X$.
        Consider the outside option setup (i.e., $\A_N=\{a_0\}$).  A stochastic choice function $\rho$ is RU-rational if and only if
$\rho$ satisfies the following two conditions:
    \begin{itemize}
\item[(i)] (Limited Monotonicity): if $b \in D \subseteq \A_A$ then $\rhoa(D, b) \ge \rhoa(D \cup \{a_0\},b)$,
\item[(ii)] (Partial RU-rationality) $\rhoa$ is partially RU rational.
    \end{itemize}
\end{theorem}

The necessity of (i) and (ii) is immediate. The sufficiency part of the proof is constructive and nontrivial: given a stochastic choice function $\rho$ that satisfies the two conditions (i) and (ii), we need to construct an aggregation correspondence $X$, a composition distribution $\lambda$  and a distribution $\mu_{\X}$ over linear orders on $\X$ that satisfies equation (\ref{eq:rationalization}).   The proof is in the appendix.

The theorem makes precise the weakness of RU-rationality relative to ARU-rationality: RU-rationality is substantially weaker. To see this, note that Limited Monotonicity requires only that adding the outside option does not increase the choice frequencies of {\it atomic aggregates}.  In particular, it does {\it not} require the monotonicity condition with respect to menus containing {\it non-atomic aggregates}. To see this, consider a choice set $D$ such that $a_0 \in D$ and $b \notin D$. The theorem shows that under RU-rationality, it is possible to observe
\[
\rhoa(D,a_0)<\rhoa(D\cup\{b\},a_0),
\]
which violates the standard monotonicity condition. This may arise because, in the expanded set $D \cup {b}$, the outside option $a_0$ may become more desirable than it was in $D$. 
 For example, in the context of cereal choice, the presence of a fancy, expensive cereal brand may signal that the market corresponds to a high-income area leading to a change in the interpretation or attractiveness of the outside good: the outside goods may contain more appealing options  such as smoked salmon or quiche.  Another example would be the newly added alternative $b$ is only available locally in a single area. Thus, $D\cup b$ corresponds to a special area which may contain a different composition of outside options. 
 By contrast, the meanings of atomic aggregates remain the same when $a_0$ is added to a choice set; thus the limited monotonicity condition holds.

ARU-rationality, on the other hand, imposes much stronger restrictions including full monotonicity and beyond. When $\rhoa$ is defined on all subsets of $\A$ (i.e., $\D= 2^{\A}\setminus \{\emptyset\}$), ARU-rationality is equivalent to the non-negativity of all Block-Marschak (BM) polynomials. Among these, monotonicity is one of the simplest and most intuitive implications. However, as Theorem~\ref{thm:1} shows, RU-rationality only implies monotonicity on limited choice sets. This result highlights the substantial gap between RU-rationality and ARU-rationality. 

A few more technical remarks are in order:

\begin{remark}
In our discussion, we focus on Limited Monotonicity rather than on partial RU-rationality. The reason is straightforward: RU-rationality on atomic alternatives is well studied.\footnote{For example, under the full domain assumption (i.e., $\D = 2^{\A}\setminus \{\emptyset\}$), it can be characterized by the non-negativity of Block–Marschak (BM) polynomials defined over sets composed solely of atomic aggregates.}
\end{remark}

\begin{remark}\label{rem:generalru}
    In our notion of RU rationality, we assume that the composition of aggregates and the agent's preferences are independent. A natural extension is to allow the composition distribution to depend on the agent's preferences. In Remark \ref{rem:generalru} of the online appendix, we show such a representation (\ref{eq:rationalization}) is also characterized by  Limited Monotonicity and Partial-RU rationality. 
\end{remark}

\begin{remark}\label{rem:nested}
Readers may wonder about the relationship between RU-rationality and the nested logit model. First, nested logit does not address the fundamental problem that the economic meaning of the outside option may vary across choice sets. Second, even when the composition $X(a_0)$ is known, nested logit is most effective precisely in environments where the underlying alternatives within the outside option share highly correlated utility shocks---as in the canonical red-bus/blue-bus example (see Section 4.2.2 of \citealp{train2003discrete}). In Section~\ref{sec:mu}, we show that this case corresponds exactly to a condition called {\it non-overlappingness of  preferences} under which RU-rationality implies ARU-rationality. In such settings, ARUM may entail little additional bias, so nested logit offers no real advantage.
\end{remark}

\vspace{-0.2cm}

\subsection{Vertex representation}\label{sec:vertex}

\vspace{-0.2cm}

In this section, we provide a vertex characterization of admissible behavior under
RU-rationality and show that many such behaviors are not ARU-rational. This analysis
does not require any assumption on the choice domain $\D$. To streamline the exposition,
however, we impose that every observable menu contains the outside option $a_0$
(i.e., $a_0\in D$ for all $D\in\D$), consistent with the common view that a default (status
quo) alternative is always available.\footnote{Even without this
assumption, the main message of this section is unchanged; see Remark~\ref{rem:dom_vertex} in the online appendix for details.}

Recall that an ARU-rational stochastic choice function can be written as a convex
combination of deterministic choice functions induced by linear orders over aggregates
(see Remark~\ref{rem:ARUM_polytope} below). This means that ARU-admissible deterministic
behaviors are precisely the rational choice functions generated by linear orders on the
set of aggregates.

To characterize admissible consumer behaviors under RU-rationality, we first introduce
the following notation: for each $\succ \in \L(\A)$ and $\E \subseteq \D$, define a function denoted by $c^{\succ}_{\E}$ over $\D$ as follows: for all $D \in \D$,
\[
c^{\succ}_{\E}(D)= \begin{cases}
\max_D \succ \ &\text{ if } D \in \E, \\
a_0\ &\text { if } D \not \in \E.
\end{cases}
\]
The function $c^{\succ}_{\E}$ is a choice function that maximizes $\succ$ when $D \in \E$ and otherwise chooses $a_0$. Now define a degenerate stochastic choice function $\rho^{\succ}_{\E}$ that corresponds to $c^{\succ}_{\E}$. That is, for all $D \in \D$ and $a \in D$,
\begin{equation}\label{rho_succ_e}
\rho^{\succ}_{\E}(D,a)=1(c^{\succ}_{\E}(D)=a).
\end{equation}

\begin{remark}\label{rem:ru_rational} $\rho^\succ_\E$ is RU-rational for all $\succ \in \L(\A)$ and $\E \subseteq \D$.\footnote{See  Section \ref{pf:ru_vertex} in the online appendix for the proof.}
 \end{remark}

We can interpret $\rho^{\succ}_{\E}$ as follows. The set $\E$ represents the collection of choice sets in which the agent makes decisions according to the linear order $\succ$. Outside this collection (i.e., on $\E^c$), the agent always selects the outside option $a_0$. This behavior is natural: $\E$ can be viewed as the domain on which the consumer behaves rationally according to $\succ$, while outside $\E$, {\it menu effects} lead the agent to default to $a_0$. For example, in marketing experiments consumers may follow a stable ranking when choosing among familiar products, but when faced with an unfamiliar or unusually large assortment, they often revert to the default option $a_0$.



\begin{theorem} 
\label{thm:vertex}  Assume Richness of $\X$.
Consider the outside option setup (i.e., $\A_N=\{a_0\}$).  A stochastic choice function $\rho$ is RU-rational if and only if   
\[
\rho \in \co\{\rho^{\succ}_{\E} \mid \succ \in \L(\A) \text{ and } \E \subseteq  \D \}. 
\]
The set is called the {\it RU polytope}. Moreover, each $\rho^{\succ}_{\E}$ is a vertex of the RU polytope. 
\end{theorem}

The proof of the theorem is in Section \ref{sec:proof_vertex} of the appendix. Mathematically, the theorem provides a vertex characterization of RU-rational stochastic choice functions, whereas Theorem~\ref{thm:1} offers a hyperplane characterization.\footnote{Partial-RU rationality can be characterized by linear inequalities--the non-negativity of BM polynomials when $\D=2^{\A}\setminus \{\emptyset\}$. } The vertex characterization shows that RU-rationality permits a type of behavior excluded under ARU-rationality: the agent may sometimes choose the outside option (the default option) $a_0$, possibly due to menu effects, while in other cases adhering to a rational ranking.

To explicitly compare the RU polytope and the set of stochastic choice functions that are ARU-rational, we first define the latter set as follows denoted by $\mathcal{P}_{\A}$:


\begin{definition}\label{def:ARUM_polytope}
$\mathcal{P}_{\A} = \{ \rhoa : \rhoa= \sum_{\succ \in \L(\A)} \muA(\succ) \rho^{\succ}_{\D} \text{ for some } \muA
\in \Delta(\mathcal{L}(\A)) \}$, where $\Delta(\mathcal{L}(\A))$ is the set of probability distributions over $\mathcal{L}(\A)$ and $\rho^{\succ}_{\D}$ is defined by (\ref{rho_succ_e}). 
\end{definition}

\begin{remark}\label{rem:ARUM_polytope}
The set $\mathcal{P}_{\A}$ is a polytope with vertices $\rhoa^{\succ}_{\D}$, or $
\mathcal{P}_{\A}= \co\{\rho^{\succ}_{\D}| \succ \in \L (\A)\}$. We refer to this set as the {\it ARU polytope}.
\end{remark}

\begin{corollary}\label{coro:vertex}
The RU polytope contains the ARU polytope ${\cal P}_{\A}$ as a subpolytope.\footnote{A subpolytope is a polytope obtained by taking a subset of the vertices of the original polytope and forming their convex hull. 
}
\end{corollary}

Theorem~\ref{thm:vertex} and Corollary~\ref{coro:vertex} together show that the ARU polytope is strictly smaller than the RU polytope: it is a subpolytope whose vertices form only a small subset of the much larger collection of RU vertices. Specifically, the vertices of the RU polytope are given by $\rho^{\succ}_{\E}$ for all $\succ \in \L(\A)$ and $\E \subseteq \D$, while the vertices of the ARU polytope are given by $\rho^{\succ}_{\D}$ for all $\succ \in \L(\A)
$ (i.e., $\E=\D$). Thus, ARU vertices vary only with the ranking $\succ$, whereas RU vertices vary with both $\succ$ and the choice of $\E$. See Figure \ref{fig:ARU_polytope_approx} in the next section for an illustration. The next remark quantifies this gap:

\begin{remark}\label{rem:vertex}
The number of vertices (i.e., the number of $\rho^{\succ}_{\E}$) of the RU polytope is double-exponentially larger than the number of vertices of the ARU polytope (i.e., the number of $\rho^{\succ}_{2^{\A}}$), provided that $\D$ is sufficiently rich. For example, if $\D$ contains all sets that include $a_0$, then the ratio of the number of vertices in the RU polytope to the number in the ARU polytope is at least $2^{(2^{n} - {n \choose 2} -1)}/(n+1)$, where $|\A_{A}|=n$. Since the dominant term in the numerator is double exponential, the ratio grows extremely rapidly. In fact, when $n \ge 6$, the ratio already exceeds $2^{2^{\,n-1}}$.
\end{remark}

Theorem \ref{thm:vertex} and Remark \ref {rem:ARUM_polytope} and  \ref{rem:vertex} together show that  RU-rationality imposes far weaker restrictions than ARU-rationality, which is consistent with the main message of Theorem \ref{thm:1}.

\subsection{Exogenously given aggregation correspondences}\label{sec:howmanyx}
\vspace{-0.2cm}

In the previous sections, we impose no restrictions on the aggregation correspondence $X(\cdot)$, reflecting the practical difficulty of enumerating all underlying alternatives represented by a non-atomic aggregate $a$. The framework, however, extends directly to the case in which the aggregation correspondence is fixed exogenously. This is relevant when the analyst knows the possible underling alternatives that each aggregates  may represent but does not observe which of them are available to consumers in each market. For example, in a study of demand for specific automobile models, the outside option---not purchasing a car---may encompass known transportation alternatives such as public transit, cycling, and ride-sharing. The analyst can enumerate these alternatives, yet their availability varies across markets: some cities lack a subway system, others have no bike-sharing program.

For simplicity, we focus on the case in which there is only one non-atomic aggregate (i.e., $\A_N=\{a_0\}$). Fix an aggregation correspondence $X(a_0)$. We study the set of stochastic choice functions that are RU-rational under this restriction.  Formally, a stochastic choice function $\rho$ is \emph{RU-rational given $X$} if there exist a preference distribution $\mu_{\X}$ and a composition distribution $\lambda$ that are consistent with the fixed aggregation correspondence $X$ such that the pair $(\mu_{\X},\lambda)$ rationalizes $\rho$.

As we do not make any assumptions on preferences over the underlying alternatives, the RU-rational stochastic choice functions depend only on the size of $X(a_0)$; thus we denote the set by $RU(n)$, where $n=|X(a_0)|$. Thus, the RU-polytope can be written as $\bigcup_{n=2}^\infty RU(n)$ since in the previous section  we do not fix $X$. 

\begin{theorem}\label{thm:nests} Consider the outside setup (i.e., $\A_N = \{a_0\}$) and suppose that $\D = 2^{\A}\setminus \{\emptyset\}$.
    Then,
\begin{equation}
ARU \subsetneq RU(2) \subsetneq \cdots \subsetneq RU(|\A_A|) \subsetneq \underbrace{RU(|\A_A|+1) = RU(|\A_A|+2) = \cdots}_{\text{Limited\ Monotonicity + Partial RU-rationality}}.  \end{equation}
\end{theorem}

Theorem \ref{thm:nests} shows that $RU(n)$ becomes larger as $n$ increases and when $n$ becomes larger than or equal to $|\A_A|+1$, $RU(n)$ coincides with the set of the stochastic choice functions that are characterized by Limited Monotonicity and Partial-RU rationalizability.\footnote{For intuition on why the threshold is $|\A_A|+1$, see Remark \ref{remark:threshold} in Section \ref{sec:remarks} in the online appendix.} As $n$ increases, the set $RU(n)$ becomes larger as the set of preferences over $X(a_0)$ as well as the set of composition distributions become richer, which in turn generate richer  choice behavior over $\A$. However, once $n$ exceeds $|\A_A|+1$, the set stabilizes and is fully characterized by the two conditions stated in Theorem \ref{thm:1}. Thus
 \[
 \bigcup_{n=2}^\infty RU(n) = RU(|\A_A|+1)=\{\rho \text{ satisfies Limited Monotonicity \& Partial RU-rationality}\}.
 \]
In practice, the outside option usually represents a broad heterogeneous collection of underlying alternatives—such as staying home, consuming other products, or allocating resources elsewhere—so it is reasonable to expect that the set $X(a_0)$ is large relative to the number of explicitly modeled alternatives (i.e., $|\A_A|$). Thus, in practice, the implication of RUM can be considered to be Limited  Monotonicity and Partial RU-rationality.

On the other hand, when the analyst knows $n\equiv |X(a_0)|$ is less than $|\A_A|+1$, the sets of feasible preferences over $X(a_0)$ and composition distributions are both smaller, rendering some behaviors not RU-rationalizable. The next remark shows that $RU(n)$ becomes non-convex: although the vertices $\rho^{\succ}_{\E}$ remain RU-rationalizable, some of their convex combinations do not.

\begin{remark}\label{rem:non-convex}
For any integer $n$ such that $2 \le n <|\A_A|+1$,
\begin{itemize}
\item[(i)] (deterministic menu-effect) $\rho^{\succ}_{\E} \in RU(n)$ for all $\succ \in \L$ and $\E \subseteq \D$,
\item[(ii)] (non-convexity) $RU(n)$ is not convex.
\end{itemize}
\end{remark}

\begin{figure}[h]
\begin{center}
\begin{tikzpicture}[scale=0.8,transform shape, mystyle/.style={draw,shape=circle,fill=black, inner sep=0pt, minimum size=4pt}]

\def\ngon{20}

\node[draw, regular polygon,regular polygon sides=\ngon,minimum size=6cm, fill=blue!20, opacity=0.7] (p) {};

\foreach\x in {1,...,\ngon}{
    \node[mystyle] (p\x) at (p.corner \x){};
}

\foreach \i in {1,...,\ngon}{
  \foreach \j in {1,...,\ngon}{
    \ifnum\i<\j
      \draw[very thin, gray!60] (p\i) -- (p\j);
    \fi
  }
}

\draw[red, thick, fill=red!30, opacity=0.5] (p.corner 1) -- (p.corner 11) -- (p.corner 12) -- cycle;

\foreach \x in {2,3,4,5,6,7,8,9,10,13,14,15,16,17,18,19,20} {
    \pgfmathsetmacro\angle{90+360/\ngon*(\x-1)}
    \node at ($(p.corner \x)+(\angle:0.4cm)$) {$\rho^{\succ_{\x}}_{\mathcal E_{\x}}$};
}

\pgfmathsetmacro\angleA{90+360/\ngon*(1-1)}
\pgfmathsetmacro\angleB{90+360/\ngon*(11-1)}
\pgfmathsetmacro\angleC{90+360/\ngon*(12-1)}

\node at ($(p.corner 1)+(\angleA:0.4cm)$) {$\rho^{\succ_1}_{\D}$};
\node at ($(p.corner 11)+(\angleB:0.4cm)$) {$\rho^{\succ_2}_{\D}$};
\node at ($(p.corner 12)+(\angleC:0.4cm)$) {$\rho^{\succ_3}_{\D}$};

\coordinate (q) at ($(p.corner 6)!0.2!(p.corner 16)$);

\path (p.corner 7);
\coordinate (proj) at ($(p.corner 6)!0.5!(p.corner 16)$);

\node[circle,fill=black,inner sep=1pt,label=left:$\rho$] at (q) {};

\end{tikzpicture}
\caption{RU polytope (blue), ARU polytope ${\mathcal P}_{\A}$ (red), and RU(3). RU(3) contains all vertices (i.e., $\rho^{\succ_i}_{\E_i}$) and all line segments connecting two vertices.}\label{fig:ARU_polytope_approx}
\end{center}
\end{figure}

See Section~\ref{sec:non-convexproof} in the online appendix for the proof. Remark~\ref{rem:non-convex}~(ii) implies that $RU(n)$ admits no half-space representation; in particular, it cannot be characterized by axioms consisting solely of linear inequalities. Despite this, we can establish the following result:

\begin{theorem}\label{thm:errorbound}
Fix any integer $n$ such that $2 < n <|\A_A|+1$. For all $\rho$ satisfying Limited Monotonicity and Partial-RU rationality, there exists $\rho' \in RU(n)$ such that,
\[
\frac{\|\rho- \rho'\|^2}{|\D|} \le \frac{1}{n-1}.
\]
\end{theorem}

The theorem implies that once $n$ is sufficiently large (e.g., $n\ge 11$), the class $RU(n)$ is highly rich and {\it close} to the set of stochastic choice functions characterized by Limited Monotonicity and Partial-RU rationality: for any $\rho$ satisfying the two axioms, there exists $\rho'\in RU(n)$ whose {\it average squared distance} from $\rho$ is below ten percent. Therefore, when $n$ is sufficiently large, we can make the same conclusion as in the previous sections that the RU-rationality imposes much weaker restrictions than ARU-rationality. 

The theorem follows by applying an approximate version of Carath\'eodory's theorem. (See Theorem \ref{thm:approxcaratheodory} in the Appendix.) The key step is the following lemma:

\begin{lemma}\label{lem:subpolys}
For any sequence $(\succ_i,\E_i)_{i=1}^{n-1}$ with $n\ge 2$, any convex combination among $\{\rho^{\succ_i}_{\E_i}\}^{n-1}_{i=1}$ belongs to  $RU(n)$.
\end{lemma}

See Section \ref{sec:caratheodoryproof} in the appendix for the proof of the lemma and the theorem.\footnote{In Remark \ref{rem:simplexintuition} of Section \ref{sec:remarks} of 
the online appendix, we provide an intuitive explanation why  the lemma holds. }
Figure~\ref{fig:ARU_polytope_approx} illustrates the intuition for Theorem~\ref{thm:errorbound}. The set $RU(3)$ contains all vertices (i.e., the deterministic menu-effect choice functions $\rho^{\succ_i}_{\E_i}$) and, moreover, every line segment connecting any two vertices. Consequently, for any RU-rational stochastic choice function $\rho$, there exists a point in $RU(3)$ arbitrarily close to $\rho$.

\vspace{-0.2cm}

\subsection{Multiple non-atomic aggregates}\label{sec:multiple_cat}

\vspace{-0.2cm}

In this section, we consider the case in which there exist multiple non-atomic aggregates. The main results from the single non-atomic aggregate case continue to hold in essentially the same form, as we show below:

\begin{theorem}\label{thm:1multiple} Assume Richness of $\X$. A stochastic choice function $\rho$ is RU-rational if and only if $\rho$ satisfies the following two conditions:
     \begin{itemize}
\item[(i)] (Limited Monotonicity): if $b \in D \subseteq \A_A$ and $E \subseteq \A_N$ then $\rhoa(D, b) \ge \rhoa(D \cup E,b)$,
\item[(ii)] (Partial RU-rationality) $\rhoa$ is partially RU rational.
    \end{itemize}
\end{theorem}

The only substantive difference from Theorem~\ref{thm:1} is that the condition (i) becomes stronger. In the earlier theorem, the condition (i) required monotonicity with respect to the addition of a single aggregate---namely, the outside option $a_0$. In the current theorem, where multiple non-atomic aggregates are present, the condition (i) instead requires monotonicity with respect to the addition of arbitrary subsets $E$ of the non-atomic aggregates. This is a natural extension of Theorem~\ref{thm:1}: Theorem~\ref{thm:1} corresponds to the case in which $\A_N$ is singleton $\{a_0\}$. Therefore, all of the implications of Theorem~\ref{thm:1}, as explained in the previous section hold with Theorem \ref{thm:1multiple}. We provide  the formal proof in the appendix.

In addition to Theorem \ref{thm:1}, all of the theorems (Theorem \ref{thm:vertex}, \ref{thm:nests}, \ref{thm:errorbound}) in this section generalize to the multiple aggregated alternative case, with appropriate modifications. In Section \ref{sec:multiplegeneralizations} of the online appendix, we provide all statements and proofs of the generalized results. The vertex characterization is formalized in Theorem \ref{thm:vertexmultiple}. In the general case, the vertices of the RU-polytope are deterministic choice functions that are rationalized by a linear order $\succ$ over aggregates, with the additional behavior that the agent may arbitrarily deviate to any non-atomic aggregate. In Theorem \ref{thm:multiplenests}, we show that the results on exogenous aggregation correspondences also generalize in a natural way. As the number of underlying alternatives for each aggregate increase, the RU-rationalizable set expands, up until a threshold at which point Limited Monotonicity and Partial-RU characterize the set. Finally,  in Theorem \ref{thm:errorboundmultiple},  we show that as long as the number of underlying alternatives is relatively large, Limited Monotonicity and Partial-RU approximate the RU-rationalizable set even below the threshold. Therefore, even with multiple non-atomic aggregates, RU-rationality alone does not justify the use of ARMs.

\vspace{-0.2cm}

\section{
Conditions restoring ARUM 
}\label{sec:restrictions}

\vspace{-0.2cm}

This section investigates the conditions under which RU-rationality implies ARU-rationality.  
These conditions are especially relevant given the widespread use of ARUMs in empirical IO and the common assumption that the data-generating process satisfies RU-rationality. In particular, we provide restrictions on the underlying preference distribution~$\muX$ and the composition distribution~$\lambda$, each of which ensures RU-rationality and ARU-rationality coincide on the observed choice domain~$\A$. The two conditions we introduce are independent: one applies to~$\muX$ (Section~\ref{sec:mu}) and the other to~$\lambda$ (Section~\ref{sec:la}).

\vspace{-0.2cm}
 
\subsection{Condition on $\muX$:  non-overlappingness}\label{sec:mu}

\vspace{-0.2cm}

 In this section we provide a restriction on the distribution of preferences that recovers the implications of ARUM. 

\begin{definition}
Given an aggregation correspondence $X$, a distribution $\muX\in \Delta (\L(\X))$ is non-overlapping  if for any $\succ\in supp (\muX)$, $a \in \A_N$, and $y,z \in X(a)$,
\[
y \succ x \succ z \implies x \in X(a).
\]
\end{definition}

\noindent The condition requires that for each aggregate $a$, the underlying alternatives in $X(a)$ occupy adjacent positions in every preference ranking in the support of $\mu_{\X}$.
\footnote{It  is possible to weaken the definition  as follows but we keep this condition for simplicity: A distribution $\muX \in \Delta (\L(\X))$ is almost non-overlapping if for any $\succ \in supp (\muX)$, there exists $b,c \in \A$ such that (i) $y \succ x$ for all $x \in X(b) \cup X(c)$ and $y \in \X \setminus (X(b) \cup X(c))$ and (ii) for all  $a \in \A_N \setminus \{b,c\}$, and $y,z \in X(a)$, $y \succ x \succ z \implies x \in X(a)$.}  


\begin{proposition}
\label{thm:non-overlapping}
A stochastic choice function $\rhoa$ is RU rational with a non-overlapping $\muX$ if and only if $\rhoa$ is ARU rational.
\end{proposition}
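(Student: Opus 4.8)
The plan is to prove the two directions of the equivalence. The ``only if'' direction is the substantive one, while the ``if'' direction follows from earlier remarks. Let me sketch both.

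\textbf{The ``if'' direction.} Suppose $\hat\rho$ is aggregated RU rational, witnessed by $\hat\mu \in \Delta(\L(\A))$. I would reuse the construction in the remark following Definition~\ref{def:rationalization}: for each $a \in \A$ pick a representative $x_a \in X(a)$, define $\lambda$ to be the degenerate rule assigning probability one to $(\{x_a\})_{a\in A}$ on every menu $A$, and lift each $\hat\succ$ in $\mathrm{supp}(\hat\mu)$ to a ranking $\succ$ on $\tilde X \cup X^*$ by setting $x_a \succ x_b$ whenever $a \ \hat\succ \ b$. The only extra point to check beyond that remark is that the lifted $\mu$ can be taken \emph{non-overlapping}: since $\lambda$ only ever selects the singleton $\{x_a\}$ from each $X(a)$, the relative position of the non-representative elements of $X(a)$ is irrelevant to the rationalization, so I am free to place, for each aggregated $a$, \emph{all} of $X(a)\setminus\{x_a\}$ in a contiguous block immediately adjacent to $x_a$ (say, directly below $x_a$), and do this consistently across all the $a$'s. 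This makes each $X(a)$ an interval in $\succ$, hence $\mu$ non-overlapping, while not disturbing the choice frequencies. So $\hat\rho$ is RU rational by a non-overlapping $\mu$.

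\textbf{The ``only if'' direction.} Suppose $(\mu,\lambda)$ RU rationalizes $\hat\rho$ with $\mu$ non-overlapping. I want to produce $\hat\mu \in \Delta(\L(\A))$ satisfying \eqref{eq:agg_rationalization}. The natural map is to push $\mu$ forward along the ``collapse'' operation: given $\succ \in \mathrm{supp}(\mu)$, because $\mu$ is non-overlapping each set $X(a)$ with $a\in\A$ occupies a contiguous interval in $\succ$ (singletons trivially, aggregated sets by the non-overlapping hypothesis), so the $\A$-blocks are linearly ordered by ``the $X(a)$-block lies above the $X(b)$-block''; call this induced linear order $\phi(\succ) \in \L(\A)$. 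Define $\hat\mu = \phi_*\mu$, i.e. $\hat\mu(\hat\succ) = \mu(\phi^{-1}(\hat\succ))$. It remains to verify \eqref{eq:agg_rationalization}. Here I would invoke Lemma~\ref{lem:non-overlapping}(a): since $\mu$ is non-overlapping, $\hat\rho$ is rationalized by $(\mu,\lambda')$ for \emph{every} aggregation rule $\lambda'$, so I may choose the most convenient one. Picking the $\lambda'$ that deterministically selects $S_a = X(a)$ for every $a$ in every menu, formula \eqref{eq:rationalization} collapses to
\[
\hat\rho(A,a) = \mu\bigl(\succ \ \mid\ \exists x \in X(a),\ \forall y \in \textstyle\bigcup_{b \in A\setminus a} X(b),\ x \succ y\bigr).
\]
Now the key observation: for a non-overlapping $\succ$, the event inside is exactly the event that the top element of the $X(a)$-block sits above the top element of every other $X(b)$-block for $b \in A$, which by construction of $\phi$ is precisely the event $\{a \ \hat\succ\ b$ for all $b \in A\setminus a\}$ with $\hat\succ = \phi(\succ)$. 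Hence the right-hand side equals $\hat\mu(\hat\succ \mid a \ \hat\succ\ b\ \forall b\in A\setminus a)$, which is \eqref{eq:agg_rationalization}.

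\textbf{Where the work is.} The main obstacle is making the reduction to ``$\lambda'$ deterministically equal to $X(a)$'' airtight and then nailing the combinatorial identification of events: one must check carefully that, under non-overlapping $\succ$, ``some element of $X(a)$ beats every element of $\bigcup_{b\neq a}X(b)$'' is equivalent to ``the $X(a)$-interval is the topmost among the $X(b)$-intervals indexed by $A$'' — the subtlety being that $X(a)$ need not be above \emph{all} other blocks, only above those corresponding to alternatives actually present in the menu $A$, and the non-overlapping condition as stated only forbids a \emph{non}-$X(a)$ element from lying strictly between two $X(a)$ elements, so I should first record the easy lemma that this forces each $X(a)$ to be an interval and the $A$-blocks to be totally ordered in $\succ$. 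The ``if'' direction's only delicate point is confirming the freedom to arrange the non-representative elements into adjacent blocks without changing any choice probability, which is immediate once one notes $\lambda$ never exposes those elements. Everything else is bookkeeping, and Lemma~\ref{lem:non-overlapping} does the heavy lifting of decoupling $\lambda$.
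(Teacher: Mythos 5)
Your proposal is correct and follows essentially the paper's own route: the ``only if'' direction collapses each non-overlapping ranking into a block ordering on $\A$ (the paper reduces to deterministic $\mu$ and relabels the blocks, while you take the pushforward of general $\mu$ and use Lemma~\ref{lem:non-overlapping}(a) to normalize $\lambda$ to the full-set rule), and the ``if'' direction lifts $\hat\mu$ block-wise to a non-overlapping $\mu$ exactly as the paper does, with your representative-plus-adjacent-block arrangement being a cosmetic variant of the paper's arbitrary within-block ordering. The differences are presentational, not substantive.
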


Proposition \ref{thm:non-overlapping} shows that, without any assumptions on $\lambda$,  the use of ARUM is valid  when the underlying alternatives  represented by each aggregate occupy adjacent positions in agents’ preference rankings. Consequently, when constructing an aggregated dataset, alternatives with closely related characteristics should not be assigned to different aggregates. A similar observation has been made empirically by \citet{stafford2018fishing} using a dataset of lobster fishing. In particular, she demonstrated  that classifying crab-fishing sites as part of the outside option leads to biased coefficient estimates. Because lobster and crab fishing are close substitutes, an excellent lobster site is preferred to a poor crab site and vice-versa, their rankings overlap, thereby violating the non-overlapping condition specified in Proposition \ref{thm:non-overlapping}.


\vspace{-0.2cm}

\subsection{Condition on $\lambda$: menu-independence}\label{sec:la}

\vspace{-0.2cm}

The second condition is on the composition distribution:

\begin{definition}\label{def:independent} Given an aggregation correspondence $X$,
a composition distribution $(\lambda_B)_{B \in \D}$ is said to be {\it  menu independent} if there exists an unconditional distribution $\lambda\in \Delta\Big( \prod_{a \in \A} (2^{X(a)}\setminus \{\emptyset\})\Big)$ that the marginal distribution of $\lambda$ on $\prod_{a \in B} (
2^{X(a)} \setminus \{\emptyset\}$ coincides with $\lambda_B$ for all $B \in \A$.\footnote{When $(\lambda_B)_{B \in \D}$ is menu-independent and when $\A \in \D$, the unconditional distribution must coincides with $\lambda_\A$. Thus the menu-independence condition can be written equivalently as the marginal distribution of $\lambda_{\A}$ on $\prod_{a \in B} (
2^{X(a)} \setminus \{\emptyset\})$ coincides with $\lambda_B$ for all $B \in \A$.} \end{definition}



\begin{remark}
The menu-independence requires that the conditional distribution must be consistent across menus. If $(\lambda_B)_{B \in \D}$ is menu-independent, then for any $A, B \in \D$, the marginal distributions $\lambda_A$ and $\lambda_B$  on $\prod_{a \in A\cap B} (
2^{X(a)} \setminus \{\emptyset\})$ must coincide.  In particular,  $\lambda_A = \lambda_B$ whenever $A \cap \A_N = B \cap \A_N$.

On the other hand, the menu-independence allows correlation across aggregates. To see this, consider a stronger condition that does not allow the correlation:   there exists  $\lambda^{a} \in \Delta(2^{X(a)}\setminus \{\emptyset\})$ for each aggregate $a \in {\mathscr A}$ such that for any $B\subseteq {\mathscr A}$ and $(S_a)_{a \in B} \in \prod_{a \in B} (
2^{X(a)} \setminus \{\emptyset\})$, $\lambda_{B}((S_a)_{a \in B})= \prod_{a \in B} \lambda^{a}(S_a)$.\footnote{For example, consider $a, b \in \A_N$ and let $X(a)=\{x_1, y_1\}$ and $X(b)=\{x_2, y_2\}$. Definition \ref{def:independent}  allows the situation that $a$ means alternative $x_1$ ($y_1$) if and only if $b$ means alternative $x_2$  ($y_2$, respectively); and  $a$ means $x_1$ and $y_1$ equally likely.  In this example, the probability that $a$ means $x_1$ and $b$ means $y_2$ is zero, which is not allowed in the stronger menu-independence condition.}
 The two conditions coincide when there is only one non-atomic  aggregate. In that case, both reduce to the following: $\lambda_D=\lambda_E$ for all $D, E \subseteq \A$.

\end{remark}

\begin{proposition}  \label{thm:independent}

A stochastic choice function $\rhoa$ is RU rational with menu independent $\lambda$ if and only if $\rhoa$ is ARU rational.
\end{proposition}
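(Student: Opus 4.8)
The plan is to prove the two directions separately, with the ``if'' direction being essentially immediate and the ``only if'' direction carrying the real content.

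For the \emph{if} direction: suppose $\hat\rho$ is aggregated RU rational. We already know from remark $(*)$ (the remark following Definition~\ref{def:rationalization}) that aggregated RU rationality implies RU rationality, and moreover the construction there produces a $\lambda$ that always interprets each $a$ as a fixed singleton $\{x_a\}$ \emph{regardless of the menu}. Such a $\lambda$ is manifestly menu independent: take the joint distribution $\lambda \in \Delta\big(\prod_{a\in\A}(2^{X(a)}\setminus\emptyset)\big)$ to be the point mass on $(\{x_a\})_{a\in\A}$, and verify that equation~\eqref{eq:def_ind} holds. So this direction requires only pointing back to that earlier construction and checking the independence definition is satisfied.

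For the \emph{only if} direction: suppose $(\mu,\lambda)$ RU rationalizes $\hat\rho$ with $\lambda$ menu independent, witnessed by a joint distribution $\lambda \in \Delta\big(\prod_{a\in\A}(2^{X(a)}\setminus\emptyset)\big)$. By Lemma~\ref{lem:ind}, for all $A\subseteq\A$ and $a\in A$,
\[
\hat\rho(A,a)=\sum_{(S_b)_{b\in\A}}\lambda\big((S_b)_{b\in\A}\big)\cdot\rho^\mu\big(\cup_{b\in A}S_b,\,S_a\big).
\]
The key observation is that for each fixed realization $(S_b)_{b\in\A}$ in the support of $\lambda$, the map $A\mapsto$ ``choice frequencies $\rho^\mu(\cup_{b\in A}S_b, S_a)$'' behaves like a stochastic choice function on $\A$: given $\mu$, define an auxiliary aggregated SCF $\rho^{(S_b)}$ on $\A$ by $\rho^{(S_b)}(A,a)=\rho^\mu(\cup_{b\in A}S_b,S_a)$. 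I would check that each $\rho^{(S_b)}$ is itself aggregated RU rational, i.e., rationalized by some $\hat\mu^{(S_b)}\in\Delta(\L(\A))$. Indeed, fix a ranking $\succ$ in the support of $\mu$; on the sub-collection $\{S_b\}_{b\in\A}$ of pairwise disjoint nonempty subsets of $\tilde X\cup X^*$, one can define a ranking $\hat\succ_\succ$ on $\A$ by setting $a\,\hat\succ_\succ\,b$ iff the $\succ$-maximal element of $S_a$ is $\succ$-above the $\succ$-maximal element of $S_b$; then the event ``$\exists x\in S_a,\ \forall y\in\cup_{b\in A\setminus a}S_b,\ x\succ y$'' coincides with the event ``$a$ is $\hat\succ_\succ$-best in $A$''. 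Pushing $\mu$ forward along $\succ\mapsto\hat\succ_\succ$ yields the desired $\hat\mu^{(S_b)}$. Finally, since the class of aggregated RU rational SCFs is closed under convex combinations (a mixture of distributions over $\L(\A)$ is again a distribution over $\L(\A)$, and the rationalization map is affine in $\hat\mu$), the convex combination $\hat\rho=\sum_{(S_b)}\lambda((S_b))\,\rho^{(S_b)}$ is aggregated RU rational, rationalized by $\hat\mu=\sum_{(S_b)}\lambda((S_b))\,\hat\mu^{(S_b)}$.

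The main obstacle is the middle step: verifying cleanly that each fixed-realization choice function $\rho^{(S_b)}$ is aggregated RU rational via the pushforward $\succ\mapsto\hat\succ_\succ$. One must check that $\hat\succ_\succ$ is a genuine linear order on $\A$ (transitivity follows because the $S_b$ are disjoint so ``max of $S_a$ vs.\ max of $S_b$'' inherits a total order from $\succ$), and that the ``best in $A$'' event under $\hat\succ_\succ$ matches the existential condition in Definition~\ref{def:rationalization} exactly — this is where disjointness of the $X(a)$, hence of the $S_a$, is used. The closure-under-mixtures step is routine and the ``if'' direction is a one-line appeal to the earlier remark, so essentially all the work is in setting up this pushforward correctly.
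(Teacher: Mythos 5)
Your proposal is correct, and the substantive (only if) direction is essentially the paper's own argument: decompose $\hat\rho$ via Lemma~\ref{lem:ind}, push $\mu$ forward along the ranking on $\A$ induced by each realization $(S_b)_{b\in\A}$ (your ``max of $S_a$ above max of $S_b$'' is exactly the paper's ``$\exists x\in S_a$ above all of $S_b$'', with disjointness of the $S_a$ guaranteeing a linear order), and then mix the resulting $\hat\mu_{(S_b)}$ with weights $\lambda((S_b))$. The only difference is the easy (if) direction, where you appeal directly to the degenerate, menu-independent $\lambda$ from the remark following Definition~\ref{def:rationalization}, whereas the paper routes through Proposition~\ref{thm:non-overlapping} and Lemma~\ref{lem:non-overlapping}(a); both are valid, and yours is slightly more direct.
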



Proposition \ref{thm:independent} shows that, without any assumption on $\mu_{\X}$, if the composition distribution~$\lambda$ is menu independent, then each aggregate can be treated as a single good. Whether this assumption holds depends on the setting. In some markets, $\lambda$ may indeed be menu independent. A natural example is automobiles: when alternatives are aggregated by brand, the distribution of models within a brand is relatively stable, conditional on that brand being available, and thus invariant to which other brands are offered. In other markets, however, $\lambda$ is menu dependent. As discussed, the outside option often varies across markets (e.g., with income, store access, or local amenities), so its composition—and hence $\lambda_A$—changes with the menu $A$.

\vspace{-0.2cm}

\subsection{How to define outside option?}

\vspace{-0.2cm}

The previous two propositions provide guidance on how to define the outside option. Specifically, an underlying alternative $x$ should be excluded from the outside option and instead modeled as a separate aggregate if either:\\
(1) the desirability of $x$ is very different from the other underlying alternatives that the outside option $a_0$ represents, since this would violate non-overlappingness (Proposition~\ref{thm:non-overlapping}); or\\
(2) the availability of $x$ varies across markets, since including it in the outside option would violate menu-independence (Proposition~\ref{thm:independent}).


\vspace{-0.2cm}

\section{Simulation}\label{sec:simulation}

\vspace{-0.2cm}

The theoretical results from the previous section show that, in general, the implications of RU-rationality for the observed dataset~$\rhoa$ are too weak to guarantee ARU-rationality except under two special conditions: when the composition distribution~$\lambda$ is menu independent, or when the underlying preference structure is non-overlapping.  


Our theoretical results suggest that if $\lambda$ is sufficiently menu dependent and $\mu_\X$ is sufficiently overlapping, $\rhoa$ is unlikely to be rationalized by an ARUM. In this section, we identify two empirical observations based on the implication of the  theoretical results: {\it if $\lambda$ is sufficiently menu dependent and $\mu_\X$ is sufficiently overlapping, then (a) estimation bias becomes large, and (b) the distance from the RU polytope becomes large.}


We also examine {{\it how large such biases can be}}, an important practical question.  In particular, we demonstrate that {\it the resulting biases are often substantial enough not only to distort estimated utility levels but also to overturn inferred preference orderings.}


\vspace{-0.2cm}

\subsection{Setup}\label{subsec:logitsetup}

\vspace{-0.2cm}

Consider three alternatives $x,y,a_0$, where $x,y$ are atomic aggregates \footnote{For simplicity, we abuse the notation $x$ and $y$ to denote both the aggregates and their respective underlying alternatives.}; $a_0$ is a non-atomic aggregate (the outside option). The outside option can mean good $z$ or $w$. That is, $X(a_0)=\{z, w\}$.\footnote{This means that the set of RU-rational stochastic choice functions is 
$RU(2)$. We adopt this setup for two reasons. First, it permits a two-dimensional graphical representation of our bias and distance results. Second, as shown in Theorem \ref{thm:nests}, the set of RU-rational stochastic choice functions is smallest in $RU(2)$, although it remains strictly larger than the ARU polytope. We show that even in this setting, the biases and distances are substantial. With a larger number of underlying alternatives, both the biases and the distances would increase.}
Given fixed utility values $u(x), u(y), u(z)$, and $u(w)$, we generate the true dataset $\rhoX$ over the set $\X$ of underlying alternatives by assuming the logit model: for all $D \subseteq \{x,y,z,w\}$ and $i \in D$, $\rhoX(D,i)= \exp (u(i))/\sum_{j \in D}\exp (u(j))$.

We examine three distinct markets, each represented by one of the following choice sets: $\{x, a_0\}$, $\{y, a_0\}$, and $\{x, y, a_0\}$. Importantly, the composition of the aggregate $a_0$ varies across these markets; that is, $\lambda_{\{x, a_0\}}$, $\lambda_{\{y, a_0\}}$, and $\lambda_{\{x, y, a_0\}}$ over $\{\{w\}, \{z\}, \{w,z\}\}$ are allowed to differ.  With $\rhoX$ and $\lambda$, we construct the observable reduced dataset $\rhoa$  over the set $\A$ of aggregates.  That is  for any $D \subseteq \{x,y\}$ and $i \in D$, we define $\rhoa(D \cup a_0, i)$ as 
\[
\lambda_{D \cup a_0} (\{w\}) \rhoX(D\cup \{w\}, i) +\lambda_{D \cup a_0} (\{z\}) \rhoX(D\cup \{z\}, i)+\lambda_{D \cup a_0} (\{z,w\}) \rhoX(D\cup \{z,w\}, i).
\]

Given the observed aggregated dataset~$\rhoa$, we estimate the utility parameters~$\hat{u}(x)$ and~$\hat{u}(y)$ under the assumption that~$\rhoa$ follows a multinomial logit model defined over $\A$ and that the utility of the outside option is zero, i.e.,~$\exp(u(a_0)) = 1$. Note that in this setup, (i) {\it the composition distribution $\lambda$ becomes more {\it menu dependent} as $\lambda_{\{x, a_0\}}$, $\lambda_{\{y, a_0\}}$, and $\lambda_{\{x, y, a_0\}}$ become more distinct;
(ii) Fixing the true utilities of $u(x)$ and $u(y)$---say,  $u(x) > u(y)$---the preference structure becomes \emph{more overlapping} as $\max\{u(z), u(w)\}$ exceeds $u(x)$ and $\min\{u(z), u(w)\}$ falls below $u(y)$.}

In the following sections, we fix certain parameter values that are not the main focus of the analysis. Section~\ref{sec:additional_fig} of the online appendix reports simulations under alternative parameter values, with qualitatively similar results throughout.

\vspace{-0.2cm}

\subsection{Estimation Bias from ARUM Misspecification}\label{sec:bias}

\vspace{-0.2cm}

We quantify the estimation bias using the following measure:
\begin{equation}\label{eq:bias}
(\hat{u}(x) - \hat{u}(y)) - (u(x) - u(y)),
\end{equation}
where $\hat{u}(x)$ and $\hat{u}(y)$ denote the maximum-likelihood estimates of $u(x)$ and $u(y)$, respectively. This measure captures the distortion in relative utilities between alternatives $x$ and $y$. Remember that in our setup the values of $u(x)$ and $u(y)$ are fixed at $u(x)=2$ and $u(y)=1$. We are interested in the values of the measure as well as the sign of the measure.  When the bias (\ref{eq:bias}) is less than $-1$ it means that in the estimate we have $\hat{u}(y)> \hat{u}(x)$, even though in the true data we have $u(x)> u(y)$. This means that bias affects not only the estimated utility levels but also the inferred preference orderings.

The values of the measure are also meaningful. To see this, consider the case in which $\hat{u}(x) - \hat{u}(y) > u(x) - u(y)$. Then, by taking the exponential of the measure, we have $\frac{\exp(\hat{u}(x))}{\exp(\hat{u}(y))} \Big/ \frac{\exp(u(x))}{\exp(u(y))}$, which represents the ratio of the estimated odds ratio to the true odds ratio of choosing $x$ over $y$. A deviation in this ratio reflects how the estimated utilities---and thus predicted behavior---diverges from the true underlying model.

\vspace{-0.2cm}
\subsubsection{Effect of composition distribution $\lambda$ on bias}\label{subsec:effect_lambda_bias}
\vspace{-0.2cm}

In the first simulation,  focusing on the role of $\lambda$: {\it the bias defined in (\ref{eq:bias}) becomes larger as $\lambda$ becomes more menu dependent (i.e., as $\lambda$ changes across choice sets $\{x,a_0\},\{y,a_0\}$, $\{x,y,a_0\}$).}  As the preference structure is not the focus of the analysis, we fixed the values $u(x)=2$, $u(y)=1$, $u(z)=3, u(w)=0$.

\begin{figure}[ht]
    \centering
    
    \begin{subfigure}[t]{0.48\textwidth}
        \centering
\includegraphics[scale = 0.47]{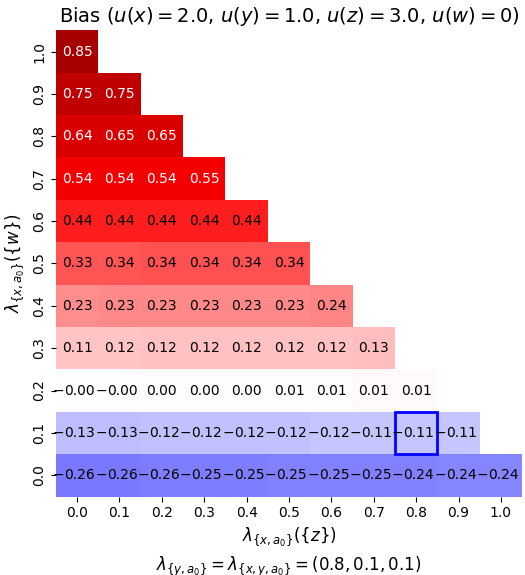}
    \caption{Heatmap of biases across values of $\lambda_{\{x,a_0\}}$, with $\lambda_{\{y,a_0\}}$ and $\lambda_{\{x,y,a_0\}}$ fixed at $(0.8,0.1,0.1)$ and utilities fixed at $u(x)=2$, $u(y)=1$, $u(z)=3$, $u(w)=0$. The horizontal axis reports $\lambda_{\{x,a_0\}}(\{z\})$, the vertical axis reports $\lambda_{\{x,a_0\}}(\{w\})$, and color intensity represents the magnitude of the bias defined in (\ref{eq:bias}). The blue-outlined cell marks the independent case, where $\lambda_{\{x,a_0\}}=\lambda_{\{y,a_0\}}=\lambda_{\{x,y,a_0\}}$; cells farther from the blue cell correspond to more menu-dependent cases, where the bias is larger.}
   
    \label{fig:bias_lam}
    \end{subfigure}
    \hfill
    \begin{subfigure}[t]{0.48\textwidth}
        \centering
    \includegraphics[width=1\linewidth]{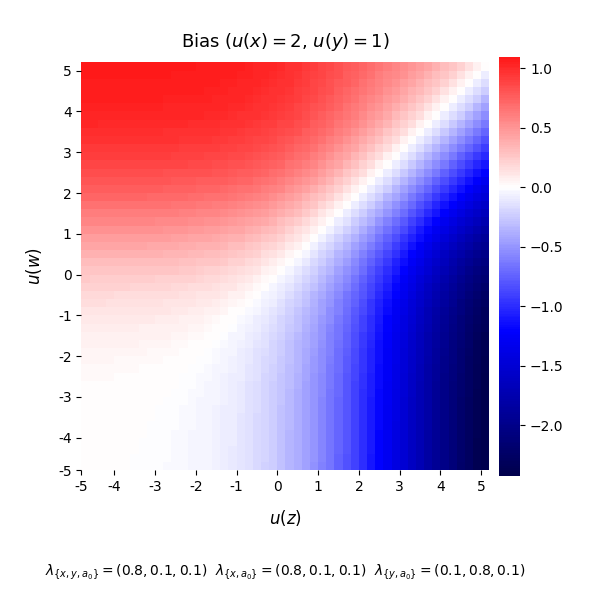}

    \caption{Heatmap of biases across values of $u(z)$ and $u(w)$, with $u(x)=2$, $u(y)=1$, and composition distributions fixed at $\lambda_{\{x,y,a_0\}}=(0.8,0.1,0.1)$, $\lambda_{\{x,a_0\}}=(0.8,0.1,0.1)$, and $\lambda_{\{y,a_0\}}=(0.1,0.8,0.1)$. The horizontal axis reports $u(z)$, the vertical axis reports $u(w)$, and color intensity indicates the magnitude of the bias defined in (\ref{eq:bias}) (red for positive bias, blue for negative bias). Biases are largest when preferences overlap strongly, i.e., when $u(z)$ and $u(w)$ straddle $u(x)=2$ and $u(y)=1$.}
    
    \label{fig:bias_pref_struc}
    \end{subfigure}
    \end{figure}

First, we construct a heatmap to visualize the sensitivity of bias to deviations from the independence case. In Figure \ref{fig:bias_lam}, we vary the values of $\lambda_{\{x,a_0\}}$ while equating and holding $\lambda_{\{x,y,a_0\}}$ and $\lambda_{\{y,a_0\}}$ fixed at $(0.8,\ 0.1,\ 0.1)$, where the first, second, and third entries represent the probabilities assigned to $\{z\}$, $\{w\}$, and $\{z, w\}$, respectively. Note that the horizontal axis measures $\lambda_{\{x,a_0\}}(\{z\})$, while the vertical axis measures $\lambda_{\{x,a_0\}}(\{w\})$. (Note that the numbers $\lambda_{\{x,a_0\}}(\{z\})$ and $\lambda_{\{x,a_0\}}(\{w\})$ determine $\lambda_{\{x,a_0\}}$ uniquely  as $\lambda_{\{x,a_0\}}(\{z,w\})=1-\lambda_{\{x,a_0\}}(\{w\})-\lambda_{\{x,a_0\}}(\{z\})$.) Each grid point in the heatmap represents a specific assignment of $\lambda_{\{x,a_0\}}$; the bias is computed at each grid point. The blue-outlined cell corresponds to the menu-independent case, in which $\lambda_{\{x,a_0\}}(\{z\})=0.8$ and $\lambda_{\{x,a_0\}}(\{w\})=0.1$, implying that $\lambda_{\{x,a_0\}}=\lambda_{\{y,a_0\}}=\lambda_{\{x,y,a_0\}}$.

The heatmap  shows that  bias grows as $\lambda$ becomes more menu dependent—that is, as the cell moves further away from the menu-independent case. The mechanism is straightforward: positive bias arises when $x$ is overvalued relative to $a_0$ in the menu $\{x,a_0\}$, while negative bias arises when $x$ is undervalued relative to $a_0$ in the menu $\{x,a_0\}$. When $a_0$ consists of $\{w\}$, it is relatively unattractive, whereas when it consists of $\{z\}$ or $\{z,w\}$, it is more attractive, since in our setup $u(z)$ is high and $u(w)$ is low. Hence, the large positive bias in the figure occurs when $a_0$ is composed of ${w}$ more often in the menu $\{x,a_0\}$ (e.g., when $\lambda_{\{x,a_0\}}(\{w\})=1$). Conversely, the negative bias arises when $a_0$ is composed of $\{z\}$ or $\{z,w\}$ (e.g., when $\lambda_{\{x,a_0\}}(\{w\})=0$).



\vspace{-0.2cm}

 \subsubsection{Effect of preference structure on bias}\label{subsec:effect_pref_bias}
\vspace{-0.2cm}

In the last simulation on biases, we vary the preference structure to evaluate how overlapping preferences affect bias. We assume that the true unobservable dataset $\rhoX$ is generated by logit models with $u(x)=2$, $u(y)=1$, and various values of $u(z)$ and $u(w)$ on a uniform grid over $[-5,5]$. The composition distributions $\lambda$ are fixed to be menu-dependent, since $\lambda$ is not the focus of this analysis. Specifically, we set
$\lambda_{\{x,y,a_0\}} = (0.8,0.1,0.1)$,
$\lambda_{\{x,a_0\}} = (0.8,0.1,0.1)$, and
$\lambda_{\{y,a_0\}} = (0.1,0.8,0.1)$,
where the three coordinates represent the probabilities of $\{z\}$, $\{w\}$, and $\{z,w\}$, respectively. 

For each pair $(u(z),u(w))$, we compute the bias defined in \eqref{eq:bias}. Since $u(x)=2$ and $u(y)=1$, violations of the non-overlapping condition are more likely when $u(z)$ and $u(w)$ are far apart, with $\max\{u(z),u(w)\}>2$ and $\min\{u(z),u(w)\}<1$.

Figure~\ref{fig:bias_pref_struc} displays the resulting heatmap, with $u(z)$ and $u(w)$ on the axes. Red indicates positive bias at the top left corner and blue indicates negative bias at the bottom right corner, with color intensity reflecting the magnitude.  The figure again confirms that  bias is largest when the preference structure is highly overlapping. In particular, as $u(z)$ and $u(w)$ move farther apart, cells lie farther from the diagonal, and the magnitude of bias increases. Conversely, when both $u(z)$ and $u(w)$ are either very high or very low, the bias decreases.  We provide further explanation in Section \ref{sec:sim_intuition} of the online appendix.

\subsubsection{Maximum and minimum biases and the independent case}\label{subsection:max_min_biases}
\vspace{-0.2cm}

In the previous sections, we fixed values of some $\lambda$s.\footnote{ 
To construct Figure \ref{fig:bias_lam}, we assumed $\lambda_{\{y,a_0\}}=\lambda_{\{x,y,a_0\}}$.} In this section, we vary the all  $\lambda
$s and report the resulting maximum and minimum biases in Figure \ref{fig:max_min_ind_bias}. Each point in the figure corresponds to four possible bias measures, evaluated at  values of $\lambda_{\{x,y,a_0\}}(\{w\})$ and $\lambda_{\{x,y,a_0\}}(\{z\})$ satisfying $\lambda_{\{x,y,a_0\}}(\{w\}) + \lambda_{\{x,y,a_0\}}(\{z\}) \le 1$:\\
 (1) Maximum positive bias, obtained by varying $\lambda_{\{x,a_0\}}$ and $\lambda_{\{y,a_0\}}$ \textcolor{red}{(red)}  \\
(2) Maximum negative bias, obtained by varying $\lambda_{\{x,a_0\}}$ and $\lambda_{\{y,a_0\}}$ \textcolor{violet}{(purple)}\\
(3) Minimum absolute bias, obtained by varying $\lambda_{\{x,a_0\}}$ and $\lambda_{\{y,a_0\}}$ \textcolor{blue}{(blue)}\\  
(4) Independent case, where $\lambda_{\{x,y,a_0\}}=\lambda_{\{x,a_0\}}=\lambda_{\{y,a_0\}}$ \textcolor{teal}{(green)}  

\vspace{-.2em}
\begin{figure}[H]
    \centering   
    \includegraphics[width=0.5\linewidth]{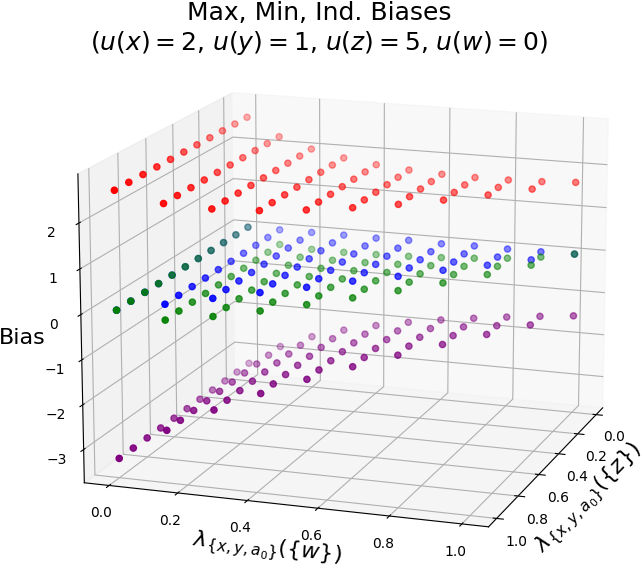}
    \vspace{0em}
    
    \caption{Maximum positive bias (red), maximum negative bias (purple), minimum absolute bias (blue), and bias under menu-independent $\lambda$ (green), plotted across values of $\lambda_{\{x,y,a_0\}}$. The horizontal axis reports $\lambda_{\{x,y,a_0\}}(\{w\})$ and the vertical axis reports $\lambda_{\{x,y,a_0\}}(\{z\})$. The maximum and minimum biases are obtained by optimizing over all admissible values of $\lambda_{\{x,a_0\}}$ and $\lambda_{\{y,a_0\}}$.}
\label{fig:max_min_ind_bias}
\end{figure}

The figure illustrates that, in the menu-independent case, the biases are significantly smaller than the maximum biases and are often close to the minimum biases.\footnote{In fact, when $\lambda_{\{x,y,a_0\}}(\{w\}) = 1$ or $\lambda_{\{x,y,a_0\}}(\{w\}) = 0$, the blue points and the green points coincide. } Notably, both the maximum positive and negative biases can be substantial in magnitude. Specifically, the bias can exceed $2$ or fall below $-3$. As mentioned, when the bias $(\hat{u}(x)-\hat{u}(y)) - (u(x)-u(y))$ is less than $-1$, it implies that the estimated utilities satisfy $\hat{u}(x) < \hat{u}(y)$, despite the true ordering being $u(x) > u(y)$. Moreover, the difference is large enough that $y$ appears substantially better than $x$ in the estimates. This demonstrates that {\it bias distorts not only the estimated utility levels but also the inferred preference orderings, potentially leading to incorrect conclusions and decisions.}

Conversely, when the bias exceeds $2$, we have $\frac{\exp(\hat{u}(x))}{\exp(\hat{u}(y))} \;>\; e^2 \cdot \frac{\exp(u(x))}{\exp(u(y))}$. This means the estimated odds ratio overstates the true odds ratio by a factor greater than $e^2 \approx 7$. Such distortions in relative odds can result in highly inaccurate predictions of choice probabilities.

\vspace{-0.2cm}

\vspace{-0.2cm}
\subsubsection{Biases in parametric setup}
\vspace{-0.2cm}

In the previous sections, we treat each alternative such as $x$ as a label. Some readers may wonder how the results change if we consider covariates vector $x$, where utility is given by $\beta \cdot x$. In general, the results would be similar and large biases would be observed. This is because  any bias on the utility level would be reflected in the estimate of $\beta$. For example, suppose each alternative is a vector containing price and quality. Assume that  $x$ is higher quality and more expensive, and $y$ is lower quality and cheaper. As in the previous sections, assume that the true utility of $x$ is higher than that of $y$, this indicates that  the coefficient $\beta_{quality}$ is larger than the coefficient $\beta_{price}$ in the true utility function.

The previous sections  show that when $\lambda$ is significantly menu-dependent and $\mu_{\X}$ is significantly overlapping, then the estimated utility of $y$ may become larger than that of $x$. This implies that the estimated $\hat{\beta}_{quality}$ is smaller than $\hat{\beta}_{price}$. In this way, we expect that whenever the utility estimate is biased, we should observe similar biases in parameter estimation.

\vspace{-0.2cm}
\subsection{Distance to the ARU polytope}\label{sec:distance}
\vspace{-0.2cm}

The previous subsections  confirmed that estimation biases increase as the composition distribution~$\lambda$ becomes more menu dependent and the preference structure becomes more overlapping. The purpose of this subsection is to explain why it holds. We observe that  {\it the distance from the reduced dataset~$\rhoa$ to the ARU polytope grows as the composition distribution~$\lambda$ becomes more menu dependent or as the underlying preferences become more overlapping.}

To quantify the deviation of  $\rhoa$ from ARU-rationality, we calculate the minimum distance between $\rhoa$ and the set  $\mathcal{P}_{\A}$ of ARUMs (i.e., the ARU polytope):
\begin{equation}\label{eq:distance0}
\min_{\rho' \in {\mathcal P}_{\A}} \|\rhoa-\rho'\|^2.
\end{equation}
See Definition \ref{def:ARUM_polytope}  for the definition of the ARU polytope. Recall that  a dataset $\rhoa$ belongs to $\mathcal{P}_{\A}$ if and only if it is ARU rational. Hence, the minimal distance from $\rhoa$ to $\mathcal{P}_{\A}$ provides a natural metric for assessing the extent of violation of the ARU  rationality.

Figure \ref{fig:projection} shows a point $\rho$ that is RU rational but not ARU rational. The line from $\rho$ to the red triangle shows the minimal Euclidean distance from $\rho$ to the ARU polytope $\mathcal{P}_{\A}$ which is given by the orthogonal projection.

\begin{figure}[h]
\begin{center}
\begin{tikzpicture}
[scale=0.8,transform shape, mystyle/.style={draw,shape=circle,fill=black, inner sep=0pt, minimum size=4pt}]
\def\ngon{20}

\node[draw, regular polygon,regular polygon sides=\ngon,minimum size=6cm, fill=blue!20, opacity=0.5] (p) {};

\foreach\x in {1,...,\ngon}{
    \node[mystyle] (p\x) at (p.corner \x){};
}

\draw[red, thick, fill=red!30, opacity=0.5] (p.corner 1) -- (p.corner 11) -- (p.corner 12) -- cycle;

\foreach \x in {2,3,4,5,6,7,8,9,10,13,14,15,16,17,18,19,20} {
    \pgfmathsetmacro\angle{90+360/\ngon*(\x-1)}
    \node at ($(p.corner \x)+(\angle:0.4cm)$) {$\rho^{\succ_{\x}}_{\mathcal E_{\x}}$};
}

\pgfmathsetmacro\angleA{90+360/\ngon*(1-1)}
\pgfmathsetmacro\angleB{90+360/\ngon*(11-1)}
\pgfmathsetmacro\angleC{90+360/\ngon*(12-1)}

\node at ($(p.corner 1)+(\angleA:0.4cm)$) {$\rho^{\succ_1}_{\D}$};
\node at ($(p.corner 11)+(\angleB:0.4cm)$) {$\rho^{\succ_2}_{\D}$};
\node at ($(p.corner 12)+(\angleC:0.4cm)$) {$\rho^{\succ_3}_{\D}$};

\coordinate (q) at ($(p.corner 6)!0.2!(p.corner 16)$);

\path (p.corner 7);
\coordinate (proj) at ($(p.corner 6)!0.5!(p.corner 16)$);

\draw[thick] (q) -- (proj);

\draw ($(proj)+(-0.2,0.025)$) -- ++(0.029,0.22) -- ++(0.2,-0.029);

\node[circle,fill=black,inner sep=1pt,label=left:$\rho$] at (q) {};

\end{tikzpicture}
\caption{Random utility polytope (blue) and aggregated random utility polytope ${\mathcal P}_{\A}$ (red). The orthogonal projection illustrates the distance from $\rho$ to ${\mathcal P}_{\A}$ \label{fig:projection}}
\end{center}
\end{figure}


To see how the minimum distance can be calculated in our setup, remember that we have three alternatives $x, y, a_0$. Thus the number of linear orders is six; the number of coordinates  of $\rhoa$ (i.e., $(D,i)$ such that $i \in D\subseteq \{x,y,a_0\}$) is twelve. Then, the distance (\ref{eq:distance0}) can be calculated as follows: given $\rhoa \in \Re^{12}$, 
\begin{equation}\label{eq:distance}
\min_{\mu \in \Delta^5} \Big\| \rhoa - \sum_{i=1}^6 \mu_i \rho^{\succ_i}_{\D} \Big\|^2,
\end{equation}
where  $\rho^{\succ_i}_{\D}$ is defined by equation (\ref{rho_succ_e}). Here, $\Delta^5$ is the five dimensional simplex and an element of the simplex satisfies $\mu \in \mathbf{R}^6$, $\mu_i \geq 0$, and $\sum_i \mu_i = 1$.

\vspace{-0.2cm}
\subsubsection{Effect of composition distribution $\lambda$ on distance}\label{sec:ditance_lambda}
\vspace{-0.2cm}

\begin{figure}[ht]
    \centering
    
    \begin{subfigure}[t]{0.48\textwidth}
        \centering
\includegraphics[scale = 0.65]{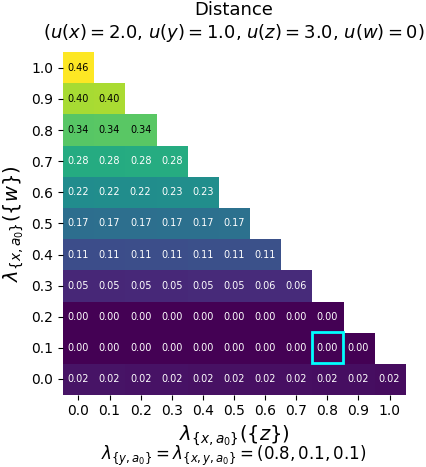}
\caption{Heatmap of the distance defined in (\ref{eq:distance}) across values of $\lambda_{\{x,a_0\}}$, with $\lambda_{\{y,a_0\}}$ and $\lambda_{\{x,y,a_0\}}$ fixed at $(0.8,0.1,0.1)$ and utilities fixed at $u(x)=2$, $u(y)=1$, $u(z)=3$, and $u(w)=0$. The horizontal axis reports $\lambda_{\{x,a_0\}}(\{z\})$, the vertical axis reports $\lambda_{\{x,a_0\}}(\{w\})$, and color intensity indicates the magnitude of the distance. The blue-outlined cell marks the independent case ($\lambda_{\{x,a_0\}}=\lambda_{\{y,a_0\}}=\lambda_{\{x,y,a_0\}}$); cells farther from the blue cell correspond to more menu-dependent cases, where the distance is larger.}\label{fig:distance_lam}
    \end{subfigure}
    \hfill
    \begin{subfigure}[t]{0.48\textwidth}
        \centering
\includegraphics[]{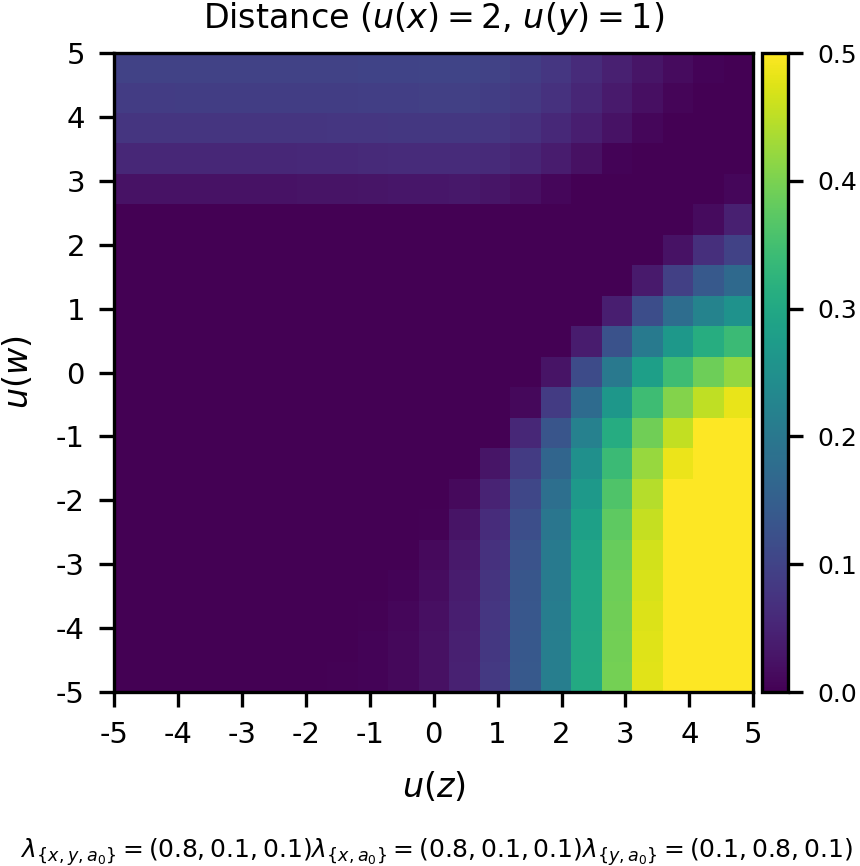}
\caption{Heatmap of the distance defined in (\ref{eq:distance}) across values of $u(z)$ and $u(w)$, with $u(x)=2$, $u(y)=1$, and composition distributions fixed at $\lambda_{\{x,y,a_0\}}=(0.8,0.1,0.1)$, $\lambda_{\{x,a_0\}}=(0.8,0.1,0.1)$, and $\lambda_{\{y,a_0\}}=(0.1,0.8,0.1)$. The horizontal axis reports $u(z)$, the vertical axis reports $u(w)$, and color intensity indicates the magnitude of the distance. The distance is largest in the lower-right corner and smaller along the diagonal.}\label{fig:distance_pref_struc}
    \end{subfigure}
    \end{figure}

In this simulation, we examine how the distance defined in (\ref{eq:distance}) varies with $\lambda$ while holding utility values fixed. We use the same utility and $\lambda$ specifications as in Subsection \ref{subsec:effect_lambda_bias}. Specifically, we vary $\lambda_{\{x,a_0\}}$ while fixing $\lambda_{\{x,y,a_0\}}$ and $\lambda_{\{y,a_0\}}$.

We compute the distance at each grid point. Figure~\ref{fig:distance_lam} reports the resulting heatmap.  Consistent with Proposition \ref{thm:independent}, the distance is zero in the menu-independent case (blue-outlined cell). As $\lambda$ moves farther from the menu independence, the distance increases.  We discuss the intuition underlying Figure~\ref{fig:distance_lam} in Section~\ref{sec:sim_intuition}.

\vspace{-0.2cm}
\subsubsection{Effect of preference structure on distance}\label{subsec:effect_pref_distance}
\vspace{-0.2cm}

Finally, we examine how the distance varies with the preference structure while keeping $\lambda$ fixed. The simulation setup follows Subsection~\ref{subsec:effect_pref_bias}: we fix $u(x)=2$ and $u(y)=1$, and vary $u(z)$ and $u(w)$ over a uniform grid.

We produced a heatmap with axes corresponding to $u(z)$ and $u(w)$ and color intensity represents the magnitude of the distance. Figure~\ref{fig:distance_pref_struc} shows  that the distance is largest in the lower-right corner, where the preference structure exhibits strong overlap, and smallest along the diagonal, where preferences are closer to non-overlapping, which is consistent with Proposition~\ref{thm:non-overlapping}. Section~\ref{sec:sim_intuition} in the online appendix provides further intuition for this figure.

\vspace{-0.2cm}
\section{Summary}
\vspace{-0.2cm}

We study stochastic choice over \emph{aggregates} when the analyst does not observe the underlying alternatives each aggregate represents and the composition can vary across menus. RUM defined over underlying alternatives describes consumer's actual choice among the alternatives available to them instead of analyst-defined aggregates. An ARUM, by contrast, is an empirical tool that enables tractable estimation by treating each aggregate as a primitive object of choice. The central question is whether choice frequencies over aggregates, generated by a RUM on the underlying alternatives, are also consistent with an ARUM. This paper addresses that question when the composition of each aggregate is heterogeneous across consumers, may vary across menus, and is unobservable to the analyst. 

We characterize the testable implications of RU-rationality across three frameworks and show that they are substantially weaker than those of ARU-rationality. In the canonical outside-option environment, Theorem~3.1 establishes that RU-rationality is equivalent to \emph{Limited Monotonicity} plus \emph{Partial RU-rationality}. Theorem~3.2 gives a vertex characterization: the RU-rational set is a polytope with ``menu-effect'' deterministic vertices, and it strictly contains the ARU polytope. When the aggregation correspondence is fixed, Theorem~3.3 shows a strict nesting $\text{ARU} \subsetneq RU(2) \subsetneq \cdots \subsetneq RU(|\A_A|)$, with stabilization at $RU(|\A_A|+1)$, and Theorem~3.4 provides an approximation bound for $RU(n)$ below this threshold. All these results extend to the multiple aggregates case and imply that  RU-rationality is far weaker than ARU-rationality. 

We then identify two independent conditions under which the use of an ARUM is justified. Propositions~4.1--4.2 show that RU-rationality implies ARU-rationality under either \emph{non-overlapping preferences} (within each aggregate, underlying alternatives are adjacent in every ranking in the support) or \emph{menu-independent composition}. These results provide practical guidance on how to construct aggregated datasets in ways that justify the use of ARUM. Finally, simulations based on an underlying logit model quantify the empirical relevance of the gap: both estimation bias and the Euclidean distance to the ARU polytope increase as composition becomes more menu-dependent or preferences become more overlapping, and the resulting biases can be large enough to reverse inferred preference orderings and substantially distort relative odds ratios.

\appendix
\vspace{-0.2cm}
\section{Proofs}
\vspace{-0.2cm}

For ease of notation, we will  assume throughout this section that each aggregate $a \in \A_A$ is composed of itself, that is $X(a)$ contains only $a$. Thus by abuse of notation, we may identify the atomic aggregates $\A_A$ and the corresponding set of underlying alternatives $\cup_{a \in \A_A} X(a)$.

\vspace{-0.2cm}
\subsection{Proof of Theorem \ref{thm:1} and Theorem \ref{thm:1multiple}}
\vspace{-0.2cm}

We provide a proof of Theorem \ref{thm:1multiple}, which implies Theorem \ref{thm:1} as a special case.\footnote{We also prove Theorem \ref{thm:1} independently in Section \ref{sec:altproofthm1} to provide additional intuition for the first step of the construction. The alternate proof uses only $|\A_N|+1$ underlying alternatives belonging to $X(a_0)$. }

The necessity is trivial. We show the sufficiency. Fix an aggregation correspondence $X$ such that $|X(a)| = |\A_A|+2$ for all $a \in \A_N$.  Such an aggregation correspondence exists by Richness of $\X$. For each $a \in \A_N$ and $y \in \A_A$, choose  an element of $X(a)$, denoted by $x_a(y)$. We also fix two alternatives $\overline{x}_a, \underline{x}_a \in X(a)$. Note that there exist such alternatives under the assumption that $|X(a)| = |\A_A| +2$ for all $a\in \A_N$. We again abuse notation and assume that each aggregate $a \in \A_A$ is composed of itself, that is $X(a) = \{a\}$.

By Partial RU-rationality, there exists a distribution $\tilde{\mu}$ over linear orders $\tilde{\succ}$ on $\A_A$ that rationalizes $\rhoa$ on subsets of $\A_A$.  Based on this distribution, we will construct a RUM $\rho$ on the entire set $\X$ of the underlying alternatives as follows. For each ranking on $\A_A$ in the support of $\tilde{\mu}$, we construct ranking on $\X$: we place $x_a(y)$ right above $y$, $\overline{x}_a$ and $\underline{x}_a$ at the bottom with $\overline{x}_a$s above $\underline{x}_a$s. The order among non-atomic aggregates is arbitrarily fixed. For example, let $\{a_0,a_1\}=\A_N$ and fix an order  by index. Also let $\A_A = \{y_0,y_1\}$ and $y_0\ \tilde{\succ}\ y_1$. Then, we define $\succ$ on $\X$ as follows: 
\[
x_0(y_0) \succ x_1(y_0) \succ y_0 \succ x_0(y_1) \succ x_1(y_1) \succ y_1 \succ\overline{x}_0\succ\overline{x}_1 \succ\underline{x}_0 \succ\underline{x}_1.
\]

Fix $D  \subseteq \A_A$ and $E \subseteq \A_N$. In the following, we will construct $\lambda \in \prod_{b \in D \cup E} (2^{X(b)} \setminus \{\emptyset\})$ to establish the following equality for all $b \in D \cup E$: 
\[
\sum_{(S_b)_{b \in D \cup E} \subseteq \prod_{b \in D \cup E} (2^{X(b)} \setminus \{\emptyset\})}\lambda((S_b)_{b \in D \cup E} ) \rhoX(\cup_{b \in D \cup E} S_b, S_b) =\rhoa(D \cup E,b).
\]
The goal is to construct a probability distribution $\lambda$ on $\prod_{b\in E} (2^{X(b)}\setminus \{\emptyset\})$ that satisfies 
\begin{equation}\label{eq:388}
\sum_{(S_b)_{b \in E} \in \prod_{b \in E} (2^{X(b)} \setminus \{\emptyset\})} \lambda((S_b)_{b \in E}) \rhoX(D \cup (\cup_{b\in E} S_b), y)=\rhoa(D\cup E,y) \text{ for all } y \in D,
\end{equation}
and
\begin{equation}\label{eq:392}
\sum_{(S_b)_{b \in E} \in \prod_{b \in E} (2^{X(b)} \setminus \{\emptyset\})} \lambda((S_b)_{b \in E}) \rhoX(D \cup (\cup_{b\in E} S_b), S_b)=\rhoa(D\cup E,b) \text{ for all } b \in E.
\end{equation}

The proof consists of two steps. At the first step, for each $a \in E$ we will construct a $\lambda^a$ that achieves (\ref{eq:388}) and puts the remaining choice probability on $a$. Given $\lambda^a$, in the second step, by taking a convex combination of $\lambda^a$, we will obtain the desired $\lambda$ that satisfy (\ref{eq:388}) and (\ref{eq:392}) for the given $D$ and $E$.

\noindent\textbf{Step 1: } We first label elements of $D$ by $y_i$ so that $\frac{\rhoa(D\cup E,y_0)}{\rhoa(D,y_0)}$ is maximum among $\frac{\rhoa(D\cup E,y_i)}{\rhoa(D,y_i)}$. We will construct a probability distribution $\lambda^a$ on $\prod_{b\in E} (2^{X(b)}\setminus \{\emptyset\})$ recursively. Initially, we  define $\lambda^a_0$ by  
\begin{equation}
\lambda^a_0(\overline{x}_a, (\underline{x})_{E\setminus a}) = \frac{\rhoa(D\cup E,y_0)}{\rhoa(D,y_0)},\quad \lambda^a_0( (X(b))_{b\in E}) = 1-\lambda^a_0(\overline{x}_a, (\underline{x})_{E\setminus a})
\end{equation}
and all other lambdas are zero. By Limited Monotonicity,  we have $\lambda^a_0(\overline{x}_a, (\underline{x})_{E\setminus a})\le 1$, thus $\lambda^a_0$ is a probability measure. Moreover, the probability that  $y_0$ is chosen equals $\rhoa(D \cup E,y_0)$ as desired:
 \begin{align*}
 &\sum_{(S_b)_{b \in E} \in \prod_{b\in E} (2^{X(b)}\setminus \{\emptyset\})} \lambda^a_0((S_b)_{b \in E}) \rhoX(D \cup (\cup_{b\in E}S_b), y_0)\\
&= \lambda^a_0(\overline{x}_a, (\underline{x})_{E\setminus a})  \rhoX(D \cup  \{\overline{x}_a, (\underline{x})_{E\setminus a}\},y_0) + \lambda^a_0((X(b))_{b \in E})   \rhoX(D\cup (\cup_{b \in E} X(b)),y_0)  \\
 &=  \lambda^a_0(\overline{x}_a, (\underline{x})_{E\setminus a})  
 \rhoX(D \cup  \{\overline{x}_a, (\underline{x})_{E\setminus a}\},y_0)
 \quad \quad \quad \quad \quad \quad \quad   (\because x_a(y_0) \succ y_0 \text{ for all }\succ\in supp (\muX)) \\
 &= \lambda^a_0(\overline{x}_a, (\underline{x}_a)_{E\setminus a})  \rhoa(D,y_0) \quad \quad \quad \quad \quad \quad \quad \quad (\because  y_0 \succ\overline{x}_a, \underline{x}_b \text{ for all }b \in E \setminus a \text{ and } \succ\in supp (\muX))\\
 &=\rhoa(D\cup E,y_0).
\end{align*}
Next we define $\lambda^a_1$ based on $\lambda^a_0$ to satisfy the equality for $\rhoa(D\cup E, y_1)$. In general at the $n$th step to obtain the desired equality for $\rhoa(D\cup E, y_n)$, let 
\[c_n = \frac{\rhoa(D\cup E,y_n)}{\rhoa(D,y_n) \lambda^a_0(\overline{x}_a, (\underline{x})_{E\setminus a})}.
\]
Note $0 \le c_n \le 1$ by our choice of $y_0$. We define $\lambda^a_n$ by 
$\lambda^a_n( (X(b))_{b \in E}) = \lambda^a_0((X(b))_{b \in E})$ and for all $(S_b)_{b \in E}\in supp (\lambda^a_{n-1})\setminus (X(b))_{b \in E}$
\[
\lambda^a_n( (S_b)_{b \in E}) = c_n\lambda^a_{n-1}((S_b)_{b \in E}),\quad 
\lambda^a_n(S_a \cup x_a(y_n) \ , \  (S_b)_{b \in E \setminus a}) = (1-c_n)\lambda^a_{n-1}((S_b)_{b \in E}).
\]
Note first that $\lambda^a_n$ is a probability measure given that $\lambda^a_{n-1}$ is a probability measure.   Note also this modification of $\lambda^a_n$ from $\lambda^a_{n-1}$ does not change the chance that $y_i$ is chosen for any $i  \le n-1$; this only changes the chance that $y_n$ is chosen because adding $x_a(y_n)$ changes only the choice frequency of $y_n$. Moreover, we obtain the desired equalities for the choice frequency of $y_n$ as follows:
\begin{align*}
& \sum_{(S_b)_{b \in E} \in \prod_{b \in E} (2^{X(b)}\setminus \{\emptyset\})} \lambda^a_n((S_b)_{b \in E}) \rhoX(D \cup (\cup_{b\in E} S_b), y_n)\\
& =\lambda^a_0((X(b))_{b \in E} )\rhoX(D \cup (\cup_{b\in E} X(b)), y_n) \\
&+
\sum_{(S_b)_{b \in E} \in supp(\lambda^a_{n-1})\setminus (X(b))_{b \in E}} \Big[ c_n\lambda^a_{n-1}((S_b)_{b \in E})\rhoX(D \cup (\cup_{b \in E}S_b) , y_n)\\
&\quad\quad\quad\quad\quad\quad\quad\quad\quad\quad\quad\quad  + (1-c_n)\lambda^a_{n-1}((S_b)_{b \in E})\rhoX(D \cup (\cup_{b \in E}S_b) \cup x_a(y_n), y_n) \Big] \\
&= \sum_{(S_b)_{b \in E} \in supp(\lambda^a_{n-1})\setminus (X(b))_{b\in E}}
c_n\lambda^a_{n-1}((S_b)_{b \in E})\rhoX(D \cup (\cup_{b \in E}S_b) , y_n) \\
&= \sum_{(S_b)_{b \in E} \in supp(\lambda^a_{n-1})\setminus (X(b))_{b \in E}} c_n\lambda^a_{n-1}((S_b)_{b \in E})\rhoa(D , y_n)\\
&=c_n \lambda^a_0(\overline{x}_a, (\underline{x})_{E\setminus a})\rhoa(D , y_n)\\
&= \rhoa(D\cup E,y_n),
\end{align*}
where the second equality holds because $x_a(y_n) \succ y_n$ for all $\succ\in supp (\muX)$; the second to the last equality holds because $\sum_{(S_b)_{b \in E} \in supp(\lambda^a_{n-1})\setminus (X(b))_{b \in E}}\lambda^a_{n-1}((S_b)_{b \in E})=1-\lambda^a_{n-1}((X(b))_{b \in E})=1-\lambda^a_{0}((X(b))_{b \in E})=\lambda^a_{0}(\overline{x}_a, (\underline{x})_{E\setminus a})
$; and finally the third to the last equality holds because $y_n$ is a maximum element among $D$ with respect to $\succ$ if and only if  $y_n$ is a maximum element among $D\cup (\cup_{b \in E} S_b)$ with respect to $\succ$ (Note that this equivalence holds because $S_b\subseteq \{x_b(y_i)| i < n\}\cup\{\underline{x}_b, \overline{x}_b| b \in \A_N\}$ and $x_b(y_i)$ is the immediate predecessor of $y_i$ for all $ \succ\in supp(\muX)$). It follows that $\lambda^a$ satisfies (\ref{eq:388}) as desired: for all $y \in D$
\begin{equation}\label{eq:la_a_1}
\sum_{(S_b)_{b \in E} \in \prod_{b\in E} (2^{X(b)}\setminus \{\emptyset\})}\lambda^a((S_b)_{b \in E})  \rhoX(D\cup (\cup_{b\in E} S_b), y)=\rhoa(D \cup E, y).
\end{equation}

\noindent\textbf{Step 2:}  In the following, given $(\lambda^a)_{a \in \A_N}$, we will construct $\lambda$ that satisfies (\ref{eq:392}) as well as (\ref{eq:388}).  First observe 
 that for all $a \in \A_N$,
\begin{eqnarray}\label{eq:185}
\sum_{(S_b)_{b \in E} \in \prod_{b \in E} (2^{X(b) }\setminus \{\emptyset\})}\lambda^a((S_b)_{b \in E})  \rhoX(D\cup (\cup_{b\in E} S_b),S_b)=
\left\{
\begin{array}{llll}
1-\sum_{y \in D}\rhoa(D\cup E,y) &\text{ if }b=a,\\
0 &\text{ if $b \neq a$}.
\end{array}
\right.
\end{eqnarray}
We denote the left hand side of the above equation as $ \rhoa^a(D\cup E,b)$.  Define 
\[
\lambda= \sum_{a \in E}\lambda^a \Bigg(\dfrac{\rhoa(D\cup E,a)}{1-\sum_{y \in D}\rhoa(D\cup E,y)}\Bigg).
\]
Since $\sum_{a \in E}\rhoa(D\cup E,a)=1-\sum_{y \in D}\rhoa(D\cup E,y)$ and each $\lambda^a$ is a probability distribution on  $\prod_{a \in E}2^{X(a)}\setminus \{\emptyset\}$, so is $\lambda$.  Moreover, we have the desired equalities for choice frequency of each aggregate $b \in E$ as follows:
\begin{align*}
&\sum_{(S_b)_{b \in E} \in \prod_{b\in E} (2^{X(b)}\setminus \{\emptyset\}) }\lambda((S_b)_{b \in E})\rhoX(D\cup (\cup_{b\in E} S_b),S_b)\\
&=\sum_{(S_b)_{b \in E} \in \prod_{b\in E} (2^{X(b)}\setminus \{\emptyset\}) }\Bigg(
\sum_{a \in E}\lambda^a((S_b)_{b \in E})  \dfrac{\rhoa(D\cup E,a)}{1-\sum_{y \in D}\rhoa(D\cup E,y)}\Bigg) \rhoX(D\cup (\cup_{b\in E} S_b),S_b)
\\
&=\sum_{a\in E} \Bigg(\dfrac{\rhoa(D\cup E,a)}{1-\sum_{y \in D}\rhoa(D\cup E,y)}\Bigg)
\sum_{(S_b)_{b \in E} \in \prod_{b\in E} (2^{X(b)}\setminus \{\emptyset\})}\lambda^a((S_b)_{b \in E})  \rhoX(D\cup (\cup_{b\in E} S_b),S_b)\\
&=\sum_{a \in E} \Bigg(\dfrac{\rhoa(D\cup E,a)}{1-\sum_{y \in D}\rhoa(D\cup E,y)}\Bigg)\rhoa^a(D\cup E,b) \quad (\because \text{Definition of }\rhoa^a(D\cup E,b))\\
&= \Bigg(\dfrac{\rhoa(D\cup E,b)}{1-\sum_{y \in D}\rhoa(D\cup E,y)}\Bigg)\rhoa^b(D\cup E,b)+ \Bigg(\sum_{a \in E \setminus b} \dfrac{\rhoa(D\cup E,a)}{1-\sum_{y \in D}\rhoa(D\cup E,y)}\Bigg)\rhoa^a(D\cup E,b)\\
&=\rhoa(D\cup E,b). \quad\quad (\because (\ref{eq:185}))
\end{align*}
Thus, $\lambda$ satisfies (\ref{eq:392}). Finally, to show that $\lambda$ satisfies (\ref{eq:388}), choose  $z \in D$. Then, we have
\begin{align*}
&\sum_{(S_b)_{b \in E} \in \prod_{b\in E} (2^{X(b)}\setminus \{\emptyset\}) }\lambda((S_b)_{b \in E})\rhoX(D\cup (\cup_{b\in E} S_b),z)\\
&=\sum_{(S_b)_{b \in E} \in \prod_{b\in E} (2^{X(b)}\setminus \{\emptyset\}) }\Bigg(
\sum_{a \in E}\lambda^a((S_b)_{b \in E})  \dfrac{\rhoa(D\cup E,a)}{1-\sum_{y \in D}\rhoa(D\cup E,y)}\Bigg) \rhoX(D\cup (\cup_{b\in E} S_b),z)
\\
&=\sum_{a\in E} \Bigg(\dfrac{\rhoa(D\cup E,a)}{1-\sum_{y \in D}\rhoa(D\cup E,y)}\Bigg)
\sum_{(S_b)_{b \in E} \in \prod_{b\in E} (2^{X(b)}\setminus \{\emptyset\})}\lambda^a((S_b)_{b \in E})  \rhoX(D\cup (\cup_{b\in E} S_b),z)\end{align*}
\begin{align*}
&=\sum_{a\in E} \Bigg(\dfrac{\rhoa(D\cup E,a)}{1-\sum_{y \in D}\rhoa(D\cup E,y)}\Bigg)\rhoa(D \cup E, z)\quad (\because (\ref{eq:la_a_1}))\\
&=\rhoa(D \cup E, z),
\end{align*}
where the last equality holds because $\sum_{a\in E} \rhoa(D\cup E,a)=1-\sum_{y \in D}\rhoa(D\cup E,y)$.

\vspace{-0.2cm}

\subsection{Proof of Theorem \ref{thm:vertex}}\label{sec:proof_vertex}
\vspace{-0.2cm}

First we prove that the set of RU rational choice functions coincides with $\co\{\rho^{\succ}_{\E} \mid \succ \in \L(\A) \text{ and } \E \subseteq  \D \}$.
 For all $\succ \in \L(\X)$ and $ (S_{(A,a)})_{a \in A \in \D}  \in \prod_{(A,a): a\in A \in \D}(2^{X(a)} \setminus \{\emptyset\})$, define $\rho^{\succ}_{(S_{(A,a)})_{a \in A\in \D}}$ as follows: 
\[
\rho^{\succ}_{(S_{(A,a)})_{a\in A\in \D}}(B,b) := 1\Big(\exists x \in S_{(B,b)} \forall y \in \cup_{a \in B \setminus b} S_{(B,a)}, [ x \succ y]\Big)
\]
for all $(B,b)$ such that $b \in B \in \D$; thus, $\rho^{\succ}_{(S_{(A,a)})_{a\in A\in \D}}$ is a choice function where on menu $A$, the aggregate $a$ means $S_{(A,a)}$ for all pairs $(A,a)$ such that $a \in A \in \D$.

\noindent\textbf{Step 1}: For each $\succ \in \L(\X)$ and each collection $(S_{(A,a)})_{a \in A \in \D}$, there exists $\succ_\A \in \L(\A)$ and $\E \subseteq \D$ such that $\rho^{\succ}_{(S_{(A,a)})_{a \in A \in \D}}=\rho^{\succ_\A}_{\E}$.

\begin{proof} Fix each $\succ \in \L(\X)$ and each collection $(S_{(A,a)})_{a \in A \in \D}\in \prod_{(A,a): a\in A \in \D}(2^{X(a)} \setminus \{\emptyset\})$, 
define ${\cal F}=\Big\{A \in \D\ \Big|\rho^{\succ}_{(S_{(A,a)})_{a \in A \in \D}}(A,a_0)=1\Big\}$. That is, $\mathcal{F}$ consists of all choice sets $A$ such that $a_0$ is chosen with the probability one. 

Define $\succ_\A \in \L(\A_A)$ as follows: $a \succ_\A b$ if and only if $x \succ y$ for any $a, b\in \A_A$, where $\{x\}=X(a)$ and $\{y\}=X(b)$. Define the rest of the linear order arbitrarily. For any $D \in \D$, if $D \in \E^c \equiv {\cal F}$, the maximum elements of $\succ$ over $D$ belongs to  $S_{(D,a_0)}$; thus $\rho^{\succ}_{(S_{(A,a)})_{a\in A\in \D}}(D,a_0)=1=\rho^{\succ_\A}_{\E}(D,a_0)$. Moreover, if $D \in \E$, the maximum elements of $\succ$ over $D$ does not belong to  $S_{(D,a_0)}$; thus $\rho^{\succ}_{(S_{(A,a)})_{a\in A\in \D}}(D,b)=1 = \rho^{\succ_\A}_{\E}(D,b)$ for some $b \in D\setminus a_0$. It follows that $\rho^{\succ}_{(S_{(A,a)})_{a\in A\in \D}}=\rho^{\succ_\A}_{\E}$
\end{proof}

In the next step we show the special case where the preference distribution is degenerate.

We denote $\rho$ rationalized with $(\muX,\lambda)$ by $\rho_{(\muX,\lambda)}$ in the followings.

\noindent\textbf{Step 2}: Suppose that $\rhoa$ is rationalized by $(\delta_{\succ},\lambda)$.   Then $\rhoa$ is in the convex hull of choice functions of the form $\rho^{\succ}_{(S_{(A,a)})_{a \in A \in \D}}$. 
\begin{proof}
It is easy to see that $\rho_{(\delta_{\succ},\lambda)}$ is linear in $\lambda$, that is, for any $\alpha \in [0,1]$, composition distributions $\lambda$, $\lambda'$ such that $\lambda'' = \alpha\lambda +(1-\alpha)\lambda'$, and $\succ \in \L(\X)$ we have,  $\alpha\rho_{(\delta_{\succ},\lambda)} +(1-\alpha)\rho_{(\delta_{\succ},\lambda')} = \rho_{(\delta_{\succ},\lambda'')}$.

Now notice that $\rho^{\succ}_{(S_{(A,a)})_{a \in A \in \D}} = \rho_{(\delta_{\succ},\lambda)}$ where $\lambda_A((S_{(A,a)})_{a \in A})= 1$ for all $A$. That is, $\lambda$ puts probability $1$ on the outside option meaning $S_{A}$ when the menu is $A$. These are exactly the degenerate composition distributions. Since every $\lambda$ is the convex combination of degenerate $\lambda$s, this concludes the proof of the step.
\end{proof}

\noindent\textbf{Step 3:} Every RU-rational $\rho$ is in the convex combination of choice functions rationalized by $(\delta_\succ, \lambda)$.

\begin{proof}
Similar to Step 2, $\rho_{(\muX,\lambda)}$ is linear in $\muX$ in the sense that if $\alpha\muX+(1-\alpha) \muX' = \muX''$ then   $\alpha\rho_{(\muX,\lambda)}+(1-\alpha)\rho_{(\muX',\lambda)}= \rho_{(\muX'',\lambda)}$ for a fixed $\lambda$. Thus, since every $\muX$ is the convex combination of degenerate preference distributions $\delta_\succ$, it follows that every RU-rational choice function $\rho$ is the convex combination of choice functions rationalized by degenerate preference distributions $\rho_{(\delta_\succ,\lambda)}$ for fixed $\lambda$.
\end{proof}

Combining the three steps, we obtain that the set of RU-rational choice functions is a subset of $\co\{\rho^{\succ}_{\E} |  \succ \in \L(\A)\text{ and }\E \subseteq 2^\A \}$.\footnote{This holds regardless of the size of $X(a_0)$.} Now we show the other inclusion. By  Theorem \ref{thm:1}, it can be shown that the set  of RU-rational choice functions is convex since the set of choice functions satisfying Limited Monotonicity is convex, the set of Partial-RU choice functions is convex, and the intersection of convex sets is convex. Since $\rho^{\succ}_{\E}$ is RU-rational for all $\succ \in \L(\A)$ and $\E \in \D$ (see Remark \ref{rem:ru_rational}), thus its convex combination is also RU rational; thus the set $\co\{\rho^{\succ}_{\E} |  \succ \in \L(\A)\text{ and }\E \subseteq 2^\A \}$ is contained by the set of RU-rational choice functions. Therefore, the RU rational set coincides with  $\co\{\rho^{\succ}_{\E} |  \succ \in \L(\A)\text{ and }\E \subseteq 2^\A \}$.

 To finish the proof of Theorem \ref{thm:vertex}, we must show that each $\rho^\succ_\E$ is a vertex. Notice that $\rho^{\succ}_{\E}$ cannot be represented as a convex combination of the other RU rational $\rho$s as the values of the vector $\rho^{\succ}_{\E}$ are all $0$ and $1$ and all RU rational vectors are nonnegative. This means that all $\rho^{\succ}_{\E}$ are vertices of RU rational polytope. The fact that the RU rational set is $\co\{\rho^{\succ}_{\E} |  \succ \in \L(\A)\text{ and }\E \subseteq 2^\A \}$ imply that there are no other vertices.

\vspace{-0.2cm}
\subsection{Proof of Theorem \ref{thm:nests}}\label{subsect:proof_number}
\vspace{-0.2cm}

The weak inclusions $RU(n-1) \subseteq RU(n)$ are immediate since when $|X(a_0)| = n$, we can restrict $\lambda$ to put zero weight on subsets containing some fixed element of $X(a_0)$. The rest of the proof is is devoted to showing which inclusions are equalities and which inclusions are strict.

\noindent\textbf{Step 1:} For any $n \ge |\A_A|+1$, $RU(n)=RU(n+1)$ and $RU(n)$ coincides with the set of $\rho$ satisfying Limited Monotonicity and Partial RU-rationality.  
\begin{proof}
The set of choice functions satisfying Limited Monotonicity and Partial-RU rationality is equal to $\bigcup_{n=2}^\infty RU(n)$ by Theorem \ref{thm:1}. Furthermore, in the alternate proof of Theorem \ref{thm:1} in Section \ref{sec:altproofthm1} in the online appendix, we shows that it suffices to use $|\A_A|+1$ number of underlying alternatives belonging to $X(a_0)$. Thus, $RU(|\A_A|+1)$ is also equal to the set of choice functions satisfying Limited Monotonicity and Partial-RU rationality. Combining these facts we see that $RU(|\A_A|+1) \supseteq RU(n)$ for all $n$. Combined with the other inclusion, the step follows.
\end{proof}

\noindent \textbf{Step 2: } $RU(|\A_A|) \subsetneq RU(|\A_A|+1)$. 

\begin{proof}  

It suffices to show that there exists $\rho$ that is in $RU(|\A_A|+1)$ but not $RU(|\A_A|) $. We construct it explicitly.

Label the atomic aggregates $\A_A = \{y_1,y_2,...,y_{|\A_A|}\}$. We will abuse notation and refer to the aggregate and the underlying alternative by the same symbol for atomic aggregates. Define $\rho$ in the following way: for nonempty $D \subseteq \A_A$, let $\rho(D,y_i) = \frac{1}{|D|}$ for all $y_i \in D$. Now for each $n$ such that $1 \le n \le |\A_A|$ let $D_n = \{y_1,...,y_n\}$. Let $\rho(D_n \cup \{a_0\} , y_n) = 0$ and $\rho(D_n \cup \{a_0\} , y_i) = \frac{1}{n}$ for $i<n$ and  $\rho(D_n \cup \{a_0\} , a_0) = \frac{1}{n}$ For all other nonempty $D \subseteq \A_A$, let $\rho(D \cup \{a_0\},y_i) = \rho(D,y_i)$ when $y_i \in D$ and $\rho(D \cup \{a_0\},a_0) = 0$

Note that $\rho$ satisfies Limited Monotonicity by the definition; it also satisfies Partial-RU  as $\rho$ is a uniform choice on for all $D \subset \A_A$. Thus $\rho \in RU(|\A_A|+1)$ by Theorem \ref{thm:1}. It only remains to show that $\rho \notin RU(|\A_A|)$.

Suppose for a contradiction that $\rho$ is rationalized by $(\mu_{\X}, \lambda)$ with $|X(a_0)| = |\A_A|$. Since $\rho(\A_A, y_i) = \frac{1}{|\A_A|}$ for all $y_i$, the distribution $\mu_{\X}$ assigns probability $\frac{1}{|\A_A|}$ to the event that $y_i$ is ranked above all other atomic aggregates. In particular, for each $y_i$, there exists a ranking $\succ_i$ in the support of $\mu_{\X}$ such that $y_i = \max_{\A_A} \succ_i$.

Now consider $D_1 = \{y_1\}$. Since $\rho(D_1 \cup \{a_0\}, y_1) = 0$, alternative $y_1$ is never chosen from $\{y_1, a_0\}$. But under $\succ_1$, alternative $y_1$ is ranked above all other atomic aggregates. For $y_1$ to have zero choice probability from $\{y_1, a_0\}$, there must exist some $x_1 \in X(a_0)$ such that $x_1 \succ_1 y_1$ and $\lambda_{\{y_1, a_0\}}$ assigns positive weight to subsets of $X(a_0)$ containing $x_1$.

Next, consider $D_2 = \{y_1, y_2\}$. Since $\rho(D_2 \cup \{a_0\}, y_1) = \rho(D_2, y_1) = \frac{1}{2}$, adding $a_0$ does not reduce the choice probability of $y_1$. This implies that $\lambda_{D_2 \cup \{a_0\}}$ puts zero weight on subsets of $X(a_0)$ containing $x_1$; otherwise, under $\succ_1$, the element $x_1$ would be chosen over $y_1$, reducing its choice probability below $\frac{1}{2}$. At the same time, $\rho(D_2 \cup \{a_0\}, y_2) = 0$. Since $\succ_2$ ranks $y_2$ above all other atomic aggregates, there must exist some $x_2 \in X(a_0)$ with $x_2 \succ_2 y_2$ such that $\lambda_{D_2 \cup \{a_0\}}$ assigns positive weight to subsets containing $x_2$. Moreover, $x_2 \neq x_1$, since $\lambda_{D_2 \cup \{a_0\}}$ puts zero weight on subsets containing $x_1$.

Proceeding inductively, at each step $n$ we obtain a new element $x_n \in X(a_0)$, distinct from $x_1, \ldots, x_{n-1}$, such that $x_n \succ_n y_n$ under some ranking $\succ_n$ in the support of $\mu_{\X}$ where $y_n = \max_{\A_A} \succ_n$. After $|\A_A|$ steps, we have identified $|\A_A|$ distinct elements $x_1, \ldots, x_{|\A_A|}$ of $X(a_0)$. Since $|X(a_0)| = |\A_A|$, these exhaust $X(a_0)$. Therefore, every element of $X(a_0)$ is ranked above some atomic aggregate (and hence above all atomic aggregates under the corresponding ranking) with positive probability under $\mu_{\X}$.

But this implies that $\rho(\{y_2, a_0\}, a_0) > 0$, since with positive probability an element of $X(a_0)$ is ranked above $y_2$. This contradicts $\rho(\{y_2, a_0\}, a_0) = 0$, and we conclude that $\rho \notin RU(|\A_A|)$.

\end{proof}

\noindent\textbf{Step 3: }$RU(M) \subsetneq RU(M+1)$ for any $M$ such that $2 \le M <|\A_A|+1$.

\begin{proof} Fix $M$ such that $2 \le M <|\A_A|+1$. The weak inclusion is again immediate so we just need an example of $\rho \in RU(M+1)$ but $\rho \notin RU(M)$. 


Fix $M$ and partition the elements of $\A_A$ into the first $M$ elements, $D_M = \{y_1,...,y_M\}$ and the remaining elements $D_{-M} = \{y_{M+1},...,y_{|\A_A|}$\}. Now, define $\rho$ restricted to the subsets of $D_M \cup \{a_0\}$ as in the previous example, as if $D_M$ is the set of all atomic aggregates. Consider $\rho$ as in the previous step by restricting the set of all atomic aggregates  to $D_M$. 
By the exact same logic as in Step 2,   $\rho$ is not rationalizable on $D_M \cup \{a_0\}$ when $|X(a_0)| = M$, but it is rationalizable on $D_M \cup \{a_0\}$ when $|X(a_0)|=M+1$. 

It remains to extend $\rho$ to menus that contain elements of $D_{-M}$ while preserving that  $\rho$ belongs to $RU(M+1)$ but not to $RU(M)$. Note that for any extension, we remains to have  $\rho \notin RU(M)$ because if $\rho$ is not rationalizable on the restricted domain, then it is not not rationalizable on the full  domain. 

We extend $\rho$ as follows: If $D$ contains any $y_i \in D_{-M}$, then $\rho$ always chooses the one with the highest index, that is, $\rho(D,y_{j}) = 1$ where $j = \max\{i| y_i \in D\}$. To see exactly why this is rationalizable, note that there exists a distribution over linear orders  over $D_M \cup X(a_0)$ and $\lambda$ that rationalizes $\rho$ on the restricted domain. We simply extend the preference distribution by adding the elements of $D_{-M}$ to the top of each linear order in the support. 
\end{proof}

\vspace{-0.5cm}

\subsection{Proofs of Lemma \ref{lem:subpolys} and Theorem \ref{thm:errorbound} }\label{sec:caratheodoryproof}
\vspace{-0.2cm}

As mentioned in the body of the paper,  we use Approximate Carath\'{e}odory theorem. (See Theorem 0.0.2 of \cite{vershynin}).

\begin{theorem}{(Approximate Carath\'{e}odory)}\label{thm:approxcaratheodory} Consider a set $T \subseteq \mathbf{R}^N$ contained by the unit Euclidean ball. Then for every $x \in co.(T)$ and $k \in \mathbf{N}$ there exists $x_1,...,x_k \in T$ such that $\Big\|x - \frac{1}{k}\sum_{i=1}^k x_i\Big\| \le \frac{1}{\sqrt{k}}$ where $||\cdot||$ is the Euclidean distance.
\end{theorem}

We also use the next result that follows directly from Theorem \ref{thm:1} and \ref{thm:vertex}:

\begin{corollary}\label{coro:vertexhalfspace}
    A stochastic choice function $\rho$ satisfies Limited Monotonicity and Partial-RU if and only if $\rho \in \co\{\rho^{\succ}_{\E} \mid \succ \in \L(\A) \text{ and } \E \subseteq  \D \}$.
\end{corollary}

\vspace{-0.5cm}
\subsubsection{Proof of Lemma \ref{lem:subpolys}}
\vspace{-0.2cm}

Let $n= |X(a_0)|$ and $\rho = \sum_{i=1}^{n-1} \alpha_i \rho^{\succ_i}_{\E_i}$ for some  weights $\sum_{i=1}^{n-1} \alpha_i = 1$. Label $X(a_0) = \{\underline{x},x_1,...,x_{n-1}\}$. Now, for $\succ_i \in \L(\A)$, define $\tilde{\succ}_i \in \L(\X)$ by putting $x_i$ above all other elements of $\X$ (i.e., $x_i$ is the best element) and for all $j \neq i$, put $x_j$ at the bottom of $\X$ (ranked below all other elements of $\X$), with comparisons between $x_j$ and $x_{j'}$ arbitrary for any $j' \neq i, j$. Also, for $a,b \in \A_A$ let $a \ \tilde{\succ}_i \ b$ if and only if $a \succ_i b$. Finally, for $a \in \A_A$, let $\underline{x} \ \tilde{\succ}_i \ a$ if and only if $a_0 \succ_i a$. All other comparisons may be arbitrary.

 In this way, $\tilde{\succ}_i$ is the same as $\succ_i$ except $\underline{x}$ plays the role of $a_0$ and $x_i$ is at the top. Thus if $a_0$ is composed of only $\underline{x}$, then the behavior follows $\succ_i$. If $a_0$ has $x_i$ in its composition, then whenever the agent's preference is $\tilde{\succ}_i$, the agent deviates to $a_0$.

Now let $\mu_\X(\tilde{\succ}_i) = \alpha_i$ for all $i$ and let $\lambda_A(\{\underline{x}\} \cup \{x_i| A \notin \E_i\}) = 1$ for all $A \in \D$. The pair $(\mu_\X,\lambda)$ rationalizes $\rho$. To see this, let $\rho_{(\mu_\X,\lambda)}$ be the RU-rational stochastic choice function represented with$(\mu_\X,\lambda)$ and  let $S_{a_0} = \{\underline{x}\} \cup \{x_i| A \notin \E_i\}$. Observe that for each $A \subseteq \A$ and an atomic aggregate  $a \in A$, $\rho_{(\mu_\X,\lambda)} (A,a) =  \sum_i \alpha_i 1(a \ \tilde{\succ}_i \  x \ \forall x \in S_{a_0}) = \sum_i \alpha_i \rho^{\succ_i}_{\E_i}(A,a)$, where the first equality holds by the definitions of $\mu_{\X}$ and $\lambda$  and the second equality holds by $\rho^{\succ_i}_{\E_i}(A,a)=1(a \ \tilde{\succ}_i \  x \ \forall x \in S_{a_0})$.

\vspace{-0.2cm}
\subsubsection{Proof of Theorem \ref{thm:errorbound}}

Fix $\rho$ satisfying Limited Monotonicty and Partial-RU and integer $n$ satisfying $ 2  < n  < |\A_A|+1$. Let $T$ be the set of $\frac{\rho^\succ_\E}{\sqrt{|\D|}}$ for all $\succ \in \L(\A)$ and $\E \subseteq \D$. It is easy to see that $T$ is contained by the unit ball. Furthermore, by Corollary \ref{coro:vertexhalfspace}, it follows that $\frac{\rho}{\sqrt{|\D|}} \in co.(T)$. By Theorem \ref{thm:approxcaratheodory}, there exists exists a sequence $(\succ_i,\E_i)_{i=1}^{n-1}$ such that, $\big|\big|\frac{\rho}{\sqrt{|\D|}} - \frac{1}{k}\sum_{i=1}^{n-1} \frac{\rho^{\succ_i}_{\E_i}}{\sqrt{|\D|}} \big|\big| \le \frac{1}{\sqrt{n-1}}$. By Lemma \ref{lem:subpolys}, $\rho' \equiv \frac{1}{k}\sum_{i=1}^{n-1} \frac{\rho^{\succ_i}_{\E_i}}{\sqrt{|\D|}}$ is in $RU(n)$. Thus, the inequality simplifies to $\big|\big|\frac{\rho}{\sqrt{|\D|}} - \frac{\rho'}{\sqrt{|\D|}} \big|\big| \le \frac{1}{\sqrt{n-1}}$, which we rewrite simply as, $\frac{1}{\sqrt{|\D|}}||\rho - \rho' || \le \frac{1}{\sqrt{n-1}}$. Squaring both sides gives the desired inequality.

\vspace{-0.2cm}
\subsection{Proof of Proposition \ref{thm:non-overlapping}}
\vspace{-0.2cm}

To prove Proposition \ref{thm:non-overlapping}, we first provide a lemma:

\begin{lemma}\label{lem:non-overlapping} Let $\rhoa$ be a stochastic choice function. (a) Suppose that  $(\muX, \lambda)$  rationalizes $\rhoa$ and $\muX$ is non-overlapping. Then for any composition distribution $\lambda'$, $(\muX, \lambda')$ rationalizes $\rhoa$;\\ (b) Suppose that there is no non-overlapping $\muX$ such that $(\muX, \lambda)$ rationalizes $\rhoa$ for some $\lambda$.  Then, for any composition distribution $\lambda'$, there does not exist non-overlapping $\muX'$ such that $(\muX', \lambda')$ rationalizes $\rhoa$.
\end{lemma}

We provide the proof of the lemma in Section \ref{sec:proof-non-overlapping} of the online appendix. By using the lemma, we now prove the proposition below.

\noindent\textbf{Only if direction:} It suffices to show this for the case where $\muX$ is deterministic. Suppose $\muX$ is  non-overlapping and $\muX(\succ) = 1$ for some $\succ\in \L(\X)$ and fix $\lambda$.  Label the elements of $\A $ as $a_1,a_2,...a_{|\A|}$ in such a way that if $i<j$ then $x \succ y$ for all $x \in X(a_i)$ and $y \in X(a_j)$. This labeling is possible by the non-overlapping condition.\\
Let $\succ_{\A}$  be defined such that $a_1 \ \succ_{\A} \ a_2 \ \succ_{\A} \ \cdots \succ_{\A} a_{|\A|}$.

It is easy to see that if $D \subseteq \A$ contains any element of $\A$ then $\rhoa$ puts probability $1$ on the $a_i$ with the smallest index. Thus, $\muA$ rationalizes $\rhoa$.

\noindent\textbf{If direction:} There exists $\muA \in \Delta ({\mathscr L}( {\mathscr A}))$ that rationalizes $\rhoa$. 
For each $\succ \in \text{supp }\muA$ consider a ranking $\succ'\in {\mathcal L}(\X)$ that extends $\succ$ by $x \succ' y$ if  $x \in X(a)$ and $y \in X(b)$ for some  $a \succ b$ and fill the order among $X(a)$ arbitrarily for all $a$; then define $\muX(\succ')=\muA(\succ)$. Also define $\lambda$ arbitrarily. Since $\muX$ is non-overlapping, it follows from statement (a)  of Lemma \ref{lem:non-overlapping} that  $(\muX,\lambda)$ rationalizes $\rhoa$.

\vspace{-0.2cm}
\subsection{Proof of Proposition \ref{thm:independent}}
\vspace{-0.2cm}

\noindent\textbf{If direction:} If  $\rhoa$ is rationalizable by an ARUM with $\muA$ over $\A$, then as we showed in Proposition \ref{thm:non-overlapping} that there exists $(\muX,\lambda)$ that rationalizes $\rhoa$ and $\muX$ is non-overlapping. Moreover, as proved in Lemma \ref{lem:non-overlapping} (a), for non-overlapping representation any $\lambda$ works; so we can choose an menu-independent $\lambda$ as desired.

\noindent\textbf{Only if direction:} Suppose that $\rhoa$ is RU-rational with $(\muX, \lambda)$, where $\lambda$ is menu-independent.  Let $\rhoX$ be the random utility function  defined on $\X$ associated with $\muX$. When $\lambda$ is menu-independent,  representation (\ref{eq:rationalization}) in Definition \ref{def:rationalization}  can be simplified as in the following lemma.

\begin{lemma}\label{lem:ind} If $\rho$ is rationalized by $(\muX,\lambda)$ where $\lambda$ is menu-independent, then for all $A \subseteq {\mathscr A}$ and $a \in A$, $\rhoa(A,a)=  \sum_{(S_b)_{b \in {\mathscr A}}} 
\lambda ((S_b)_{b \in \A})$ $\rho_{\X}(\cup_{b \in A} S_b, S_a)$, where $\rho_{\X}(\cup_{b \in A} S_b, S_a) = \muX(\succ\in \L (\X) | \exists x \in S_a, \ \forall y \in \cup_{b \in A \setminus a} S_b, \ x \succ y)$.\footnote{$\lambda$ here is the unconditional distribution guaranteed to exists by the menu independence.}
\end{lemma}

We provide the proof of the lemma in Section \ref{sec:proof_ind} in the online appendix.  Lemma \ref{lem:ind} means that the reduced  stochastic choice function $\rhoa$ is a convex combination of $\rho_{\X}$ for each composition of the aggregates. By using  the lemma, we now complete the proof of the only if direction.
 
 Given $(S_a)_{a \in {\mathscr A}} \subseteq \prod _{a \in {\mathscr A}} X(a) $, for each $(B,b)$ such that $b \in B \subseteq {\mathscr A}$ define $\rho_{(S_a)_{a \in {\mathscr A}}}(B,b)=\rhoX(\cup_{a \in B} S_a, S_b)$.\footnote{Note that $\rho_{(S_a)_{a \in {\mathscr A}}}$ is a data set where $a$ always means $S_a$.} 
Given $(S_a)_{a \in {\mathscr A}} \subseteq \prod _{a \in {\mathscr A}} 2^{X(a)}\setminus \{\emptyset\} $, for each $\succ\in \mathcal{L} (\X)$ define $\succ_{(S_a)_{a \in \A}}$ by $a \succ_{(S_a)_{a \in \A}} b$ if and only if there exists $x \in S_a$ such that $x \succ y$ for all $y \in S_b$. Note that $\succ_{(S_a)_{a \in \A}}$ is a linear order on ${\mathscr A}$. Now we define a probability measure $\mu_{(S_a)_{a \in \A}}$ over linear orders on ${\mathscr A}$ as follows: for any linear order $\succ'$ on ${\mathscr A}$,  define 
 $\mu_{(S_a)_{a \in \A}}(\succ') = \muX(\succ \in  {\mathscr L}(\X)|\succ_{(S_a)_{a \in \A}}=\succ' )$.

Note that, by definition, $\mu_{(S_a)_{a \in \A}}$ rationalizes $\rho_{(S_a)_{a \in \A}}$, that is for any $(A,a)$ such that $a \in A \subseteq \A$, $
\rho_{(S_a)_{a \in \A}}(A,a)=\mu_{(S_a)_{a \in \A}}(\succ' \in {\mathscr L }({\mathscr A})|a \succ' b, \text{ for all } b \in A\setminus a)$. Now let $\muA = \sum_{(S_a)_{a \in {\mathscr A}}} \lambda((S_a)_{a \in \A}) \mu_{(S_a)_{a \in \A}}$. Then for all $(A,a)$ such that $a \in A \in \D$
\begin{align*}
\rhoa(A,a) &=  \sum_{(S_a)_{a \in {\mathscr A}}} \lambda((S_a)_{a \in \A})   \rhoX(\cup_{a \in A} S_a, S_a)  \quad\quad\quad(\because \text{Lemma }\ref{lem:ind})\\
&=\sum_{(S_a)_{a \in {\mathscr A}}} \lambda((S_a)_{a \in \A})  \mu_{(S_a)_{a \in \A}}(\succ' \in {\mathscr L }({\mathscr A})|a \succ' b, \text{ for all } b \in A\setminus a)\\
&=\muA(\succ' \in {\mathscr L }({\mathscr A})|a \succ' b , \text{ for all } b \in A\setminus a).
\end{align*}
Thus $\muA$ rationalizes $\rhoa$.

\begin{singlespace}
\bibliographystyle{new_ecma.bst}
\bibliography{discrete_ru}
\end{singlespace}

\newpage

\section*{**For Online Publication**
}

\section*{Online Appendix}

\section{Omitted Proofs}\label{sec:omitted}

\subsection{Proof of Remark \ref{rem:ru_rational}}\label{pf:ru_vertex}
Fix $\succ \in \L(\A)$ and $\E \subseteq \D$. Choose any two elements of $X(a_0)$ denoted by $\overline{x}$ and $\underline{x}$. Consider the following ranking $\succ' \in \L(\X)$ such that (i) $\succ  $ and $\succ'$ coincide on $\A_A$ (i.e., $a \succ b$ if and only if $x \succ' y$ for all $a,b \in \A_{N}$, where $X(a)=\{x\}$ and $X(b)=\{y\}$); (ii) for any $y \in \A_A$, $\overline{x}\succ' y \succ' \underline{x}$. For all $D\in \D$, define $\lambda_D(\{\overline{x}\})=1$ for any $D \not \in \E$; and $\lambda_D(\{\underline{x}\})=1$ for any $D \in \E$. It is easy to see that $(\delta_{\succ'}, \lambda)$ RU- rationalizes $\rho^{\succ}_{\E}$. Note that this construction only relies on $|X(a_0)| \ge 2$ which is implied by the outside option approach, thus Remark \ref{rem:ru_rational} does not depend on $X(a_0)$.

\subsection{Proof of Remark \ref{rem:vertex}}

We obtain the lower bound of the number of vertices. Fix $\E$ that contains all subsets of $\A$ that contains $a_0$. Note that there are $2^{(2^{|\A_A|} - {|\A_A| \choose 2} -1)}$ distinct such collection $\E$. Let ${\cal C}$ be the collection of such $\E$s. Let $\L'$ be the set of linear orders on $\A$ such that $a_0$ is the worst alternative. There are $|\A_A|!$ distinct such linear orders. 

It can be shown that for any $\E, \E' \in {\cal C}$ and $\succ, \succ' \in \L'$ such that $(\succ, \E)\neq (\succ', \E')$, we have $\rho^{\succ}_{\E}\neq \rho^{\succ'}_{\E'}$. Thus, the number of distinct vertices is at least
\[
|\A_A|! \times 2^{(2^{|\A_A|} - {|\A_A| \choose 2} -1)}.
\]
Finally we show $\rho^{\succ}_{\E}\neq \rho^{\succ'}_{\E'}$ if $(\succ, \E)\neq (\succ', \E')$. To see this suppose by way of contradiction that $\rho^{\succ}_{\E}=\rho^{\succ'}_{\E'}$. Since $\succ, \succ' \in \L'$, the worst elements of $\succ$ and $\succ'$ are $a_0$. Moreover, both $\E$ and $\E'$ contain the sets of the form $\{a,b,a_0\}$ for all $\{a,b\} \subseteq \A_A$. Thus, $\rho^{\succ}_{\E}=\rho^{\succ'}_{\E'}$ implies that $\succ$ and $\succ'$ coincide binary comparison of any $a,b \in \A_A$. It follows that $\succ=\succ'$, which in turn implies $\E=\E'$ given $\rho^{\succ}_{\E}=\rho^{\succ'}_{\E'}$ and the fact that $\E$ contains all sets containing $a_0$.\footnote{There are many redundant vertices. For example, for any $\E$, fix two linear orders $\succ$ and $\succ'$ such that $a_1 \succ \cdots \succ a_{|\A_A|} \succ a_0$ and $a_1 \succ' \cdots \succ' a_{|\A_A|-1} \succ' a_0 \succ'   a_{|\A_A|}$. Then we have $\rho^{\succ}_{\E}=\rho^{\succ'}_{\E}$.}

\subsection{Proof of Remark \ref{rem:non-convex}}\label{sec:non-convexproof}

Fix $n$ such that $2 \le n <|\A_A|+1$. In the proof of Remark \ref{rem:ru_rational}, we only assume that $X(a_0)$ contains at least two elements, one ranked above all other alternatives and ranked below all other alternatives. This means $\rho^{\succ}_{\E} \in RU(2)$ for all $\succ \in \L(\A)$ and $\E \subseteq \D$. Thus we have $\rho^{\succ}_{\E} \in RU(n)$ for all $\succ \in \L$ and $\E \subseteq \D$, which implies statement (i).

We now show statement (ii), the non-convexity of $RU(n)$. Suppose by way of contradiction that $RU(n)$ is convex. Then, statement (i) implies that $RU(n)=\co \{\rho^{\succ}_{\E}| \succ \in \L, \E \subseteq \D\}$. 
However, by Corollary \ref{coro:vertexhalfspace}, the set of stochastic choice functions that are characterized by Limited Monotonicity and Partial RU-rationality is $\co\{\rho^{\succ}_{\E}| \succ \in \L, \E \subseteq \D\}$, which implies that $RU(n)$ is characterized by Limited Monotonicity and Partial RU-rationality. This contradicts with Theorem \ref{thm:nests}.\footnote{The constructive example of non-convexity of $RU(2)$ is as follows: Let $\A_A = \{x,y\}$ and $X(a_0) = \{z,w\}$. Let $x \ \succ_1 \  y \ \ \succ_1 \ a_0$ and $y \ \succ_2\  x  \ \succ_2 \ a_0$. Also, suppose that 
\begin{align*}
&\{x,a_0\} \notin \E_1,  \  \{y,a_0\} \in \E_1, \ \{x,y,a_0\} \in \E_1, \ \ \  
&\{x,a_0\} \notin \E_2, \ \{y,a_0\} \in \E_2,  \ \{x,y,a_0\} \notin \E_2.
\end{align*}
As stated in the body of the paper, $\rho^{\succ_1}_{\E_1}$ and $\rho^{\succ_2}_{\E_2}$ belong to $RU(2)$.
 By the same argument in the proof of Theorem \ref{thm:nests}, it can be shown that  $\rho = \alpha \rho^{\succ_1}_{\E_1}+ (1-\alpha)\rho^{\succ_2}_{\E_2} \not \in RU(2)$ for all $\alpha \in (0,1)$.}

\subsection{Proof of Lemma \ref{lem:non-overlapping} }\label{sec:proof-non-overlapping}

 Statement (a) follows from the fact that if $\muX$ is non-overlapping then for all $(S_a)_{a \in A}\in \prod_{a \in A} (2^{X(a) }\setminus \{\emptyset\})$ and $(S'_a)_{a \in 
A}\in \prod_{a \in A}  (2^{X(a) }\setminus \{\emptyset\})$ we have that $\rhoX(\cup_{a \in A} S_a, S_b)  = \rhoX(\cup_{a \in A} S'_a,S'_b )$ for all for all $b \in A$ where $\rhoX$ is the underlying stochastic choice function on $\X$ rationalized by $\mu \in \Delta (\L(\A_A \cup X(a_0)))$.  
Fix $(S^*_a)_{a \in A}$. Then for all $b \in A$, we have 
\begin{align*}
&\sum_{(S_a)_{a \in A}\in \prod_{a \in A} (2^{X(a) }\setminus \{\emptyset\})} \lambda_{A}((S_a)_{a \in A})\rhoX(\cup_{a \in A} S_a,S_b)\\ 
&= \rhoX(\cup_{a \in A} S^*_a,S^*_b)\sum_{(S_a)_{a \in A}\in \prod_{a \in A} (2^{X(a) }\setminus \{\emptyset\})} \lambda_{A}((S_a)_{a \in A})\\
&= \rhoX(\cup_{a \in A} S^*_a,S^*_b)\\
&=\muX(\succ| \exists x \in S^*_b, \ \forall y \in \cup_{a \in A \setminus b} S_a, \ x \succ y).
\end{align*}
Thus every $\lambda$ produces the same choice function. Thus, by the definition of rationalization (Definition \ref{def:rationalization}), we established statement (a).

To prove statement (b), assume there is no non-overlapping $\muX$ that RU rationalizes $\rhoa$. By way of contradiction assume that there exist $\lambda'$  and non-overlapping $\muX'$ such that $(\muX', \lambda')$ that rationalizes $\rhoa$.  Given this, statement (a) implies a contradiction.

\subsection{Proof of Lemma \ref{lem:ind}}\label{sec:proof_ind}

Fix $A \subseteq \A$ and $a \in A$. Then, 
\begin{align*}
\rhoa(A, a)
&= \sum_{\left(S_b\right)_{b \in A} \in \prod_{b \in A} \left(2^{X(b)} \setminus \{\emptyset\}\right)} 
\lambda_A\left((S_b)_{b \in A}\right) 
\rhoX(\cup_{b \in A} S_b, S_a) \\
&= \sum_{\left(S_b\right)_{b \in A} \in \prod_{b \in A} \left(2^{X(b)} \setminus \{\emptyset\}\right)} 
\lambda \Big( (S_b)_{b \in A} \times \prod_{b \in \A \setminus A} (2^{X(b)}\setminus \{\emptyset\})\Big)
\rhoX(\cup_{b \in A} S_b, S_a) \\
&= \sum_{\left(S_b\right)_{b \in A} \in \prod_{b \in A} \left(2^{X(b)} \setminus \{\emptyset\}\right)} 
\sum_{\left(S_c\right)_{c \in A^c} \in \prod_{c \in A^c} \left(2^{X(c)}\setminus \{\emptyset\}\right)} \lambda((S_b)_{b \in A}, (S_c)_{c \in A^c})
\rhoX(\cup_{b \in A} S_b, S_a) \\
&= \sum_{\left(S_b\right)_{b \in \A} \in \prod_{b \in \A} (2^{X(b)}\setminus \{\emptyset\})} 
\lambda\left((S_b)_{b \in \A}\right)  
\rhoX(\cup_{b \in A} S_b, S_a).
\end{align*}

\section{Generalization of Theorem \ref{thm:vertex}, \ref{thm:nests}, and \ref{thm:errorbound}  to multiple non-atomic aggregates}\label{sec:multiplegeneralizations}

\subsection{Generalization of Theorem \ref{thm:vertex}: Vertex characterization}
We first define the vertices of the RU polytope for multiple non-atomic aggregates.  For each sequence  of disjoint collections $\{\E_a\}_{a \in \A_N}$ in $\D$, where $\E_a \subseteq \D$ for each $a$ and $\E_a \cap \E_b = \emptyset$ for all $a \neq b$, define $\E = \D \setminus (\cup_{a \in \A_N} \E_a)$. The interpretation of $\{\E\} \cup \{\E_a\}_{a \in \A_N}$ is that when $D \in \E$ the agent follows the linear order and when $D \in \E_a$,  the agent deviates from the linear order and chooses $a$.\footnote{In the single non-atomic aggregate case, we defined $\E$ as the set of menus where the agent follows the linear order. Thus the sequence for this case is defined by $\E_{a_0} =  \D \setminus \E$.} Thus we define for each $\succ \in \L(\A)$,
\[
c^{\succ}_{\{\E_a\}_{a \in \A_N}}(D)= \begin{cases}
\max_D \succ \ &\text{ if } D \in \E, \\
a\ &\text { if $D \in \E_a$}
\end{cases}
\]
and define $\rho^\succ_{ \{\E_a\}_{a \in \A_N}}(D,x)$ to be the indicator function of $c^\succ_{\{\E_a\}_{a \in \A_N}}(D) =x$.  The  theorem shows that these are the vertices of the RU polytope for multiple aggregated alternatives.

\begin{theorem} \label{thm:vertexmultiple}
A stochastic choice function $\rho$ is RU-rational if and only if   
\[
\rho \in \co\Big\{\rho^{\succ}_{ \{\E_a\}_{a \in \A_N}} \big| \succ \in \L(\A) \text{ and }  \{\E_a\}_{a \in \A_N}, \E_a \subseteq \D, \forall a \neq b \ \E_a \cap \E_b = \emptyset \Big\}. 
\]
\end{theorem}

\begin{proof}  We first show that every RU-rational choice function is a  convex combination of $\rho^\succ_{ \{\E_a\}_{a \in \A_N}}$.

Remember that for all $\succ \in \L(\X)$ and $(S_{(A,a)})_{a \in A \in \D}  \in \prod_{(A,a): a\in A \in \D}(2^{X(a)} \setminus \{\emptyset\})$, define $\rho^{\succ}_{(S_{(A,a)})_{a \in A\in \D}}$ as follows: $\rho^{\succ}_{(S_{(A,a)})_{a\in A\in \D}}(B,b) := 1(\exists x \in S_{(B,b)} \forall y \in \cup_{a \in B \setminus b} S_{(B,a)}, [ x \succ y])$ for all $(B,b)$ such that $b \in B \in \D$; thus, $\rho^{\succ}_{(S_{(A,a)})_{a\in A\in \D}}$ is a choice function where on menu $A$, the aggregate $a$ means $S_{(A,a)}$ for all pairs $(A,a)$ such that $a \in A \in \D$.

\noindent \textbf{Step 1: }
For all $\succ \in \L(\X)$ and $(S_{(A,a)})_{a \in A \in \D}  \in \prod_{(A,a): a\in A \in \D}(2^{X(a)} \setminus \{\emptyset\})$, there exists $\succ_{\A} \in \L(A)$ and $\{\E_a\}_{a \in \A_N}$ such that $\rho^{\succ}_{(S_{(A,a)})_{a \in A\in \D}}=\rho^{\succ_\A}_{ \{\E_a\}_{a \in \A_N}}$. 
\begin{proof}

Fix each $\succ \in \L(\X)$, each collection $(S_{(A,a)})_{a \in A \in \D}$, and each $a \in A$,  
define $\E_a=\big\{A \in \D\ \big|\rho^\succ_{(S_{(A,a)_{a \in A\in \D}})}(A,a) =1\big\}$. That is, $\E_a$ is the set of menus where $a$ is chosen with probability 1. 
By definition, $\E_a \cap \E_b= \emptyset$, thus $\{\E_a\}_{a \in \A_N}$ is a disjoint sequence of collections in $\D$. 


Define $\succ_\A \in \L(\A_A)$ as follows: for all $a, b \in \A_N$, $a \succ_{\A} b$ if and only if $x \succ b$, where $X(a)=\{x\}$ and $X(b)=\{y\}$; moreover $\succ_\A$ ranks elements of $\A_A$ the same as $\succ$ and puts all elements of $\A_N$ at the bottom in any order.

For any $A \in \D$, if $A \in \E_a$ for some $a$, then $\rho^{\succ}_{(S_{(A,a)})_{a\in A\in \D}}(A,a)=1=\rho^{\succ_\A}_{\E}(A,a)$, thus they agree on $A$. Otherwise, since $\rho^{\succ}_{(S_{(A,a)})_{a\in A\in \D}}$ does not choose any non-atomic aggregate on $A$, it chooses the highest ranked non-atomic aggregate according to $\succ$ which coincides with the largest non-atomic aggregate according to $\succ_{\A}$ by definition and they agree on $A$. Thus the two choice functions agree on every menu and therefore they are equal.
\end{proof}


Exactly by the same logic as in the proof of Theorem \ref{thm:vertex} in Section \ref{sec:proof_vertex}, we obtain the following two steps:

\noindent \textbf{Step 2:} Suppose that $\rhoa$ is RU-rationalized by $(\delta_{\succ},\lambda)$. Then $\rhoa$ is the convex combination of functions of the form $\rho^{\succ}_{(S_{(A,a)})_{a \in A\in \D}}$.

\noindent \textbf{Step 3:} Every RU-rational $\rho$ is a convex combination of choice functions rationalized by $(\delta_\succ, \lambda)$.


Step 3 says that every RU rational $\rho$ is in the convex hull of choice functions rationalized by $(\delta_\succ, \lambda)$. Step 2 says that every choice function rationalized by $(\delta_\succ, \lambda)$ is itself a convex combination of choice functions of the form $\rho^{\succ}_{(S_{(A,a)})_{a \in A\in \D}}$. Finally, Step 1 says that every $\rho^{\succ}_{(S_{(A,a)})_{a \in A\in \D}}$ can be written as $\rho^{\succ}_{ \{\E_a\}_{a \in \A_N}} $ for the appropriate sequence $\{\E_a\}_{a \in \A_N}$. Thus combining the steps yields $
\rho \in \{\rho^{\succ}_{ \{\E_a\}_{a \in \A_N}} \big| \succ \in \L(\A) \text{ and } $ $ \{\E_a\}_{a \in \A_N}, \E_a \subseteq \D, \forall a \neq b \ \E_a \cap \E_b = \emptyset \}$.

Now we show the other inclusion. By the definition, for all $\succ$ and $ \{\E_a\}_{a \in \A_N}$, $\rho^\succ_{ \{\E_a\}_{a \in \A_N}}$  satisfies Limited Monotonicity and Partial-RU. Thus by Theorem \ref{thm:1multiple}, every $\rho^\succ_{ \{\E_a\}_{a \in \A_N}}$ is RU-rational. The RU-rational polytope is convex by Theorem \ref{thm:1multiple}. Thus, the convex hull of $\{\rho^{\succ}_{ \{\E_a\}_{a \in \A_N}} \mid $ $\succ \in \L(\A) \text{ and }\{\E_a\}_{a \in \A_N}, \E_a \subseteq \D, \forall a \neq b \ \E_a \cap \E_b = \emptyset\}$ is a subset of the RU-rational set. This completes the proof.
\end{proof}

\subsection{Generalization of  Theorem \ref{thm:nests}: Exogenous aggregation correspondences}\label{sec:exogenous_multiple}

Let $RU((n_a)_{a \in \A_N})$ be the set of RU-rationalizable choice functions when for each $a \in \A_N$, we have $|X(a)| = n_a$. That is, for each vector $(n_a)_{a \in \A_N} \in \mathbf{N}^{\A_N}$, we define $RU((n_a)_{a \in \A_N})$ to be the set of RU-rationalizable choice functions in the case where each aggregate $a\in \A_N$ contains exactly $n_a$ underlying alternatives. For simplicity, we write $(n)$ to mean the vector in $\mathbf{N}^{\A_N}$ with all entries equal to $n$. For example, $(|\A_A|+2)$ means the vector $(n_a)_{a \in \A_N}$ where $n_a = |\A_A|+2$ for all $a$. We will also write $\ge$ to be the standard entry-wise comparison.

In the following we state and prove our characterization of RU-rationalizable choice functions with multiple non-atomic aggregates and exogenous aggregation correspondences. The result is analogous to Theorem \ref{thm:nests}.

\begin{theorem}\label{thm:multiplenests}
Suppose $\D = 2^{\A}\setminus \{\emptyset\}$. Then,
\begin{itemize}
    \item[(i)] if $(n_a)_{a \in \A_N} \ge (n'_a)_{a \in \A_N}$ then $RU((n_a)_{a \in \A_N}) \supseteq RU((n'_a)_{a \in \A_N})$. Furthermore, if there exists $a \in \A_N$ such that $n'_a < n_a < |\A_A|+2$ then the inequality is strict,
    \item[(ii)] if $(n_a)_{a \in \A_N} \ge (|\A_A|+2)$ then $RU((n_a)_{a \in \A_N})$ is equal to the set of choice functions that satisfy Limited Monotonicity and Partial-RU.
\end{itemize}
\end{theorem}
\begin{proof}
Statement (i): The weak inclusions are immediate. Thus, we only need to   show that $RU((n_a)_{a \in \A_N}) \neq RU((n'_a)_{a \in \A_N})$ if there exists $a \in \A_N$ such that $n'_a < n_a < |\A_A|+2$. We do this by constructing a choice function that is in $RU((n'_a)_{a \in \A_N})$ but not in $RU((n_a)_{a \in \A_N}) $, based on the construction in the proof of Theorem \ref{thm:nests}.

Fix some $a \in \A_N$ such that $n'_a < n_a < |\A_A|+2$. We will label $a = a_0$. Then consider the restricted  domain $\D' = 2^{\A_A\cup \{a_0\}} \setminus \{\emptyset\}$. On this domain, the multiple aggregated alternative case reduces to the outside option setup. Consider the same example of a choice function $\rho' \in RU(n_{a_0})$ but $\rho' \notin RU(n_{a_0}-1)$ in the proof of Theorem \ref{thm:nests} on the restricted domain $\D'$. We extend $\rho'$ to $\rho$ defined on $\D$ in the following way. Let $\rho=\rho'$ on $\D'$. Label the aggregated alternatives by $\A_N = \{a_0,...,a_{|\A_N|}\}$.  Whenever $D \not \in \D'$ (i.e., $D$ contains some element of $\{a_1,...,a_{|\A_N|}\}$), $\rho$ puts probability $1$ on the element of $\A_N$ with the largest index. Then, we have $\rho \in RU((n_a)_{a \in \A_N})$ by definition.  Since $\rho' \notin RU(n_{a_0}-1)$, we have $\rho' \notin RU(n'_{a_0})$ and thus $\rho \not \in RU((n_a')_{a \in \A_N})$. This completes the proof of Statement (i).\\
Statement (ii) follows immediately by the proof of Theorem \ref{thm:1multiple}.
\end{proof}

The theorem says that as the number of underlying alternatives for each aggregate increases, the set of RU-rational choice function expands. Furthermore, when every aggregate has at least $|\A_A|+2$ underlying alternatives, then the RU-rational set is fully characterized by Limited Monotonicity and Partial-RU.\footnote{Notice that with multiple aggregated alternatives $|\A_A|+2 \le |\A|$. Thus we need only $|\A|$ underlying alternatives for each aggregate. It follows that as long as $|\X| \ge |\A|^2$, Theorem \ref{thm:1multiple} holds.}
We can also write as a direct implication of Theorem \ref{thm:multiplenests},
 \begin{align*}
 \bigcup_{(n_a)_{a \in \A_N} \in \mathbf{N}^{\A_N}}& RU((n_a)_{a \in \A_N}) = RU((|\A_A|+2))\\
 &=\{\rho| \ \rho \text{ satisfies Limited Monotonicity \& Partial RU-rationality}\},
 \end{align*}
 which means that we only need to consider that case where $X(a) = |\A_A|+2$ for all $a \in \A_N$ to get the whole RU set.

\subsection{Generalization of Theorem \ref{thm:errorbound}: Approximation results}

We first state and prove the analogous result to Remark  \ref{rem:non-convex}.

\begin{lemma} For any vector of integers $(n_a)_{a \in \A_N}$,
\begin{itemize}
\item[(i)] $\rho^\succ_{\{\E_a\}_{a \in \A_N}} \in RU ((n_a)_{a\in \A_N})$ for all $\succ \in \L(\A)$ and disjoint sequence $ \{\E_a\}_{a \in \A_N}$,
\item[(ii)] If $(n_a)_{a \in \A_N} \not \ge (|\A_A|+1)$ then $RU((n_a)_{a \in \A_N})$ is not convex.
\end{itemize}
\end{lemma}
\begin{proof} We first prove (i).  Given Theorem \ref{thm:multiplenests}, it suffices to show $\rho^\succ_{\{\E_a\}_{a \in \A_N}}\in RU((2))$ for all $\succ \in \L(\A)$ and disjoint sequences $ \{\E_a\}_{a \in \A_N}$. For each $a \in \A_N$ fix $\overline{x}_a \in X(a)$ and $\underline{x}_a \in X(a)$.

We will construct a linear order $\succ_{\X}$ on $\X$ and a composition correspondence $\lambda$ that  rationalizes $\rho^{\succ}_{\{\E_a\}_{a \in \A_N}}$, in other words we will show that $\rho^\succ_{ \{\E_a\}_{a \in \A_N}}=\rho_{(\delta_{\succ_{\X}}, \lambda)}$. We first obtain $\succ_{\X}$ by augmenting $\succ$ as follows. Define $\succ_\X$ by (i) for all $a \in \A_N$ and $b \in \A_A$, $\underline{x}_a \succ_\X b$ if and only if $a \succ b$; (ii)  $\overline{x}_a \succ b$ for all $b \in \A_A$; (iii) $\underline{x}_a \succ_{\X} \underline{x}_b$ if and only if $a \succ b$ for all $a, b \in \A_N$; (iv) $\overline{x}_a \succ_{\X} \underline{x}_b$ for all $a, b \in \A_N$. The other comparisons may be arbitrary. 

Now we define the composition correspondence $\lambda$ as follows: for each $D \in \D$, if $D \notin  \cup_{a \in \A_N} \E_a$, then $\lambda_D$ puts probability $1$ on the event that each  $a \in \A_N$ is composed of just $\underline{x}_a$. 
Otherwise, $D \in \E_b$ for some $b \in \A_N$. Then $\lambda_D$ puts probability $1$ on the event that each  $a \in \A_N$ is composed of just $\underline{x}_a$, except for $b$ which is composed of $\overline{x}_{b}$.

Finally, we show $\rho^\succ_{ \{\E_a\}_{a \in \A_N}}=\rho_{(\delta_{\succ_{\X}}, \lambda)}$. Fix $D \in \D$. Consider the case $D \notin  \cup_{a \in \A_N} \E_a$. Then  all $a \in \A_N$ composed of $\underline{x}_a$. Thus, the definition (i) and (iii) of $\succ_{\X}$ implies that $\rho^\succ_{ \{\E_a\}_{a \in \A_N}}=\rho_{(\delta_{\succ_{\X}}, \lambda)}$. Now consider case in which $D \in \E_b$ for some $b \in \A_N$. Then  all $a \in \A_N\setminus b$ composed of $\underline{x}_a$ and $b$ composed of $\overline{x}_b$. Thus, the definition (ii) and (iv) of $\succ_{\X}$ implies that $\rho^\succ_{ \{\E_a\}_{a \in \A_N}}=\rho_{(\delta_{\succ_{\X}}, \lambda)}$.

The proof of statement (ii) is exactly the same as in the proof of Remark \ref{rem:non-convex} replacing corresponding theorems to the generalized theorems with multiple non-atomic aggregates. 
\end{proof}

Next we prove the analogue of Lemma \ref{lem:subpolys} as follows:

\begin{lemma}\label{lem:multiple_subpolys} For any $n \ge 2$, the convex hull of $\{\rho^{\succ_i}_{\{\E^i_a\}_{a \in \A_N}}\}^{n-1}_{i=1}$ is contained by $RU((n))$ for any sequence $(\succ_i,\{\E^i_a\}_{a \in \A_N})_{i=1}^{n-1}$.
\end{lemma}

\begin{proof} Fix $(\succ_i,\{\E^i_a\}_{a \in \A_N})_{i=1}^{n-1}$. Suppose $\rho = \sum \alpha_i\rho^{\succ_i}_{\{\E^i_a\}_{a \in \A_N}}$ for some $\alpha_i\ge 0$ such that $\sum \alpha_i = 1$. We proceed by constructing a pair $(\mu_\X,\lambda)$ that rationalizes $\rho$.

For each $a \in \A_N$, label the elements of $X(a)$ by $ \{\underline{x}_a,\overline{x}_a^1,\overline{x}_a^2,...,\overline{x}_a^{n-1}\}$. (This is possible because $|X(a)|=n$.)  Then for each $\rho^{\succ_i}_{\{\E^i_a\}_{a \in \A_N}}$, define $\succ_\X^i \in \L(\X)$ in the following way: (i) $\overline{x}^i_a \succ^i_\X x$ for all $x \in \X \setminus \{\overline{x}^i_a\}_{a\in \A_N}$, (ii) $b \succ^i_\X \overline{x}^j_a$ for all $j \neq i$, $b \in \A_A$, (iii) $\underline{x}_a \succ^i_\X b$ if and only if $a \succ^i b$ for all $b \in \A_A$, (iv) $\underline{x}_a \succ^i_\X \underline{x}_b$ for all $b \in \A_N$, and (v) $a \succ^i_{\X} b$ if and only if $a\succ^i b$ for all $a,b \in \A_A$. All other comparisons are arbitrary.

Now we define $\mu_\X \in \Delta (\L(\X))$ by $\mu_\X(\succ^i_\X)= \alpha_i$ for all $i=1,\dots, n-1$. We define $\lambda$ as follows:  if $A \not\in \bigcup_{a \in \A_N}\E_a$, then let $\lambda_A(\{\underline{x}_a| a \in \A_N\}) = 1$; if $A \in \E_a$ for some $a \in \A_N$ then let $\lambda_A(\{\underline{x}_b| b \in \A_N \setminus {a}\} \cup \{\overline{x}_a\}) = 1$.

Filly, we show that  $\rho$ is rationalized by the pair $(\muX , \lambda)$. That is, $\rho_{(\muX,\lambda)}=\rho$. For all $(A,a)$ such that $a \in A \in \D$, $\rho_{(\muX,\lambda)}(A,a)$ equals
\begin{align*}
&\sum_{(S_b)_{b \in A}\in \underset{b\in A}{\prod} (2^{X(b)}\setminus \{\emptyset\})} \lambda_{A}\big((S_b)_{b \in A}\big)\muX\big(\succ\in \L (\X) | \exists x \in S_a, \ \forall y \in \underset{b \in A \setminus a}{\cup} S_b, \ x \succ y\big)\\
&= \sum_{\succ \in \L(\X)} \mu(\succ) \sum_{(S_b)_{b \in A}\in \underset{b\in A}{\prod} (2^{X(b)}\setminus \{\emptyset\})} \lambda_{A}\big((S_b)_{b \in A}\big)1( \exists x \in S_a, \ \forall y \in \underset{b \in A \setminus a}{\cup} S_b, \ x \succ y\big)\\
&= \sum_{i=1}^{n-1} \alpha_i \sum_{(S_b)_{b \in A}\in \underset{b\in A}{\prod} (2^{X(b)}\setminus \{\emptyset\})} \lambda_{A}\big((S_b)_{b \in A}\big)1( \exists x \in S_a, \ \forall y \in \underset{b \in A \setminus a}{\cup} S_b, \ x \succ^i_\X y\big)\\
&= \sum_{i=1}^{n-1} \alpha_i \rho^{\succ_i}_{\{\E^i_a\}_{a \in \A_N}}(A,a) \equiv  \rho(A,a).
\end{align*}
\end{proof}

We are now ready to prove the analogue of Theorem \ref{thm:errorbound} for multiple non-atomic aggregates.
\begin{theorem}\label{thm:errorboundmultiple}
Fix any integer $n$ such that $2 < n <|\A_A|+2$. For all $\rho$ satisfying Limited Monotonicity and Partial-RU, there exists $\rho' \in RU((n))$ such that,
\[
\frac{\|\rho-\rho'\|^2}{|\D|} \le \frac{1}{n-1}.
\]

\end{theorem}

\begin{proof} 
Given Lemma \ref{lem:multiple_subpolys}, the proof  proceeds exactly as in the proof of Theorem \ref{thm:errorbound} using Approximate Carath\'{e}odory Theorem.
\end{proof}

\section{Additional theoretical results}

\subsection{Alternate proof of Theorem \ref{thm:1}}\label{sec:altproofthm1}
Previously we proved  Theorem \ref{thm:1} as a special case of  Theorem \ref{thm:1multiple}. In this section we prove Theorem \ref{thm:1} separately using a construction that allows for a slightly weaker assumption on the cardinality of $X(a_0)$.

The necessity is trivial. We prove the sufficiency. Assume the two conditions and fix an aggregation correspondence $X$ such that $|X(a_0)| = |\A_A| + 1$.

By Partial-RU, there exists $\tilde{\mu} \in \Delta (\L(\A_A))$ that rationalizes $\rho$ on $2^{\A_A} \setminus \{\emptyset\}$.  For each $y \in \A_A$, choose an element of  $X(a_0)$ denoted by $x_0(y)$. We will also fix a unobservable alternative $\underline{x}$ that plays a special role in the following. Note that such alternatives exist because  $|X(a_0)| = |\A_A| + 1$.

 For any ranking $\tilde \succ$ on $\A_A$ that belongs to the support of $\tilde{\mu}$,  we extend $\tilde \succ$ and define a ranking $\succ$ on $\X$ as follows.  First, for each  $y \in \A_A$, place $x_0(y) \in X(a_0)$ just above $y$ and $\underline{x}$ at the bottom for all $\tilde \succ\in supp (\tilde \mu)$. For example, if the ranking $y_1\  \tilde \succ\   y_2\   \tilde \succ\  
 y_3$ is in the support then we extend $\tilde \succ$  to $\succ$ on $\X$ as follows
 \[
 x_0(y_1) \succ y_1 \succ x_0(y_2) \succ y_2 \succ x_0(y_3) \succ y_3 \succ\underline{x}.
 \]
Finally, we define $\muX \in \Delta (\L(\X))$ as follows: for each extended ranking $\succ$ defined above, we let $\muX(\succ)=\tilde{\mu}(\tilde\succ)$. 

Let $\rhoX$ be a random utility model that is rationalized by $\muX$. Now fix $D\subseteq \A_A$. In the following, we will construct $\lambda \in \prod_{b \in D \cup a_0} (2^{X(b)} \setminus \{\emptyset\})$ to establish the following equality for all $b \in D \cup a_0$: 
\[
\sum_{(S_b)_{b \in D \cup a_0} \subseteq \prod_{b \in D \cup a_0} (2^{X(b)} \setminus \{\emptyset\})}\lambda((S_b)_{b \in D \cup a_0} ) \rhoX(\cup_{b \in D \cup a_0} S_b, S_b) =\rhoa(D \cup a_0,b).
\]
Since $D \subseteq \tilde X$, for all $b \in D$, there exists $y_b$ such that $X(b)=\{y_b\}$.

Thus, for any $(S_b)_{b \in D \cup a_0} \subseteq \prod_{b \in D \cup a_0} (2^{X(b)} \setminus \{\emptyset\})$, $S_b$ is determined as $\{y_b\}$ for all $b \in D$. Thus, for simplicity, we will focus on the marginal distribution of $\lambda$ on $2^{X(a_0)} \setminus \{\emptyset\}$  and obtain the following equality: 
\begin{equation}
\sum_{S \subseteq X(a_0)}\lambda(S) \rhoX(\{y_b| b \in D\} \cup S, y_b) =\rhoa(D \cup a_0,b)\quad \text{ for all }b \in D
\end{equation}
and
\begin{equation}\label{eq:pf2}
\sum_{S \subseteq X(a_0)}\lambda(S) \rhoX(\{y_b| b \in D\} \cup S, S) =\rhoa(D \cup a_0,a_0).
\end{equation}
Note that (\ref{eq:pf2}) is implied by (\ref{eq:pf1}) by the fact that the probability must  sum to one. The  purpose of our proof is, thus, to show the existence of $\lambda$ on $2^{X(a_0)} \setminus \{\emptyset\}$ that satisfies the following equality for all $y \in D \subseteq \A_A$,\footnote{Remember that this $\lambda$ is a composition distribution for the choice set $D \cup\{a_0\}$. For another choice set $E \cup\{a_0\}$, one can construct a desirable composition distribution in the same way.}
\begin{equation}\label{eq:pf1} 
\sum_{S \subseteq X(a_0)}\lambda(S) \rhoX(D \cup S, y) =\rhoa(D \cup a_0, y).
\end{equation}

Now we will define $\lambda$ recursively.  We first label elements of $D$ by $y_i$ so that $\frac{\rhoa(D\cup {a}_0,y_0)}{\rhoa(D,y_0)}$ is maximum among $\frac{\rhoa(D\cup a_0,y_i)}{\rhoa(D ,y_i)}$. Define
\[
\lambda_0(\underline{x}) = \frac{\rhoa(D \cup a_0,y_0)}{\rhoa(D ,y_0)}, \quad \lambda_0(X(a_0)) = 1-\lambda_0(\underline{x})
\]
and all other values are zero. Then $\lambda_0$ is a probability measure because $\lambda_0(\underline{x})\le 1$ by Limited Monotonicity.  Moreover, the chance that $y_0$ is chosen is equal to $\rhoa(D\cup a_0, y_0)$ as desired:
\begin{align*}
\sum_{S \subseteq X(a_0)}\lambda_0(S) \rhoX(D \cup S, y_0)
&=\lambda_0(\underline{x})   \rhoX(D\cup \{\underline{x}\},y_0) + \lambda_0(X(a_0))   \rhoX(D\cup X(a_0),y_0) \\
&= \lambda_0(\underline{x})   \rhoX(D\cup \{\underline{x}\},y_0) \quad (\because x(y_0) \succ y_0 \text{ for all } \succ\in supp(\muX))\\
&= \lambda_0(\underline{x})   \rhoX(D ,y_0)  \quad\quad\quad\quad (\because y_0 \succ\underline{x} \text{ for all } \succ\in supp(\muX))\\
&= \lambda_0(\underline{x})   \rhoa(D ,y_0) \\
&=\rhoa(D\cup a_0,y_0),
\end{align*}
where the fourth equality holds because $\rhoX(D ,y_0)=\rhoa(D ,y_0)$ because all elements in $D$ are atomic aggregates.

Given $\lambda_0$, we will  define $\lambda_1$ that satisfies (\ref{eq:pf1}) for $y_0$ and $y_1$. First remember that since $\frac{\rhoa(D \cup {a}_0,y_0)}{\rhoa(D ,y_0)}$ is maximum among $\frac{\rhoa(D \cup {a}_0,y_i)}{\rhoa(D ,y_i)}$, we have
\begin{eqnarray*}   
\lambda_0(\underline{x})\equiv\dfrac{\rhoa(D\cup {a}_0,y_0)}{\rhoa(D ,y_0)} \ge \dfrac{\rhoa(D\cup {a}_0,y_i)}{\rhoa(D ,y_i)}=\dfrac{\rhoa(D\cup {a}_0,y_i)}{\rhoX(D ,y_i)} =\dfrac{\rhoa(D\cup {a}_0,y_i)}{\rhoX(D\cup \{\underline{x}\},y_i)},
\end{eqnarray*}
where the second to the last equality holds because $\rhoX(D ,y_i)=\rhoa(D ,y_i)$; and the last equality holds because $\underline{x}$ is the worst alternative for any $\succ\in supp (\muX)$. Thus $\lambda_0(\underline{x})   \rhoX(D\cup \{\underline{x}\}, y_i) \ge \rhoa(D\cup a_0, y_i)$. By using this inequality, we have 
\begin{align*}
\sum_{S \subseteq X(a_0)}\lambda_0(S) \rhoX(D \cup S, y_i)
&=\lambda_0(\underline{x})   \rhoX(D\cup \{\underline{x}\}, y_i) + \lambda_0(X(a_0))   \rhoX(D\cup X(a_0), y_i) \\
&= \lambda_0(\underline{x})   \rhoX(D\cup \{\underline{x}\}, y_i) \\
&\ge \rhoa(D\cup {a}_0, y_i).
\end{align*}
This inequality suggests that we should decrease the value of $\lambda_0(\underline{x})$ to have the desired equality for $i>0$. Given the observation, we define $\lambda_1$ given $\lambda_0$ as follows. The  idea is to move probability from $\lambda_0(\underline{x})$ to $\lambda_0(\underline{x},x_0(y_1))$ so that the chance that $y_1$ is chosen decreases to $\rhoa(D \cup a_0,y_1)$ as desired.  More precisely, we define $\lambda_1$ as follows:
\[
\lambda_1(\underline{x})= \frac{\rhoa(D\cup {a}_0,y_1)}{\rhoa(D,y_1)}, \quad \lambda_1(\underline{x}, x_0(y_1))= \lambda_0(\underline{x})-\lambda_1(\underline{x}), \quad \lambda_1(X(a_0))=\lambda_0(X(a_0)). 
\]
To see that  $\lambda_1$ is a probability measure, notice that $\lambda_1(\underline{x}, x_0(y_1))\ge 0$ because $\frac{\rhoa(D \cup {a}_0, y_0)}{\rhoa(D ,y_0)} \ge \frac{\rhoa(D \cup {a}_0,y_1)}{\rhoa(D ,y_1)}$. Moreover, we have the desired equalities for $y_1$ and $y_0$ as follows:
\begin{align*}
&\sum_{S \subseteq X(a_0)}\lambda_1(S) \rhoX(D \cup S, y_1)\\
&=\lambda_1(\underline{x})   \rhoX(D\cup \{\underline{x}\}, y_1)+ \lambda_1(\underline{x},x_0(y_1)
) \rhoX(D\cup \{\underline{x},x_0(y_1)\}, y_1) + \lambda_1(X(a_0))   \rhoX(D\cup X(a_0), y_1) \\ 
&= \lambda_1(\underline{x})   \rhoX(D\cup \{\underline{x}\}, y_1)\quad (\because x_0(y_1)\succ y_1 \text{ for all }\succ  \in supp(\mu)) \\
&= \lambda_1(\underline{x})   \rhoX(D , y_1) \quad (\because y_1 \succ\underline{x}\text{ for all }\succ \in supp(\mu)) \\
&= \lambda_1(\underline{x})   \rhoa(D , y_1) \\
&=\rhoa(D\cup a_0, y_1)
\end{align*}
and 
\begin{align}\label{eq:315}
&\sum_{S \subseteq X(a_0)}\lambda_1(S) \rhoX(D \cup S, y_0) \nonumber \\
&=\lambda_1(\underline{x})   \rhoX(D\cup \{\underline{x}\},y_0)+ \lambda_1(\underline{x},x_0(y_1)) \rhoX(D\cup \{\underline{x},x_0(y_1)\},y_0) + \lambda_1(X(a_0))   \rhoX(D\cup X(a_0),y_0) \nonumber \\
&= \big( \lambda_1(\underline{x})+\lambda_1(\underline{x},x_0(y_1)
) \big)   \rhoX(D\cup \{\underline{x}\},y_0)+ \lambda_1(X(a_0))   \rhoX(D\cup X(a_0),y_0)\\
&=\lambda_0(\underline{x})   \rhoX(D\cup \{\underline{x}\},y_0) + \lambda_0(X(a_0))   \rhoX(D\cup X(a_0),y_0) \nonumber\\
&=\lambda_0(\underline{x})   \rhoX(D\cup \{\underline{x}\},y_0) \nonumber \\
&= \rhoa(D\cup a_0,y_0), \quad(\because \text{Definition of } \lambda_0) \nonumber
\end{align}
where the second equality holds because  for all $\succ\in supp(\muX)$, $x_0(y_1)$ is the immediate predecessor of $y_1$ so adding $x_0(y_1)$ changes the only the probability of choice frequency of $y_0$, thus $\rhoX(D\cup \{\underline{x}\},y_0) = \rhoX(D\cup \{\underline{x},x_0(y_1) \},y_0)$. 

In general, at $n$th step, given $\lambda_{n-1}$ that satisfy (\ref{eq:pf1}) for all $y \in \{y_1,\dots, y_{n-1}\}$, we will define $\lambda_n$ that satisfy (\ref{eq:pf1}) for all $y \in \{y_1,\dots, y_{n-1},y_{n}\}$. First define
\[
c_n = \frac{\rhoa(D \cup a_0,y_n)}{\rhoa(D ,y_n) \lambda_0(\underline{x})}.
\]
 Note $0 \le c_n \le 1$ because $\frac{\rhoa(D \cup a_0,y_0)}{\rhoa(D ,y_0)}$ is maximum among $\frac{\rhoa(D \cup a_0,y_i)}{\rhoa(D ,y_i)}$. Let $\lambda_n(X(a_0)) = \lambda_0(X(a_0))$ and for each $S$ in the support of $\lambda_{n-1}$ except $X(a_0)$
 \[
 \lambda_n(S) = c_n\lambda_{n-1}(S), \quad \lambda_n(S \cup x_0(y_n)) = \lambda_{n-1}(S)-\lambda_{n}(S).
 \]
Note first that $\lambda_n$ is a probability measure given that $\lambda_{n-1}$ is a probability measure.   Note also this modification of $\lambda_n$ from $\lambda_{n-1}$ does not change the chance that $y_i$ is chosen for any $i  \le n-1$; this only changes the chance that $y_n$ is chosen because adding $x_0(y_n)$ changes only the choice frequency of $y_n$ as in (\ref{eq:315}). Moreover, we obtain the desired equalities for the choice frequency of $y_n$ as follows:
    \begin{align*}
&\sum_{S \subseteq X(a_0)}\lambda_n(S) \rhoX(D \cup S, y_n)\\    
&=\lambda_{n-1}(X(a_0))\rhoX(D\cup X(a_0),y_n) +\\
&\sum_{S \in supp(\lambda_{n-1})\setminus \{X(a_0)\}} c_n\lambda_{n-1}(S)\rhoX(D\cup S , y_n) + (1-c_n)\lambda_{n-1}(S)\rhoX(D\cup S \cup x_0(y_n), y_n) \\
&= \sum_{S \in supp(\lambda_{n-1})\setminus \{X(a_0)\}} c_n\lambda_{n-1}(S)\rhoX(D\cup S , y_n) \quad (\because x_0(y_n) \succ y_n \text{ for all }\succ\in supp (\muX))\\
&= \sum_{S \in supp(\lambda_{n-1})\setminus \{X(a_0)\}} c_n\lambda_{n-1}(S) \rhoX(D, y_n) \\
&= c_n\sum_{S \in supp(\lambda_{n-1})\setminus \{X(a_0)\}} \lambda_{n-1}(S) \rhoa(D, y_n) \quad (\because \rhoa(D, y_n)=\rhoX(D, y_n) )\\
&=c_n \lambda_0(\underline{x})\rhoa(D, y_n)\\
&(\because \sum_{S \in supp(\lambda_{n-1})\setminus \{X(a_0)\}} \lambda_{n-1}(S)=1-\lambda_{n-1}(X(a_0))=1-\lambda_{0}(X(a_0))=\lambda_{0}(\{\underline{x}\}))\\
&= \rhoa(D\cup a_0, y_n),
\end{align*}
where the last equality holds by the definition of $c_n$; the fourth to the last equality holds because $y_n$ is the maximum element among $D$ with respect to $\succ$ if and only if  $y_n$ is the maximum element among $D \cup S$ with respect to $\succ$ since $S\subseteq \{x_0(y_i)| i < n\}\cup\{\underline{x}\}$ and $x_0(y_i)$ is the immediate predecessor of $y_i$ for all $ \succ\in supp(\muX)$.

\subsection{Technical remarks}\label{sec:remarks}

\begin{remark}\label{rem:domain_assumption}
For Theorem \ref{thm:1}, in order to make sense of Limited Monotonicity, we need the following assumption of the domain of $\rho$: if $\rhoa $ is defined on $(D \cup \{a_0\},\cdot)$, then $\rhoa $ is defined  on $(D,\cdot)$ (i.e., if $D\cup\{a_0\} \in \D$, then $D \in \D$). In this case, Limited Monotonicity should be defined as: for all $D \subseteq \A_A$ such that $D \cup \{a_0\} \in \D$, $\rho$ satisfies $\rho(D, a) \ge \rho(D \cup \{a_0\} ,a)$ for all $a \in D$. 
\end{remark}

\begin{remark}\label{rem:generalru}
    In our notion of RU rationality, we assume that the composition of aggregates and the agent's preferences are independent. A natural extension is to allow the composition distribution to depend on the agent's preferences. That is, a composition distribution $\lambda_{\succ, A}$ is a distribution over ~$\prod_{a \in A}(2^{X(a)}\setminus \{\emptyset\})$ that may jointly depend on the available aggregates $A$ and the agent's preference $\succ 
    \in \L(\X)$. For example, we may allow the case where consumers with a preference for high quality goods tend to have a higher quality outside option.\footnote{We do not, however, allow the composition of the outside option to enter into the consumer's underlying choices. This type of dependence would lead to RUM violations in the underlying choice which  would lead to a model with no observable implications.}
    
    We say that $\rhoa$ is General RU rational if there exists $(\muX, \lambda_{\succ, A})$ such that $\rhoa(A,a)$ equals
    \begin{align}
\sum_{(S_b)_{b \in A}\in \underset{b\in A}{\prod} (2^{X(b)}\setminus \{\emptyset\})} \sum_{\succ \in \L(\X)} 1(\exists x \in S_a, \ \forall y \in \underset{b \in A \setminus \{a\}}{\cup} S_b, \ x \succ y)) \lambda_{\succ,A}((S_b)_{b \in A})\muX(\succ).
\end{align}
Since this model includes the independent case, RU rationality implies General RU rationality. Furthermore, it can be shown that General RU rationality implies both Limited Monotonicity and Partial RU rationality. Thus RU rationality and General RU rationality are equivalent.

\end{remark}

\begin{remark}\label{remark:threshold}

That the threshold equals $|\A_A|+1$ can be seen from how weak Limited Monotonicity is: it only implies $\rho(D,b)\ \ge\ \rho(D\cup\{a_0\},b)\quad\text{for all }b\in\A_A$, and therefore allows essentially arbitrary (and highly heterogeneous) substitution across atomic aggregates. For instance, it permits $\rho(D,a)=\rho(D\cup\{a_0\},a)$ for some $a$ (no substitution) while $\rho(D,b)\gg \rho(D\cup\{a_0\},b)$ for other $b$ (strong substitution). To rationalize such patterns with an underlying RUM, one needs, for each $b\in\A_A$, a distinct underlying alternative that can become available via $a_0$ and is ranked just above $b$, yielding $|\A_A|$ underlying alternatives. If $\rho(D,a)=\rho(D\cup\{a_0\},a)$ holds for some $a$, then $a_0$ must be worse than every atomic aggregate, which requires one additional underlying alternative ranked below all of $\A_A$. Hence the threshold is $|\A_A|+1$.
\end{remark}

\begin{remark}\label{rem:dom_nests}
    We assumed the full domain $\D  = 2^\A \setminus \{\emptyset\}$ for Theorem \ref{thm:nests}. However, the proof only requires that there exists a labeling $\A_A = \{y_1,...,y_{|\A|}\}$ such that for all $1 \le n \le |\A|$, the sets  $D_n = \{y_1,...,y_n\}$ and $D_n \cup \{a_0\}$, as well as any other $D \subset \A_A$ are in $\D$,.
\end{remark}

\begin{remark}\label{rem:simplexintuition}
In the remark, we provide intuition why Lemma \ref{lem:subpolys} holds. 
 Let $\A_A = \{x,y\}$ and $X(a_0) = \{z,w\}$. Let $x \ \succ_1 \  y \ \ \succ_1 \ a_0$ and $y \ \succ_2\  x  \ \succ_2 \ a_0$. Also, suppose that 
\begin{align*}
&\{x,a_0\} \notin \E_1,  \  \{y,a_0\} \in \E_1, \ \{x,y,a_0\} \in \E_1, \ \ \  
&\{x,a_0\} \notin \E_2, \ \{y,a_0\} \in \E_2,  \ \{x,y,a_0\} \notin \E_2.
\end{align*} For the choice on $D = \{x,y,a_0\}$ when we mix $\rho^{\succ_1}_{\E_1}$ and $\rho^{\succ_2}_{\E_1}$, $a_0$ is very good when the agent has preference $\succ_2$ but very bad when the agent has preference $\succ_1$. However, our model does not have the correlation between preferences and $\lambda$. Thus in order to rationalize this choice function (the mixture), in particular, the correlation between the preference relation and the value of the outside option, we need to construct preferences in a way that mimic the correlation pattern as follows. Each vertex gets a special underlying alternative $\overline{x}_i \in X(a_0)$ that plays the role of the outside option \textit{only for $\succ_i$}. That is, $\overline{x}_i$ is ranked at the top of $\succ_i$ when extended to $\X$. In particular, we can make $\overline{x}_i$ ranked very low in any ranking except $\succ_i$ (when extended to $\X$).   With this construction, $\lambda$ has a different effect on each $\succ_i$, which mimics the effect of correlation between $\muX$ and $\lambda$. The $-1$ comes from the fact that we need another underlying alternative in $X(a_0)$ at the bottom in that case where $a_0$ is never chosen.
\end{remark}

\begin{remark}\label{rem:dom_vertex}
Theorem \ref{thm:vertex} does not require the domain assumption that $a_0 \in D$ for all $D \in \D$. The theorem continues to hold with a slight modification of the definition of $c^{\succ}_{\E}$. The adjustment is needed because a set $D$ may fail to contain $a_0$ even when $D \notin \E$. The generalized definition is:
\[
c^{\succ}_{\E}(D)= \begin{cases}
\max_D \succ \ &\text{ if } a_0 \notin D \text{ or if } D \in \E, \\
a_0\ &\text { otherwise.}
\end{cases}
\]
This is the same as before, except that $c^{\succ}_{\E}$ is now defined for menus that do not contain $a_0$. With this modification, Theorem~\ref{thm:vertex} holds without change.
\end{remark}

\begin{remark}\label{rem:separable}
Some readers may think that assuming additive  separable  preferences over $\X$ may help to justify the use of ARUM. For example, \cite{ealesmeat1982} suggests that aggregates should be defined to satisfy separability. This is not the case, as separability does not address unknown composition in a discrete choice setting.

To see this, notice that the non-overlappingness of preferences is equivalent to the condition that any preferences can be represented by the following form of utility function. Assuming $\A=\{a_0,a_1\}$,  $\A_N=\{a_0\}$, $\X=\{x_1, x_2, y\}$ for simplicity, 
\[
w(x_1+x_2,y)
\]
where $X(a_0)=\{x_1,x_2\}$ and $X(a_1)=\{y\}$.  
To see why, refer to Theorem \ref{thm:non-overlapping}, which says that non-overlapping preferences are equivalent to preferences over aggregates. Since $x_1+x_2$ is exactly the quantity of the outside option consumed, $w$ is a utility function over aggregates. 

Thus, additive separability of preferences is not enough to justify ARUM as the general form,
\[
w(u(x_1,x_2), y) ,
\] 
where $w$ is linear, is much weaker than non-overlapping preferences. For example, let $w$ be the sum, and $u(x_1,x_2) = 2x_1+\frac{1}{2}x_2$. This induces the preference $x_1 \succ y \succ x_2$.\footnote{ We abuse notation and conflate the quantity with the good.} Suppose that alternative $x$ in a choice set $D_1$ means $x_1$; alternative $x$ in choice set $D_2$ means $x_2$. Then we observe a cycle that $x \succ y$, which is revealed from a choice from  $D_1$, and $y \succ x$, which is revealed from a choice from  $D_2$. Thus there is no linear order (or utility function) over  aggregates.
\end{remark}

\begin{remark}\label{rem:contrapositive}{Proposition \ref{thm:non-overlapping}} says that ARUM is equivalent to rationalization by non-overlapping preferences (even if the composition distribution is menu dependent) and Proposition \ref{thm:independent} says that ARUM is equivalent to menu independence (even if preferences are overlapping). These are two distinct, but not contradictory characterizations. Because because $(\mu_\X,\lambda)$ is not identified, it is possible for an ARUM is generated by overlapping preferences and/or menu-dependent composition distribution. The two propositions respectively say that there exists a non-overlapping rationalization and a menu-independent rationalization.
\end{remark}

\begin{remark}
To better understand the structure of the RU-rational set, in particular the non-convexity, note that  $\rho$ is bilinear but not linear in $(\muX,\lambda)$.

To see this, denote $\rho$ rationalized with $(\muX,\lambda)$ by $\rho_{(\muX,\lambda)}$. 
Fix a  composition correspondence $X$. With $\muX$ fixed, $\rho_{(\muX,\lambda)}$ is linear in $\lambda$. For any $\alpha \in [0,1]$, composition distributions $\lambda$, $\lambda'$ such that $\lambda'' = \alpha\lambda +(1-\alpha)\lambda'$, we have $\alpha\rho_{(\muX,\lambda)}+(1-\alpha)\rho_{(\muX,\lambda')} = \rho_{(\muX,\lambda'')}$. Similarly, with $\lambda$ fixed,  $\rho_{(\muX,\lambda)}$ is linear in $\muX$ in the sense that if $\alpha\muX+(1-\alpha) \muX' = \muX''$ then   $\alpha\rho_{(\muX,\lambda)}+(1-\alpha)\rho_{(\muX,\lambda)}= \rho_{(\muX'',\lambda)}$.

However, it is not linear because $\alpha\rho_{(\muX,\lambda)}+(1-\alpha)\rho_{(\muX',\lambda')}\neq \rho_{(\muX'',\lambda'')}$ in general. Note that this non-linearity of $\rho_{(\muX, \lambda)}$ does not directly implies  the non-convexity of the RU rational set, although if it were linear then the set would immediately be convex.

\end{remark}

\section{Supplementary material for simulation}\label{sec:additional_fig} 

\subsection{Additional Explanation for Section \ref{sec:simulation} }\label{sec:sim_intuition}

In this section, we provide further explanations of the figures  in Section \ref{sec:simulation}. 

\subsubsection{Figure \ref{fig:bias_pref_struc}}

The intuition is as follows. In this setup, $a_0$ is likely to be composed of $z$ in both $\{x,y,a_0\}$ and $\{x,a_0\}$, since $\lambda_{\{x,y,a_0\}}(\{z\})=\lambda_{\{x,a_0\}}(\{z\})$ is high, whereas $a_0$ is likely to be composed of $w$ in $\{y,a_0\}$, given that $\lambda_{\{y,a_0\}}(\{w\})$ is high. Consequently, when $u(w) > u(z)$, the relative attractiveness of $x$ in $\{x,a_0\}$ and $\{x,y,a_0\}$ is high, leading to the overestimation of $\hat{u}(x)$; and the relative attractiveness of $y$ in $\{y,a_0\}$ is low, leading to an underestimation of $\hat{u}(y)$ and thus a positive bias in the top-left corner. Conversely, when $u(z) > u(w)$, the relative attractiveness of $x$ in $\{x,y,a_0\}$ and $\{x,a_0\}$ is low, leading to an underestimation of $\hat{u}(x)$; and the relative attractiveness of $y$ in $\{y,a_0\}$ is high, leading to an overestimation of $\hat{u}(y)$. This results in a negative bias in the bottom-right corner.

\subsubsection{Figure \ref{fig:distance_lam}}

To understand the figure, recall that we consider three alternatives $x,y$, and $a_0$. Thus, it can be shown that  $\rho$ is ARUM if and only if it satisfies the standard monotonicity condition for all menus. Since $\rho$ is RU-rational, it satisfies Limited Monotonicity, that is monotonicity on $\{x,y\}$ with respect to adding $a_0$. Also, since $\lambda_{\{y,a_0\}} = \lambda_{\{x,y,a_0\}}$, monotonicity is satisfied on menu $\{y,a_0\}$ with respect to adding $x$. Thus the only possible ways to violate monotonicity are (i) $\rho(\{x,y,a_0\} ,a_0) > \rho(\{x,a_0\} ,a_0)$ and (ii) $\rho(\{x,y,a_0\} ,a_0) > \rho(\{y,a_0\} ,a_0)$. The first occurs when $a_0$ is less attractive on $\{x,a_0\}$ than on $\{x,y,a_0\}$  and the second occurs when $a_0$ is more attractive on $\{x,a_0\}$ than on $\{x,y,a_0\}$. 
Since $u(z)$ is high and $u(w)$ is low, the first case occurs when $a_0$ is more likely to be $\{w\}$ (e.g., $\lambda_{\{x,a_0\}}(\{w\})=1$); the second case occurs when $a_0$ is more likely to be $\{z\}$ or $\{z,w\}$ (e.g., $\lambda_{\{x,a_0\}}(\{w\})=0$). These are two cases where the distance is large in Figure
\ref{fig:distance_lam}.

\subsubsection{Figure \ref{fig:distance_pref_struc}}

The intuition behind the figure is as follows. As in the previous example, ARUM is characterized by monotonicity, so we can interpret the figure by identifying the parameters that generate the largest violations of monotonicity. Because $\lambda_{\{x,y,a_0\}} = \lambda_{\{x,a_0\}}$, monotonicity violations only occur when comparing $\{x,y,a_0\}$ and $\{y,a_0\}$.

When $u(z)$ is high and $u(w)$ is low, $a_0$ is relatively more attractive in $\{x,y,a_0\}$ than in $\{y,a_0\}$. This is because $a_0$ is more often composed of $z$ in $\{x,y,a_0\}$, whereas it is more often composed of $w$ in $\{y,a_0\}$. This implies $\rho(\{x,y,a_0\},a_0) > \rho(\{y,a_0\},a_0)$, producing a large violation of monotonicity in the bottom-right corner. By contrast, when $u(w)$ is high and $u(z)$ is low, $a_0$ becomes relatively less attractive in $\{x,y,a_0\}$ than in $\{y,a_0\}$. As a result, $y$ is more attractive in $\{y,a_0\}$, so that $\rho(\{x,y,a_0\},y) > \rho(\{y,a_0\},y)$—again a violation of monotonicity. However, because $y$ is much less attractive than $x$ (i.e., $u(x) > u(y)$), the size of this violation is considerably smaller than in the previous case. This explains why the distances in the top-left corner are smaller than those in the bottom-right.

In the following, we provide the results of simulations using different parameter values.

\subsection{Estimation bias}

\subsubsection{Effect of composition distributions $\lambda$}

 In this section, we replicate the simulation in Subsection~\ref{subsec:effect_lambda_bias} using different parameter values. Overall, the results are qualitatively similar: biases are smaller in the menu-independent cases, and as we move away from independence, the biases tend to increase.

There are, however, some exceptions. Consider the two heatmaps of biases across $\lambda_{\{x,y,a_0\}}$. In these heatmaps, the pattern differs slightly. The blue cell (the independent case) does not necessarily correspond to the smallest bias, and cells farther from independence do not always correspond to larger biases. This finding can be explained by the fact that these two heatmaps are generated by varying the values of $\lambda_{\{x,y,a_0\}}$, which in turn changes the meaning of $a_0$ in the choice set $\{x,y,a_0\}$. In this choice set, both $x$ and $y$ are present, so changes in the interpretation of $a_0$ affect the relative desirability of $x$ and $y$ symmetrically. As a result, such changes do not substantially affect the bias, which is defined as the difference in utilities between $x$ and $y$.

By contrast, when we vary the values of $\lambda_{\{x,a_0\}}$ or $\lambda_{\{y,a_0\}}$, the effect is asymmetric: changes in $\lambda_{\{x,a_0\}}$ affect only $x$, while changes in $\lambda_{\{y,a_0\}}$ affect only $y$. These changes directly influence the relative evaluation of $x$ and $y$, thereby producing larger estimation biases, as expected.

\begin{figure}[H]
    \centering

    \begin{subfigure}[t]{0.45\textwidth}
        \centering
        \includegraphics[width=\linewidth]{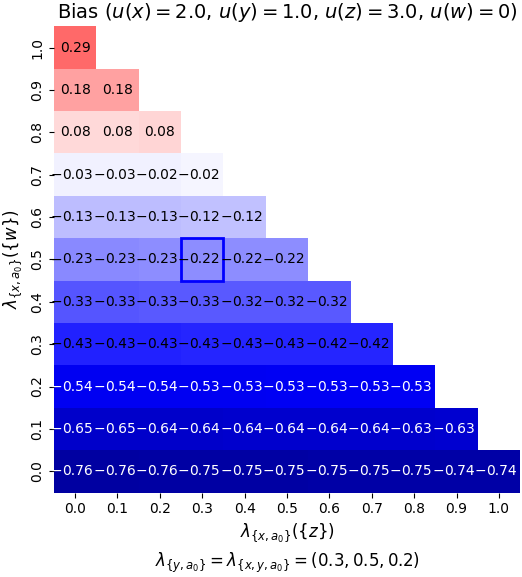}
        \caption{Heatmap of Biases across $\lambda_{\{x,a_0\}}$.}
    \end{subfigure}
\end{figure}

\begin{figure}[H]
    \centering
    \begin{subfigure}[t]{0.45\textwidth}
        \centering
        \includegraphics[width=\linewidth]{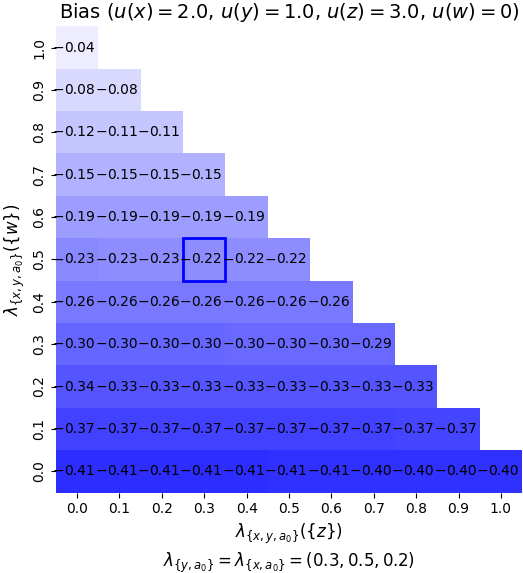}
        \caption*{Heatmap of Biases across $\lambda_{\{x, y, a_0\}}$.}
    \end{subfigure}
    \hfill
    \begin{subfigure}[t]{0.445\textwidth}
        \centering
        \includegraphics[width=\linewidth]{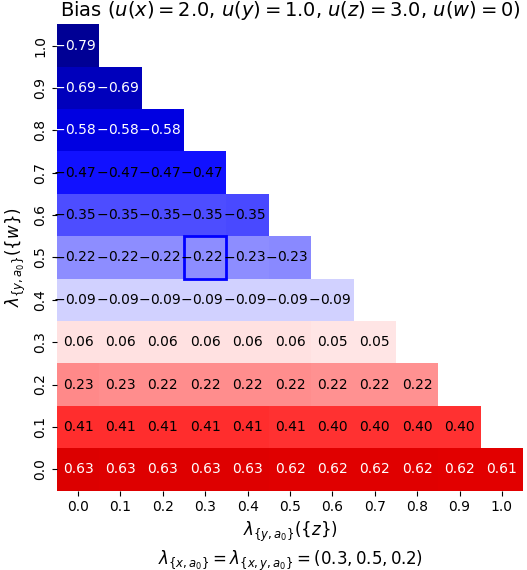}
        \caption*{Heatmap of Biases across $\lambda_{\{y,a_0\}}$.}
    \end{subfigure}

\end{figure}
\begin{figure}[H]
    \centering

    \begin{subfigure}[t]{0.45\textwidth}
        \centering
        \includegraphics[width=\linewidth]{figures/flippedlambdabias.png}
        \caption{Heatmap of Biases across $\lambda_{\{x,a_0\}}$.}
    \end{subfigure}

\end{figure}

\begin{figure}[H]
    \centering
    
    \begin{subfigure}[t]{0.45\textwidth}
        \centering
        \includegraphics[width=\linewidth]{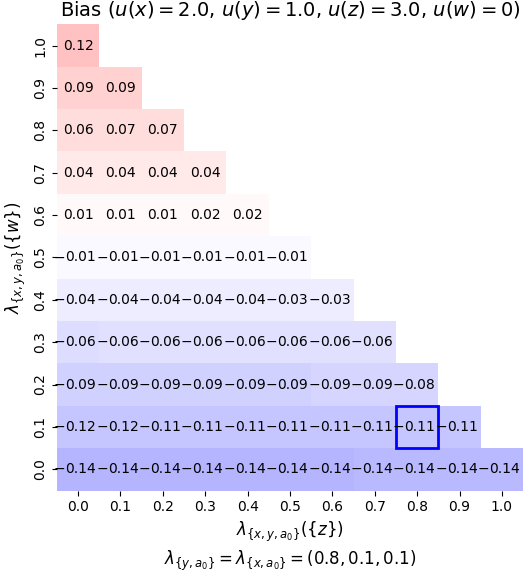}
        \caption*{Heatmap of Biases across $\lambda_{\{x, y, a_0\}}$.}
    \end{subfigure}
    \hfill
    \begin{subfigure}[t]{0.5\textwidth}
        \centering
        \includegraphics[width=\linewidth]{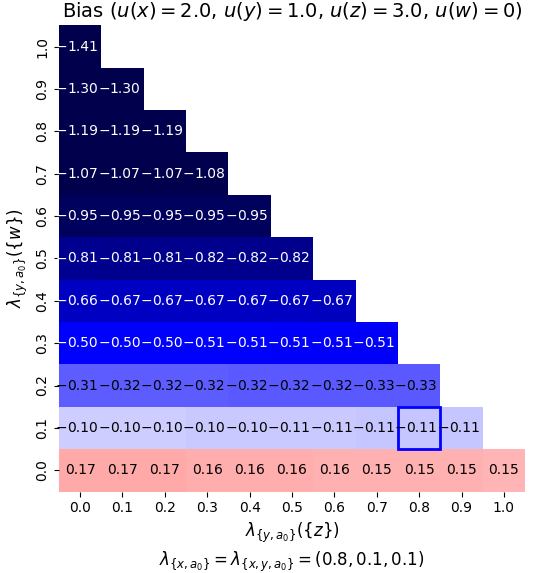}
        \caption*{Heatmap of Biases across $\lambda_{\{y,a_0\}}$.}
    \end{subfigure}

\end{figure}

\subsubsection{Maximum/Minimum, and independent case}\label{subsection:additonal_fig_max_min}

In this section, we replicate the simulation in Subsection \ref{subsection:max_min_biases} under different parameter values. The setup is the same as in Figure \ref{fig:max_min_ind_bias}, except that we use alternative utility levels and different axes. The overall patterns are consistent with those reported in the main body of the paper.

\begin{figure}[H]
    \centering
    \begin{subfigure}[t]{0.48\linewidth}
        \centering
        \includegraphics[width=\linewidth]{figures/flippedminmax1.png}
        
    \end{subfigure}%
    \hfill
    \begin{subfigure}[t]{0.48\linewidth}
        \centering
        \includegraphics[width=\linewidth]{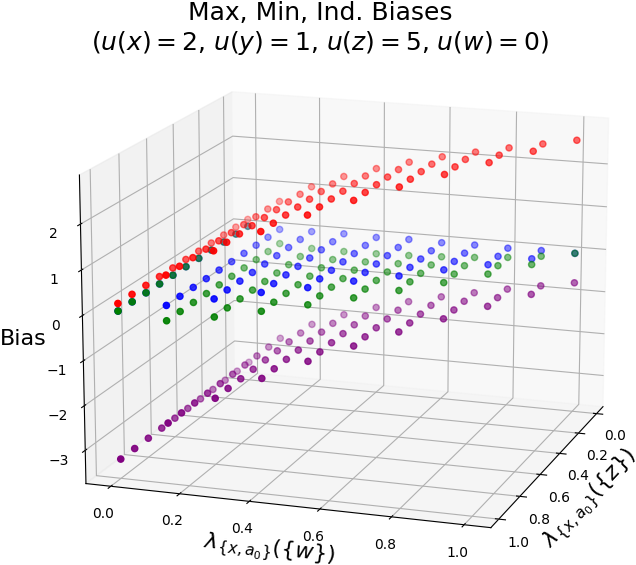}
        
    \end{subfigure}
    \caption*{}

\end{figure}
\begin{figure}[H]
    \centering
    \begin{subfigure}[t]{0.50\linewidth}
        \centering
        \includegraphics[width=\linewidth]{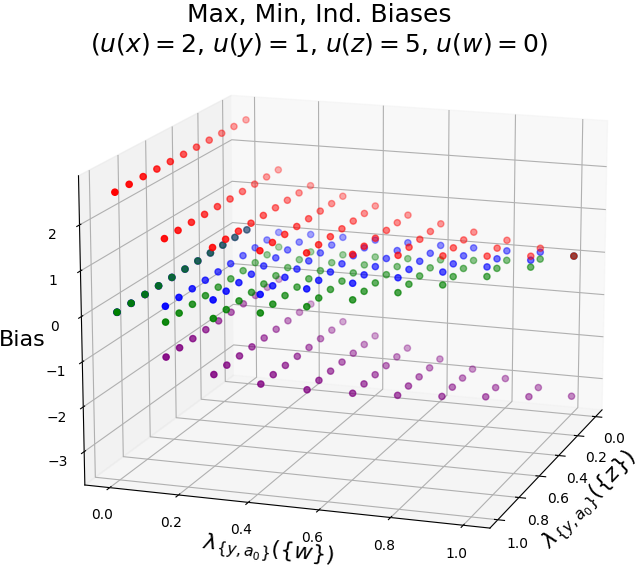}
        
    \end{subfigure}%
    \hfill
    \begin{subfigure}[t]{0.50\linewidth}
        \centering
        \includegraphics[width=\linewidth]{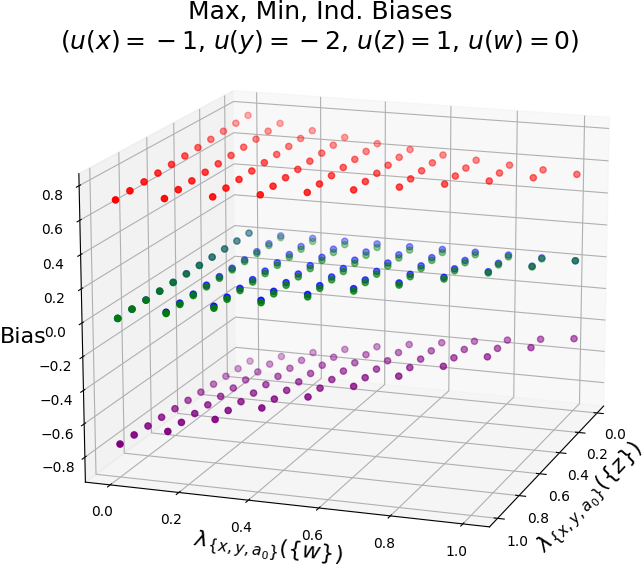}
        
    \end{subfigure}

\end{figure}
\begin{figure}[H]
    \centering
    \begin{subfigure}[t]{0.48\linewidth}
        \centering
        \includegraphics[width=\linewidth]{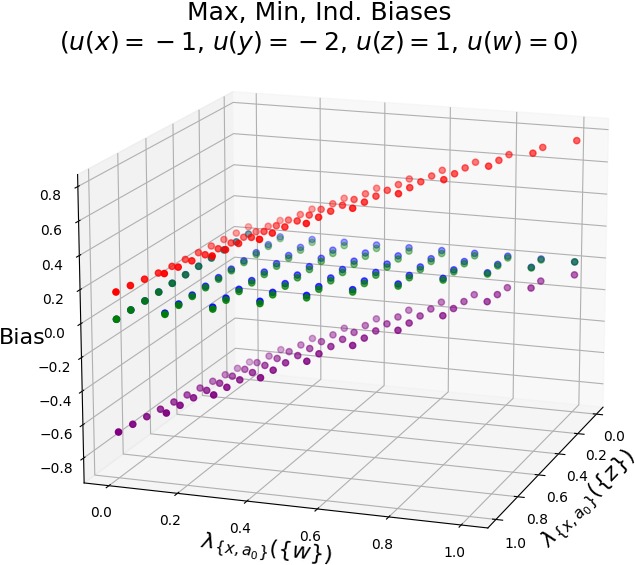}
        
    \end{subfigure}%
    \hfill
    \begin{subfigure}[t]{0.48\linewidth}
        \centering
        \includegraphics[width=\linewidth]{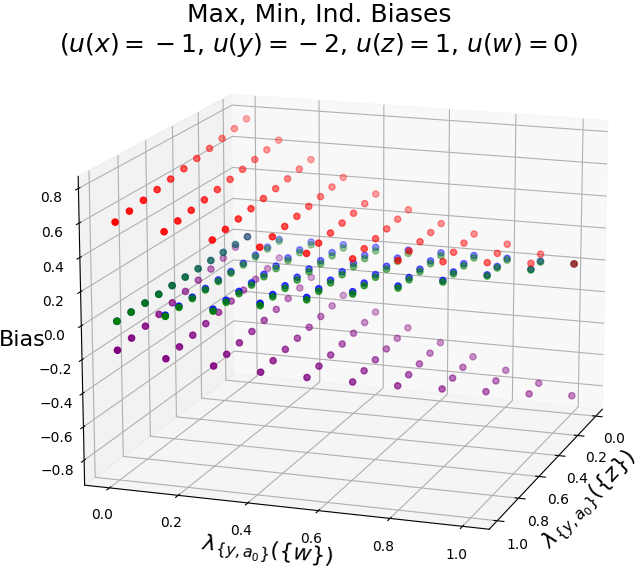}
        
    \end{subfigure}
    \caption*{}

\end{figure}

\clearpage
\subsection{Effect of preference structure on bias}

In this section, we repeat the simulation in Subsection \ref{subsec:effect_pref_bias} with different parameter values. The setup is the same as in Figure \ref{fig:bias_pref_struc} but with different values of $\lambda$.

The figures show the biases across utility values of $z$ and $w$. In general, the bias is largest when overlappingness is most likely (the top left and bottom right corners of each square) and smallest when overlappingness is less likely (the top right and bottom left corners of each square), consistent with Conjecture  (B) and Figure \ref{fig:bias_pref_struc}.

\begin{figure}[H]
    \centering
    \begin{subfigure}[t]{0.48\linewidth}
        \centering
        \includegraphics[width=\linewidth]{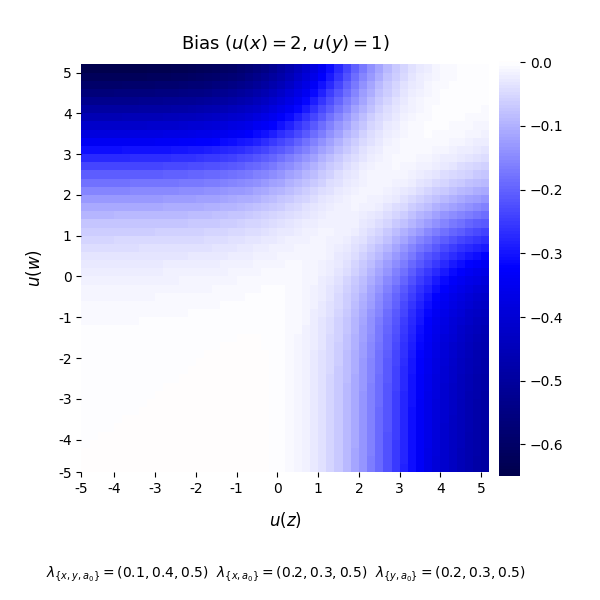}
        
    \end{subfigure}%
    \hfill
    \begin{subfigure}[t]{0.48\linewidth}
        \centering
        \includegraphics[width=\linewidth]{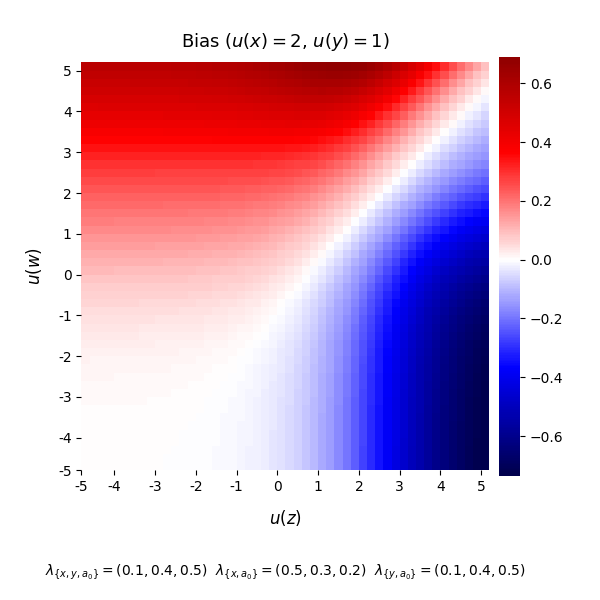}
        
    \end{subfigure}
    \caption*{}
    
\end{figure}
\begin{figure}[H]
    \centering
    \begin{subfigure}[t]{0.475\linewidth}
        \centering
        \includegraphics[width=\linewidth]{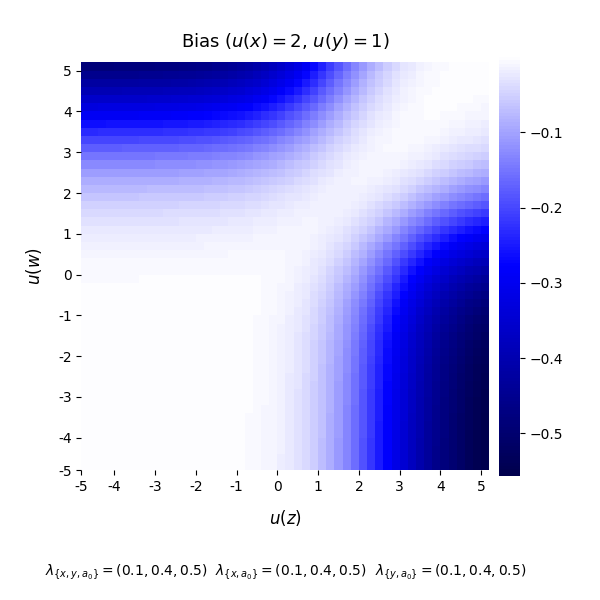}
        
    \end{subfigure}%
    \hfill
    \begin{subfigure}[t]{0.47\linewidth}
        \centering
        \includegraphics[width=\linewidth]{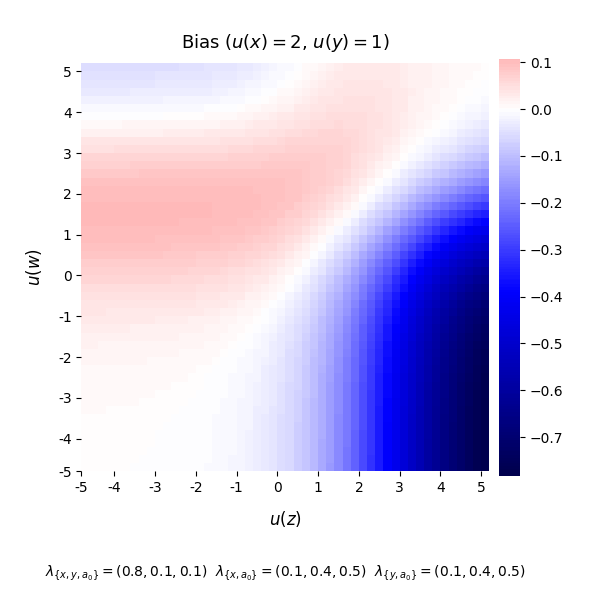}
        
    \end{subfigure}
    
\end{figure}

\newpage

\subsection{Distance to the ARU polytope}

\subsubsection{Effect of composition distribution $\lambda$ on distance}

In this section, we repeat the simulation in Subsection \ref{sec:ditance_lambda} with different parameter values. The setup is the same as in Figure \ref{fig:distance_lam} but with different values of $\lambda$. 

The figures show the distance across values of $\lambda$. The resutls are the same as in  Subsection \ref{sec:ditance_lambda}:  consistent with Proposition \ref{thm:independent}, the distance is zero in the independent case (blue-outlined cell). As $\lambda$ moves farther from the menu independence, the distance increases, confirming Conjecture  (B).
\begin{figure}[H]
    \centering

    \begin{subfigure}[t]{0.45\textwidth}
        \centering
        \includegraphics[width=\linewidth]{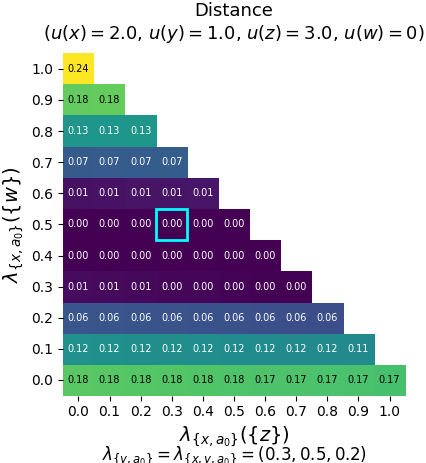}
        \caption{Heatmap of Distance across $\lambda_{\{x,a_0\}}$.}
    \end{subfigure}

\end{figure}    

\begin{figure}[H]
    \centering
    
    \begin{subfigure}[t]{0.45\textwidth}
        \centering
        \includegraphics[width=\linewidth]{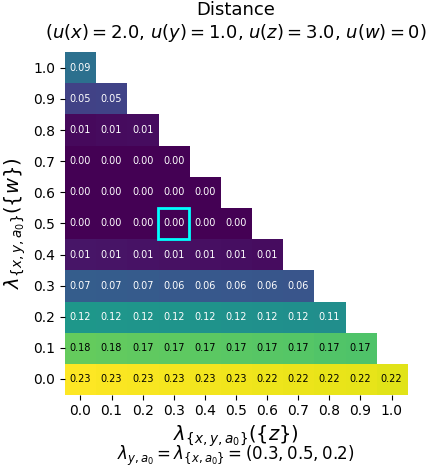}
        \caption*{Heatmap of Distance across $\lambda_{\{x, y, a_0\}}$.}
    \end{subfigure}
    \hfill
    \begin{subfigure}[t]{0.45\textwidth}
        \centering
        \includegraphics[width=\linewidth]{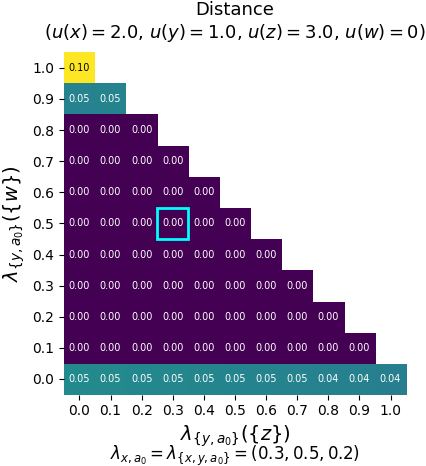}
        \caption*{Heatmap of Distance across $\lambda_{\{y,a_0\}}$.}
    \end{subfigure}
\end{figure}

\begin{figure}[H]
    \centering
    \begin{subfigure}[t]{0.45\textwidth}
        \centering
        \includegraphics[width=\linewidth]{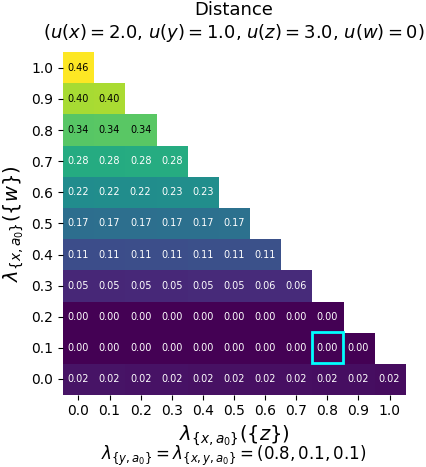}
        \caption{Heatmap of Distance across $\lambda_{\{x,a_0\}}$.}
    \end{subfigure}
\end{figure}

\begin{figure}[H]
    \centering

    \begin{subfigure}[t]{0.45\textwidth}
        \centering
        \includegraphics[width=\linewidth]{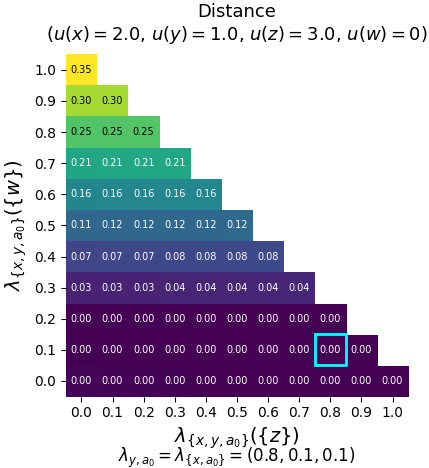}
        \caption*{Heatmap of Distance across $\lambda_{\{x, y, a_0\}}$.}
    \end{subfigure}
    \hfill
    \begin{subfigure}[t]{0.45\textwidth}
        \centering
        \includegraphics[width=\linewidth]{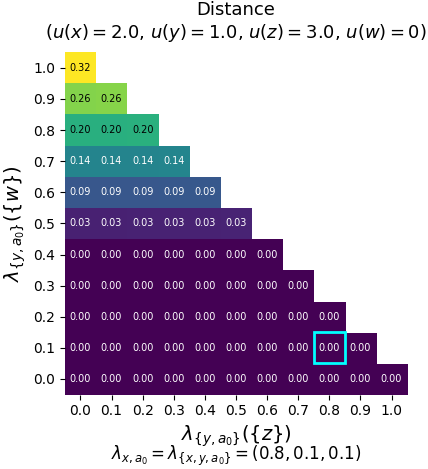}
        \caption*{Heatmap of Distance across $\lambda_{\{y,a_0\}}$.}
    \end{subfigure}
\end{figure}

\subsubsection{Effect of preference structure on distance}
    Finally we repeat the simulation in Section \ref{subsec:effect_pref_distance}. These four figures are the same as Figure \ref{fig:distance_pref_struc} with different values of $\lambda$ --- they show the distance from the ARUM polytope across across utility values of $z$ and $w$. 
    
    In general, the distance is largest when overlappingness is most likely (the top left and bottom right corners of each square) and smallest when overlappingness is less likely (the top right and bottom left corners of each square), consistent with Conjecture  (B) and the results obtained in the body of the paper.
    
    In the first figure ($\lambda_{\{x,y,a_0\}}=\lambda_{\{y,a_0\}}= (0.1,0.4,0.5) $ and $\lambda_{\{x,a_0\}} = (0.2,0.3,0.5)$ and the third figure ($\lambda_{\{x,y,a_0\}}= \lambda_{\{x,a_0\}}= \lambda_{\{y,a_0\}} = (0.1,0.4,0.5)$), the distances are close to zero since $\lambda$ is either menu-independent as in the third figure or close to menu-independent as in the first figure. In particular, when $\lambda$ is menu-independent the distance is always zero which is consistent with Proposition \ref{thm:independent}.

\begin{figure}[H]
    \centering
    \begin{subfigure}[t]{0.48\linewidth}
        \centering
        \includegraphics[width=\linewidth]{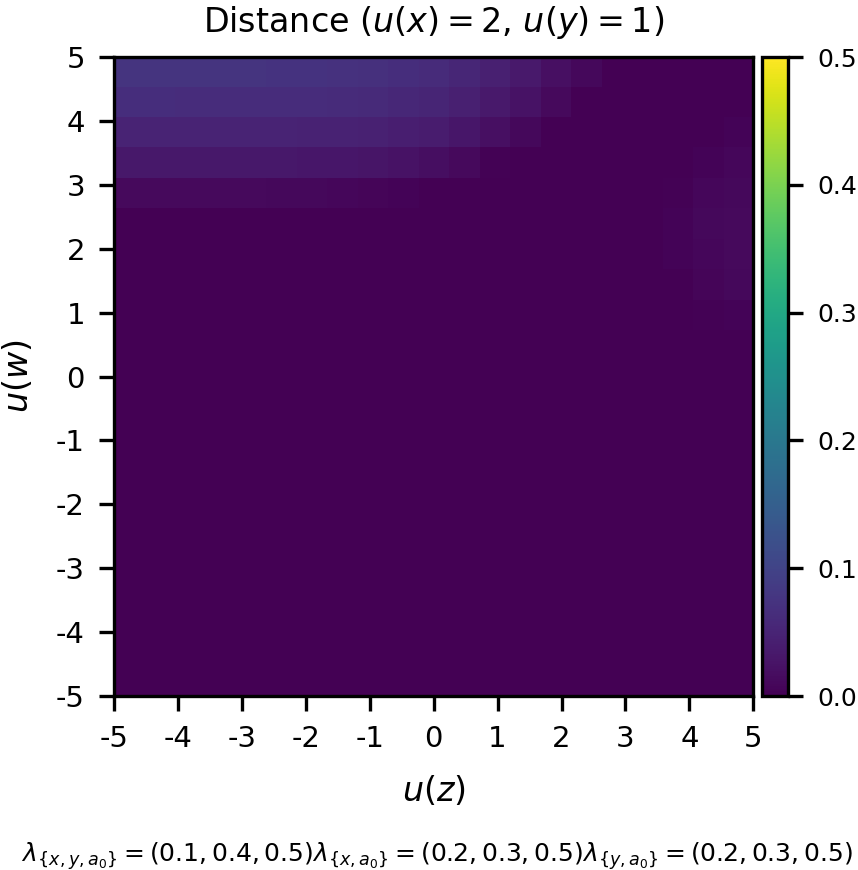}
        
    \end{subfigure}%
    \hfill
    \begin{subfigure}[t]{0.48\linewidth}
        \centering
        \includegraphics[width=\linewidth]{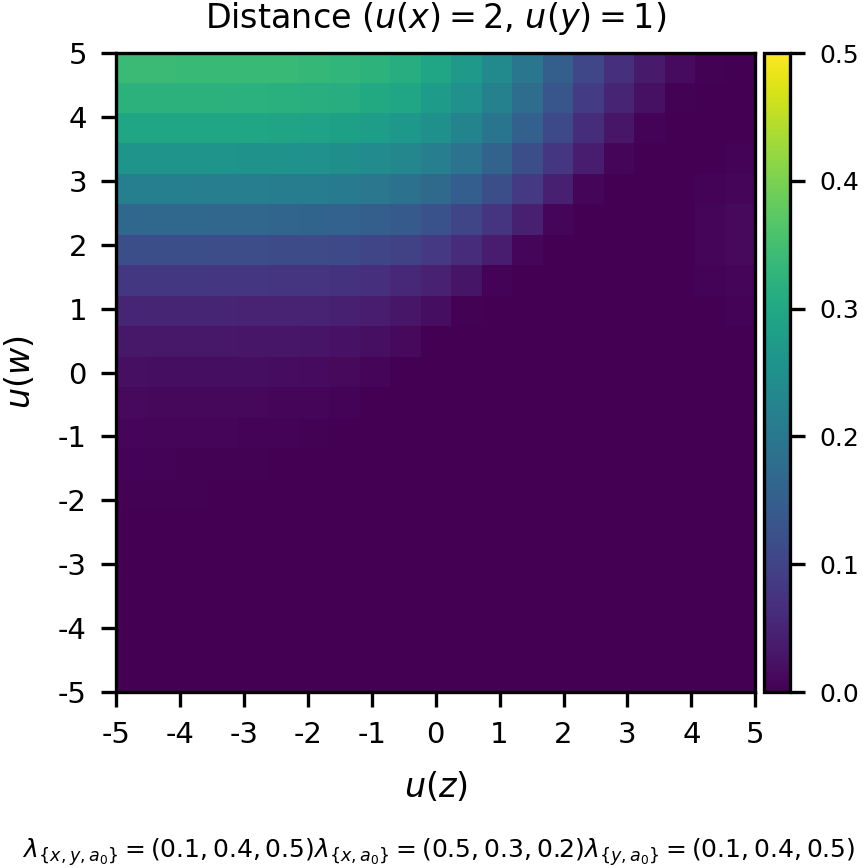}
        
    \end{subfigure}
    \caption*{}

\end{figure}
\begin{figure}[H]
    \centering
    \begin{subfigure}[t]{0.48\linewidth}
        \centering
        \includegraphics[width=\linewidth]{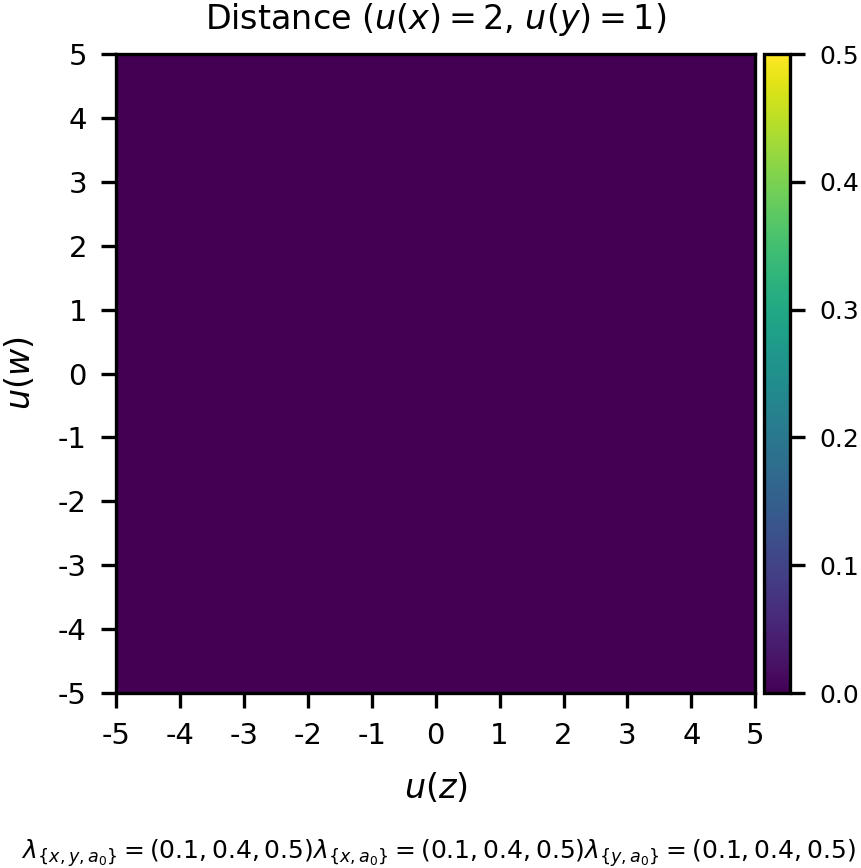}
        
    \end{subfigure}%
    \hfill
    \begin{subfigure}[t]{0.48\linewidth}
        \centering
        \includegraphics[width=\linewidth]{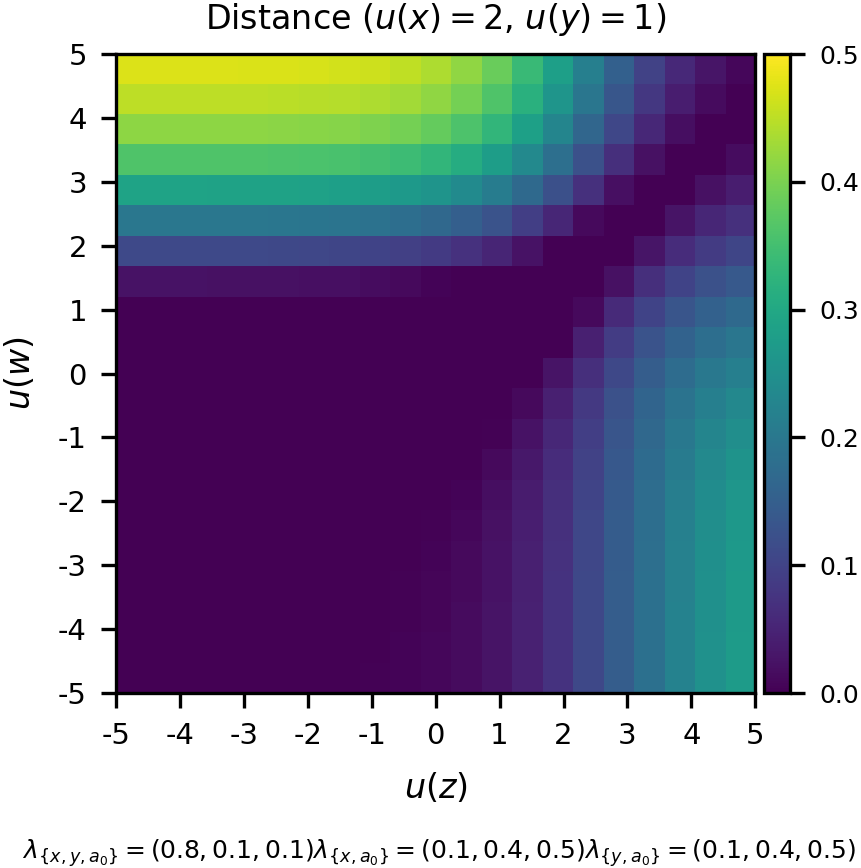}
        
    \end{subfigure}

\end{figure}

\end{document}